\newtheorem{theo}{Theorem}[section]
\newtheorem{prop}[theo]{Proposition}
\newtheorem{cor}[theo]{Corollary}
\newtheorem{lemma}[theo]{Lemma}
\theoremstyle{definition}
\newtheorem{defi}[theo]{Definition}
\newtheorem{exa}[theo]{Example}
\newtheorem{rem}[theo]{Remark}
\newtheorem{question}[theo]{Question}
\newtheorem{problem}[theo]{Problem}
\numberwithin{equation}{section}
\newcommand{\N}{{\mathbb N}}
\newcommand{\F}{{\mathbb F}}
\newcommand{\Z}{{\mathbb Z}}
\newcommand{\Q}{{\mathbb Q}}
\newcommand{\C}{{\mathbb C}}
\newcommand{\cA}{{\mathcal A}}
\newcommand{\cC}{{\mathcal C}}
\newcommand{\cG}{{\mathcal G}}
\newcommand{\cH}{{\mathcal H}}
\newcommand{\cP}{{\mathcal P}}
\newcommand{\wcP}{\widehat{\mathcal P}}
\newcommand{\wwcP}{\widehat{\phantom{\big|}\hspace*{.5em}}\hspace*{-.9em}\wcP}
\newcommand{\wcPchil}{\wcP^{^{\scriptscriptstyle[\chi,l]}}}
\newcommand{\wcPchir}{\wcP^{^{\scriptscriptstyle[\chi,r]}}}
\newcommand{\cQ}{{\mathcal Q}}
\newcommand{\cB}{{\mathcal B}}
\newcommand{\cL}{{\mathcal L}}
\newcommand{\cR}{{\mathcal R}}
\newcommand{\cN}{{\mathcal N}}
\newcommand{\cU}{{\mathcal U}}
\newcommand{\bP}{{\mathbf P}}
\newcommand{\bH}{{\mathbf H}}
\newcommand{\GL}{\mbox{\rm GL}}
\newcommand{\widesim}[1][1.5]{\scalebox{#1}[1]{$\sim$}}
\newcommand{\rank}{\mbox{\rm rank}\,}
\renewcommand{\Vec}{\mbox{\rm vec}}
\newcommand{\im}{\mbox{\rm rowspan}\,}
\newcommand{\wt}{{\rm wt}}
\newcommand{\wtRT}{{\rm wt}_{\rm RT}}
\newcommand{\wtRK}{{\rm wt}_{\rm rk}}
\newcommand{\wtRKq}{{\rm wt}_{{\rm rk},\,q}}
\newcommand{\wtNRT}{{\rm wt}_{\rm NRT}}
\newcommand{\wtH}{{\rm wt}_{\rm H}}
\newcommand{\LT}{\mbox{\rm LT}}
\newcommand{\T}{\mbox{$\!^{\sf T}$}}
\newcommand{\inner}[1]{\mbox{$\langle{#1}\rangle$}}
\newcommand{\Mon}{\mathrm{Mon}}
\newcommand{\MonF}{\mathrm{Mon}(n,\F)}
\newcommand{\MonR}{\mathrm{Mon}(n,R)}
\newcommand{\MonUR}{\mathrm{Mon}_U(n,R)}
\newcommand{\glnr}{\mathrm{GL}(n,R)}
\newcommand{\supp}{\mathrm{supp}}
\newcommand{\ltnr}{\mathrm{LT}(n,R)}
\newcommand{\ideal}[1]{\mbox{$\langle{#1}]$}}
\newcommand{\Smallfourmat}[4]{\mbox{$\left(\begin{smallmatrix}{#1}&{#2}\\{#3}&{#4}\end{smallmatrix}\right)$}}
\newcommand{\Smallsixmat}[6]{\mbox{$\left(\begin{smallmatrix}{#1}&{#2}&{#3}\\{#4}&{#5}&{#6}\end{smallmatrix}\right)$}}
\newcounter{alp}
\newcounter{ara}
\newcounter{rom}
\newenvironment{romanlist}{\begin{list}{(\roman{rom})\hfill}{\usecounter{rom}
     \topsep0ex \labelwidth.7cm \leftmargin.7cm \labelsep0cm
     \rightmargin0cm \parsep0ex \itemsep.4ex
     \partopsep1ex}}{\end{list}}
\newenvironment{alphalist}{\begin{list}{(\alph{alp})\hfill}{\usecounter{alp}
     \topsep.5ex \labelwidth.6cm \leftmargin.6cm \labelsep0cm
     \rightmargin0cm \parsep0ex \itemsep0ex
     \partopsep1.6ex}}{\end{list}}
\newenvironment{arabiclist}{\begin{list}{(\arabic{ara})\hfill}{\usecounter{ara}
     \topsep0ex \labelwidth.6cm \leftmargin.6cm \labelsep0cm
     \rightmargin0cm \parsep0ex \itemsep0ex
     \partopsep1.6ex}}{\end{list}}
\begin{document}

\title{MacWilliams Extension Theorems and the Local-Global Property for Codes over Frobenius Rings}
\date\today
\author{Aleams Barra\thanks{Faculty of Mathematics and Natural Sciences, Bandung Institute of Technology,
Bandung 40132, Indonesia (email: barra@math.itb.ac.id). A. Barra acknowledges support by the ICF-1.321 International Conference Grant from IMHERE ITB.}\quad and
Heide Gluesing-Luerssen\thanks{Department of Mathematics, University of Kentucky, Lexington KY 40506-0027, USA
(email: heide.gl@uky.edu). H. Gluesing-Luerssen was partially supported by the National Science Foundation Grants
DMS-0908379 and DMS-1210061.}
}

\maketitle


{\bf Abstract:}
The MacWilliams extension theorem is investigated for various weight functions over finite Frobenius rings.
The problem is reformulated in terms of a local-global property for subgroups of the general linear group.
Among other things, it is shown that the extension theorem holds true for poset weights if and only if the underlying poset
is hierarchical.
Specifically, the Rosenbloom-Tsfasman weight for vector codes satisfies the extension theorem, whereas
the Niederreiter-Rosenbloom-Tsfasman weight for matrix codes does not.
A short character-theoretic proof of the well-known MacWilliams extension theorem for the homogeneous weight is provided.
Moreover it is shown that the extension theorem carries over to direct products of weights, but not to symmetrized products.

\smallskip
{\bf Keywords:}
MacWilliams extension theorem, partitions, poset structures

\smallskip
{\bf MSC (2010):} 94B05, 94B99, 16L60

\section{Introduction}\label{SS-Intro}

In her thesis~\cite{MacW62}, MacWilliams showed that every Hamming weight-preserving linear isomorphism~$f:\cC\longrightarrow\cC'$
between codes in~$\F^n$, where~$\F$ is a finite field,
is a monomial map; that is,~$f$ is given by a permutation and a rescaling of the codeword coordinates.
Since this is equivalent to saying that~$f$ can be extended to a Hamming weight-preserving isomorphism on~$\F^n$, this result is referred to as
the MacWilliams extension theorem.

The theorem can be reformulated in the following way.
Any two vectors~$x,y\in\F^n$ have the same Hamming weight if and only if $y=xM$ for some monomial matrix~$M$ over~$\F$.
As a consequence, a linear map $f:\cC\longrightarrow\cC'$ preserves the Hamming weight if and only if for every $x\in\cC$ there exists
a monomial matrix~$M_x$ such that $f(x)=xM_x$.
The MacWilliams extension theorem then says that every such \emph{local} (or \emph{pointwise}) monomial map extends to a global monomial map in the sense
that there exists a global monomial matrix $M$ such that $f(x)=xM$ for all $x\in\cC$.

In this paper we will investigate, for a finite ring~$R$, which classes of left $R$-linear maps between codes in the left $R$-module~$R^n$
satisfy the MacWilliams extension theorem, and which subgroups~$\cU$ of $\glnr$ satisfy the local-global property in the sense that every pointwise
$\cU$-map, defined on some code in~$R^n$, is a global $\cU$-map.
Note that this implies that the map can be extended to an isomorphism on $R^n$ induced by some matrix $M\in\cU$.
Of course, we are interested in subgroups that preserve certain properties just like the monomial matrices preserve the Hamming weight.

The question whether certain classes of maps satisfy the extension property has been vastly studied in the literature.
Beginning with~\cite{Gol80}, Goldberg proved that for any subgroup~$U$ of~$\F^\times$ every local $U$-monomial map is a global $U$-monomial map.
In the same paper the author also posed exactly the questions that we have raised above: which subgroups of $\GL(n,\F)$ satisfy the local-global
property and what are the ``weight functions'' preserved by the associated maps?

In~\cite{Wo97,Wo99} Wood generalized the classical MacWilliams extension theorem as well as Goldberg's result to Frobenius rings.
We will recover these results later in Theorem~\ref{T-MonUR} and Theorem~\ref{T-MacWExtHamm}.
By providing several examples, we will see that, not surprisingly, the Frobenius property of the underlying ring is an indispensable
requirement for the local-global property in all cases under investigation.
For instance, we will see that the general linear group has the local-global property if~$R$ is Frobenius, but may lack this property for
non-Frobenius rings (see also the paragraph after Example~\ref{E-NonFrobExa} for a brief comment on QF rings).
This observation is in line with earlier results by Dinh/L\'{o}pez-Permouth~\cite{DiLP04a} and Wood~\cite{Wo08} which show that the MacWilliams
extension theorem for the Hamming weight holds true if and only if the underlying ring is Frobenius.

Recently, a different approach to the MacWilliams extension theorem has been undertaken by Greferath et al.~\cite{GMZ13}.
They study weight functions on rings~$R$ that are finite products of finite chain rings and where the left and right symmetry group of the weight
are both given by~$R^\times$.
They provide a characterization of those weights, for which every linear map between codes in~$R^n$ that preserves
the additively extended weight, extends to a monomial transformation on~$R^n$.

It is interesting to note that the rank weight, prevalent in random network coding, does not satisfy the MacWilliams
extension theorem.
We will provide a simple example in the next section.

After revisiting the key ingredients for our purposes, namely Frobenius rings, partitions, and their character-theoretic duals, we will establish the
local-global property of various subgroups in Section~\ref{SS-LocGlob}.
This will allow us to discuss the MacWilliams extension theorem for several weight-preserving maps in Section~\ref{SS-Goldberg}.
We will also provide a short proof of the well-known extension theorem for maps preserving the homogeneous weight.

The final section is devoted to a discussion of poset weights as introduced by Niederreiter~\cite{Nie91}, Rosenbloom/Tsfasman~\cite{RoTs97},
and Brualdi et al.~\cite{BGL95}.
After the discovery of the fruitfulness of such weight functions in coding theory, see for instance~\cite{Skr07} and~\cite{BFSF13} and the references therein,
poset structures for codes over fields have gained a lot of attention.
For instance, MacWilliams identity theorems have been established for various settings in Dougherty/Skriganov~\cite{DoSk02}, Kim/Oh~\cite{KiOh05},
Pineiro/Firer~\cite{PiFi12} and, for additive codes, in~\cite{GL13pos}.
The isometry group~of $\F^n$ has been derived by Lee~\cite{Lee03} for the special case of the Niederreiter-Rosenbloom-Tsfasman weight
and by Panek et al.~\cite{PFKH08} for general poset weights.
These results have been utilized by Barg et al.~\cite{BFSF13} in order to study duality of association schemes that arise from such isometry groups.
They also investigate the extension problem for order-preserving bijections between poset ideals.

In this paper we will show that the MacWilliams extension theorem holds true for poset weight-preserving maps over Frobenius rings if and only if the
poset is hierarchical.
In the latter case, this will also lead to the isometry group of the ambient space~$R^n$.
This generalizes the aforementioned result by Panek et al.~\cite{PFKH08} to Frobenius rings in the case where the poset is hierarchical.

\section{Motivation and Examples}\label{SS-Motiv}
In this section we specify the notions discussed in the introduction, most importantly the local-global property, and present some
examples where this property fails.
Later in Section~\ref{SS-LocGlob} we will be in the position to prove the
local-global property for various situations.

Throughout this section let~$R$ be a finite ring with unity.
Its group of units is denoted by $R^\times$, and the general linear group of order~$n$ over~$R$ is defined as
$\GL(n,R):=\{A\in R^{n\times n}\mid \exists\, B\in R^{n\times n}: AB=I\}
  =\{A\in R^{n\times n}\mid \exists\, B\in R^{n\times n}: AB=BA=I\}$.

We consider~$R^n$ as a left~$R$-module consisting of row vectors.
Submodules of~$R^n$ are meant to be left $R$-submodules and linear maps are left $R$-linear maps.
Occasionally, we will also need~$R^n$ as a right $R$-module, in which case this will always be made clear.

Recall that a \emph{monomial matrix} is a square matrix where each column and row has exactly one non-zero entry
and these non-zero elements are units.

\begin{defi}\label{D-MonMat}
For any subgroup~$U$ of the multiplicative group~$R^\times$ denote by $\Mon_U(n,R)$ the group of monomial $(n\times n)$-matrices over~$R$
whose nonzero elements are in~$U$.
We write $\Mon(n,R)$ for the group $\Mon_{R^\times}(n,R)$.
\end{defi}

As explained in the introduction, a linear map $f:\cC\longrightarrow \cC'$ is Hamming weight-preserving if and only if for all $x\in\cC$ there
exists a monomial matrix $M_x\in\MonF$ such that $f(x)=xM_x$.
The MacWilliams extension theorem states that there exists a global $M\in\MonF$ such that $f(x)=xM$ for all $x\in\cC$.
This motivates the following definition.

\begin{defi}\label{D-LocGlobMap}
Let $\cC\subseteq R^n$ be a code (i.e., a left $R$-submodule of~$R^n$), and let $\cU\leq\GL(n,R)$ be a subgroup of the general linear group over~$R$.
A linear map $f:\cC\longrightarrow R^n$ is called a \emph{local $\cU$-map} if for every $x\in\cC$ there exists a matrix $M_x\in\cU$ such
that $f(x)=xM_x$.
The map~$f$ is called a \emph{global $\cU$-map} if there exists a universal matrix $M\in\cU$ such that $f(x)=xM$ for all $x\in\cC$.
We say that the group~$\cU$ has the \emph{local-global property} if every local $\cU$-map $f:\cC\longrightarrow R^n$, where $\cC\subseteq R^n$ is any code,
is a global $\cU$-map.
Expressed differently, the local~$\cU$-maps \emph{satisfy the MacWilliams extension theorem}.
\end{defi}

By definition every local $\cU$-map is linear and injective, and hence an isomorphism onto its image.
Its inverse is also a local $\cU$-map.
Evidently, a group~$\cU$ has the local-global property if and only if every local $\cU$-map $f:\cC\longrightarrow R^n$ can be
extended to an isomorphism $\hat{f}: R^n\longrightarrow R^n$ induced by some $U\in\cU$.
This justifies the terminology above referring to the MacWilliams extension theorem.

Cast in the above language, MacWilliams' classical extension theorem states that the group $\MonF$ has the local-global property.
A generalization of this result was presented by Goldberg.

\begin{theo}[\mbox{\cite[p.~367]{Gol80}}]\label{T-Goldberg}
Let~$U$ be a subgroup of the multiplicative group~$\F^\times$.
Then~$\mbox{\rm Mon}_U(n,\F)$ satisfies the local-global property.
\end{theo}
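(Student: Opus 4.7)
The plan is to prove the theorem via a character-theoretic argument analogous to Wood's proof of the classical MacWilliams extension theorem.

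The subgroup $U\leq\F^\times$ induces a partition $\cP$ of $\F$ whose blocks are $\{0\}$ together with the cosets of $U$ in $\F^\times$, and two vectors in $\F^n$ belong to the same $\Mon_U(n,\F)$-orbit if and only if they share the same ``$\cP$-composition'', i.e., for every block $C\in\cP$ they have the same number of coordinates lying in $C$. Consequently, a linear map $f:\cC\to\F^n$ is a local $\Mon_U$-map precisely when $f(x)$ and $x$ have identical $\cP$-composition for every $x\in\cC$. I would fix a nontrivial character $\chi:(\F,+)\to\C^\times$ and expand the indicator function of each block of $\cP$ in the basis of $U$-invariant functions on $\F$ formed by the averages $\tilde\chi_y(a):=|U|^{-1}\sum_{u\in U}\chi(yua)$ indexed by $y\in\F/U$. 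The $\cP$-composition preservation then becomes the family of identities $\sum_{j=1}^n\tilde\chi_y(f_j(x))=\sum_{i=1}^n\tilde\chi_y(x_i)$ holding for every $x\in\cC$ and every coset $y\in\F/U$.

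For each $j$, choose $\lambda_j\in\F^n$ extending the linear functional $f_j:=\pi_j\circ f$ from $\cC$ to $\F^n$, so that $f_j(x)=\langle\lambda_j,x\rangle$ on $\cC$. Substituting into the identity above (which need only be taken for the coset $y=1$, as all other cosets rescale to this one) and invoking linear independence of additive characters on the abelian group $\cC$, I would deduce a multiset equality $\{u\lambda_j+\cC^\perp:1\leq j\leq n,\,u\in U\}=\{ue_i+\cC^\perp:1\leq i\leq n,\,u\in U\}$ in $\F^n/\cC^\perp$, where $e_i$ denotes the $i$-th standard basis vector. Matching zero classes yields $|\{j:f_j\equiv 0\}|=n-|I|$ with $I:=\supp(\cC)$; matching nonzero classes yields, for each $j$ with $f_j\not\equiv 0$, an index $\tau(j)\in I$ and a unit $u_j\in U$ with $\lambda_j\equiv u_je_{\tau(j)}\pmod{\cC^\perp}$, so that $f_j(x)=u_jx_{\tau(j)}$ for all $x\in\cC$. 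A counting argument then forces $\tau$ to be a bijection from $\{j:f_j\not\equiv 0\}$ onto $I$, and extending $\tau$ arbitrarily to a permutation of $\{1,\ldots,n\}$ and setting the remaining scalars to $1\in U$ produces a matrix $M\in\Mon_U(n,\F)$ with $f(x)=xM$ for all $x\in\cC$.

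The main obstacle is the Artin-style matching step that extracts the multiset identity from the $U$-averaged character-sum equation. After unpacking the $U$-average, the relation becomes a linear dependence among additive characters of $\cC$ of the form $\chi_{u\lambda_j}|_\cC$ and $\chi_{ue_i}|_\cC$, indexed by cosets in $\F^n/\cC^\perp$ (which is identified with the character group of $\cC$ via the Frobenius property of $\F$, trivially satisfied for a field). Linear independence of characters then forces the multiset identity, from which the monomial-in-$U$ structure of $f$ is recovered. The subtle point is that the multiset matching a priori identifies each $\lambda_j$ with an $e_i$ only up to the multiplicative action of $U$, so one must argue via counting in $\F^n/\cC^\perp$ that the $U$-orbits on both sides can be paired into a bijection, thereby producing the global permutation $\tau$ and the scalars $u_j\in U$.
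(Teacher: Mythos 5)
Your proposal is correct in outline and pursues the same character-theoretic idea as the paper's own proof (Theorem~5.14, of which the statement in question is the special case $R=\F$): expand the $U$-orbit composition condition into a sum of additive characters of~$\cC$, then invoke linear independence of characters (the paper's Proposition~3.1) to match terms. The difference lies in the bookkeeping used to extract the permutation. The paper proceeds by iterative elimination: it matches $\chi(\langle f(-),e_1\rangle)$ to some $\chi(\langle -,\alpha_1 e_{\tau(1)}\rangle)$, then removes the full $U$-orbits $Q_1=\{\alpha e_1:\alpha\in U\}$ and $Q_{\tau(1)}$ from both sides of the character-sum identity (showing the corresponding subsums cancel), and repeats. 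That step-by-step cancellation automatically guarantees $\tau(2)\neq\tau(1)$, etc., so the permutation emerges without a separate counting argument. You instead extract the full multiset identity
\[
  \{\!\{\,u\lambda_j+\cC^\perp : 1\le j\le n,\ u\in U\,\}\!\}
  \;=\;
  \{\!\{\,u e_i+\cC^\perp : 1\le i\le n,\ u\in U\,\}\!\}
\]
in one pass and then argue orbit-by-orbit that the counts match, which is equivalent but requires the observation (worth spelling out) that for each nonzero class $\bar c\in\F^n/\cC^\perp$, the number of $(j,u)$ with $u\lambda_j\equiv c$ equals $|\mathrm{Stab}_U(\bar c)|\cdot|\{j:\bar\lambda_j\in U\bar c\}|$, and similarly on the right; dividing out the stabilizer then yields a per-orbit count equality that lets you build the bijection $\tau\colon\{j:f_j\not\equiv 0\}\to\supp(\cC)$. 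Two small points to tidy up: your extension of $f_j$ to a functional $\langle\lambda_j,-\rangle$ on $\F^n$ is a field-specific convenience (the paper's elimination argument avoids it and thereby works verbatim over any finite Frobenius ring), and your remark that it suffices to consider the coset $y=1$ is fine but deserves the explicit justification that the $y=1$ identity alone already produces the multiset equality via linear independence. With those details filled in, your argument is a valid alternative proof that is globally parallel rather than sequentially parallel to the paper's.
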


In~\cite[Thm.~6.3]{Wo99} and \cite[Thm.~10]{Wo97}, Wood generalized the MacWilliams extension theorem and Goldberg's result to
codes over finite Frobenius ring (more on Frobenius rings in the next section).
We will present the proofs in our language in Theorems~\ref{T-MonUR} and~\ref{T-MacWExtHamm}.

The last result motivated Goldberg~\cite{Gol80} to ask which subgroups of $\GL(n,\F)$ satisfy the local-global property and which weight functions
are preserved by these groups.
We formulate the question in the ring setting.

\begin{question}\label{Q-Goldberg}
Find subgroups $\cU$ of $\GL(n,R)$ together with a ``weight function"~$w:R^n\longrightarrow\C^k$ such that~$w$ is constant on each $\cU$-orbit
(i.e.,~$\cU$ preserves~$w$) and every $w$-preserving linear isomorphism $f:\cC\longrightarrow\cC'$ between codes in~$R^n$
is a global $\cU$-map.
For a function~$w$ satisfying this property, we say that the \emph{$w$-preserving maps satisfy the MacWilliams extension theorem}.
\end{question}
Mostly, our weight function will take values in~$\C$, but on a few occasions also values in some~$\C^k$ will be considered.

We will first study the following related problem.
In Section~\ref{SS-Goldberg} we will return to the above question.

\begin{problem}\label{P-LocGlob}
Which subgroups $\cU$ of $\mathrm{GL}(n,R)$ satisfy the local-global property?
\end{problem}

In order to illustrate that the local-global property is not trivial, we present a few examples.
The first one shows that even a local $\cU$-map defined on all of $\F^n$, where~$\F$ is a field, need not be a global $\cU$-map.

\begin{exa}\label{E-F3}
Let $\F_3$ be the field with three elements and consider the group
\[
   \cU=\left\{\begin{pmatrix}a&b\\0&c\end{pmatrix}\,\bigg|\, a,b,c\in \F_3\text{ and } ac=1\right\}\leq\GL(2,\F_3).
\]
Let~$f$ be the linear map
\[
 f:\F_3^2\longrightarrow \F_3^2,\quad (\alpha,\beta)\longmapsto (\alpha,\beta)\begin{pmatrix}2&1\\0&1\end{pmatrix}.
\]
Then~$f$ is obviously not a global $\cU$-map.
But~$f$ is a local~$\cU$-map, as follows from the identities
\[
  f(0,1)=(0,1)\begin{pmatrix}1&0\\0&1\end{pmatrix},\
  f(1,0)=(1,0)\begin{pmatrix}2&1\\0&2\end{pmatrix},\
  f(1,1)=(1,1)\begin{pmatrix}2&0\\0&2\end{pmatrix},\
  f(1,2)=(1,2)\begin{pmatrix}2&2\\0&2\end{pmatrix},
\]
along with the fact that each vector in~$\F_3^2$ is a scalar multiple of one of the vectors $(0,1),(1,0)$, $(1,1),(1,2)$.
This example also shows that~$f$ is a local $\mathrm{SL}(2,\F_3)$-map, but not a global one.
Thus, both the groups~$\cU$ and $\mathrm{SL}(2,\F)$ do not satisfy the local-global property.
\end{exa}

The next example illustrates that even the general linear group $\GL(n,R)$ may not satisfy the local-global property.
We will see later that the lack of the local-global property is due to the fact that the ring~$R$ in this example is not
Frobenius.

\begin{exa}\label{E-GLLocNonGlob}
Consider the commutative ring $R=\F_2[x,y]/(x^2,xy,y^2)=\{0,x,y,x+y,1,1+x,1+y,1+x+y\}$.
Writing $\cA:=\{0,x,y,x+y\}$ and $\cB:=\{1,1+x,1+y,1+x+y\}$, we have
$R=\cA\cup\cB$.
Stated differently, $R=\{\alpha+a\mid \alpha\in\F_2,\,a\in\cA\}$.
Note that $u^2=1,\ ab=0$ and $au=a$ for all $u\in\cB$ and $a,b\in\cA$.
Consider the code $\cC\subseteq R^2$ generated by $(x,y)$ and $(y,x+y)$.
Thus, $\cC=\{(0,0),\,(x,y),\,(y,x+y),\,(x+y,x)\}$.
Let $f:\cC\longrightarrow R^2$ be the linear map given by
\begin{align*}
   &f(x,y)=(x,y)\begin{pmatrix}1\!&\!1 \\0\!&\!1\end{pmatrix}=(x,x+y),\\
   &f(y,x+y)=(y,x+y)\begin{pmatrix}0\!&\!1\\1\!&\!0\end{pmatrix}=(x+y,y),\\
   &f(x+y,x)=(x+y,x)\begin{pmatrix}1\!&\!0\\1\!&\!1\end{pmatrix}=(y,x).
\end{align*}
Observe that this map is indeed $R$-linear. By construction,~$f$ is a local $\GL(2,R)$-map.
We want to show that it is not a global $\GL(2,R)$-map.
In order to do so, note first that every matrix in $\GL(2,R)$ is the sum of a matrix with entries in~$\cA$ and a matrix in $\GL(2,\F_2)$.
Using that
\[
   (v,w)\begin{pmatrix}a\!&\!b\!\\c\!&\!d\end{pmatrix}=(0,0)\ \text{ for all $(v,w)\in\cC$ and all $a,b,c,d\in\cA$, }
\]
it remains to show that there is no matrix $A\in\GL(2,\F_2)$ such that $f(v,w)=(v,w)A$ for all $(v,w)\in\cC$.
But this can easily be verified.
All of this proves that~$f$ cannot be extended to an isomorphism on~$R^2$ (and not even to any linear map on~$R^2$), and hence $\GL(2,R)$ does not
have the local-global property.
With the same reasoning, the example also shows that the subgroup $\cU:=\{A\in\GL(2,R)\mid \det(A)=1\}$ does not
have the local-global property either.
\end{exa}

The final example of this series, taken from~\cite[Sec.~2]{Wo97} by Wood, shows that a Hamming weight-preserving map may not even be a \emph{local} $\cU$-map
for any subgroup $\cU\leq\glnr$.
Again, as it has been observed already by Wood, this fails because the ring is not Frobenius; see also Theorem~\ref{T-MacWExtHamm} in this paper.

\begin{exa}\label{E-NonFrobExa}
Let~$R$ be the ring of the previous example and let $\cC=\inner{x}=\{0,x\}$ and $\cC'=\inner{y}=\{0,y\}$.
Then~$\cC,\,\cC'$ are codes over~$R$ of length~$1$.  The map $f:\,\cC\longrightarrow\cC'$ defined by $f(0)=0$ and $f(x)=y$ is
an $R$-isomorphism between~$\cC$ and~$\cC'$.
Note that~$f$ is Hamming weight-preserving, but not even a local $R^\times$-map since there is no unit $\alpha\in R^\times$
such that $y=\alpha x$.
This also implies that~$f$ cannot be extended to a monomial map on~$R$.
\end{exa}

It should be noted that the above counterexamples are based on rings that are not even QF (quasi-Frobenius).
In \cite[Thm.~2.3]{Wo08}  Wood shows that the Frobenius property is necessary for the Hamming weight to satisfy the 
extension theorem, and he provides an explicit QF ring with a weight-preserving map that fails the extension property~\cite[Ex.~5.1]{Wo08}.
One can easily check that this example also shows that the group $\Mon(n,R)$ does not satisfy the local-global property 
for that particular QF ring. 
We do not know whether, for instance, $\GL(n,R)$ also fails the local-global property.
As we will see in the next sections, the case~$\Mon(n,R)$ is significantly more difficult than other subgroups, and
therefore we leave it open to future research whether our results on local-global properties can be (partially)
extended to QF-rings.
In that case, the methods certainly need to be adapted because we make substantial use of a generating character.

\medskip
Before we go on and set up the general theory for our purposes, we take the opportunity to present yet another instance of weight-preserving maps for
which the MacWilliams extension theorem fails: rank-preserving maps.
They are motivated by the rank weight which is the crucial player for error-correction in
random network coding.
The first of the following two examples deals with matrix codes in $\F^{m\times n}$ endowed with the obvious rank function,
whereas the second example deals with codes in $(\F_{q^m})^n$ endowed with the rank over~$\F_q$.
As to our knowledge, such examples, though very simple, have not been presented yet in the literature.
For the isometry groups of the ambient spaces~$\F^{m\times n}$ and~$(\F_{q^m})^n$ and for further background on this topic, we refer to
Berger~\cite{Ber03} and Morrison~\cite{Mor13} and the vast literature on random network coding.
\begin{exa}\label{E-RankMetric}
\begin{alphalist}
\item Let $\F$ be a finite field and consider the vector space $\F^{m\times n}$ endowed with the rank metric, i.e.,
      $\wtRK(A):=\rank_{\F}(A)$.
      Then the $\wtRK$-preserving linear maps do not satisfy the MacWilliams extension theorem.
      Take for instance $\F=\F_2$, and consider the code
      \[
         \cC=\big\{(A\,|\,0_{2\times1})\,\big|\, A\in\F^{2\times2}\big\}\subseteq\F^{2\times3}.
      \]
      The linear map $f:=\cC\longrightarrow \F^{2\times3},\ (A\,|\,0_{2\times1})\longmapsto (A\T\,|\,0_{2\times1})$, is clearly
      $\wtRK$-preserving, yet it cannot be extended to a $\wtRK$-preserving isomorphism on~$\F^{2\times3}$.
      The latter can easily be seen using the fact
      that $\Smallsixmat{0}{0}{1}{0}{0}{0}$ has to be mapped to a matrix with rank~$1$, which along with linearity leads to a contradiction to
      \[
        f\begin{pmatrix}1&0&0\\0&0&0\end{pmatrix}=\begin{pmatrix}1&0&0\\0&0&0\end{pmatrix}\text{ and }
        f\begin{pmatrix}0&1&0\\0&0&0\end{pmatrix}=\begin{pmatrix}0&0&0\\1&0&0\end{pmatrix}.
      \]
      \\
      Of course, the entire situation can also be cast in the setting of Definition~\ref{D-LocGlobMap}.
      Note first that $A,B\in\F^{m\times n}$ satisfy the equivalence $\wtRK(A)=\wtRK(B)\Longleftrightarrow B=UAV$ for some $U\in\GL(m,\F)$ and $V\in\GL(n,\F)$.
      Using the vectorization $\Vec(M)\in\F^{mn}$, which is the vector listing the rows of~$M\in\F^{m\times n}$ in the given order,
      we obtain the isometric space~$(\F^{mn},\wtRK)$.
      With the aid of the Kronecker product $\otimes$ for matrices, the above equivalence now reads as
      $\wtRK(\Vec(A))=\wtRK(\Vec(B))\Longleftrightarrow \Vec(B)=\Vec(A)(U\T\otimes V)$ for some $U\in\GL(m,\F)$ and $V\in\GL(n,\F)$.
      Thus a $\wtRK$-preserving map is simply a local $G$-map, where $G:=\GL(m,\F)\otimes\GL(n,\F)$, which is a subgroup of
      $\GL(mn,\F)$.
      The above example shows that~$G$ does not have the local-global property.
\item Similar to~(a), consider the code
      $\cC=\big\{\text{diag}(A,B)\mid A,\,B\in\F^{2\times2}\big\}\subseteq\F^{4\times4}$ and
      the $\wtRK$-preserving map $f:=\cC\longrightarrow\F^{4\times4},\ \text{diag}(A,B)\longmapsto\text{diag}(A\T,B)$.
      Again, it can be shown that~$f$ cannot be extended to a $\wtRK$-preserving isomorphism on~$\F^{4\times4}$.
      This must not be confused with a recent result by Greferath et al.~\cite{GHMWZ13}
      which states that the rank weight on matrix rings $R:=\F^{m\times m}$ satisfies the extension theorem;
      see \cite[Ex.~4.7]{GHMWZ13}.
      The latter applies -- as always, when formulated in this way -- to $R$-linear maps between modules over~$R$.
      The above code~$\cC$ and the map~$f$ are merely $\F$-linear.
\item Another setting  appearing in random network coding is as follows.
      For $x=(x_1,\ldots,x_n)\in(\F_{q^m})^n$ define $\wtRKq(x):=\dim_{\F_q}\langle x_1,\ldots,x_n\rangle$,
      where the latter denotes the $\F_{q}$-subspace generated by $x_1,\ldots,x_n$ in $\F_{q^m}$.
      The $\F_{q^m}$-linear and $\wtRKq$-preserving maps do not satisfy a Mac\-Wil\-liams extension theorem:
      Consider the field $\F_{2^4}$ with primitive element $\omega$ satisfying $\omega^4+\omega+1=0$, and let $\cC=\{a(1,\omega)\mid a\in\F_{2^4}\}$ and
      $\cC'=\{a(1,\omega^5)\mid a\in\F_{2^4}\}$. Then one can verify that the $\F_{2^4}$-linear map
      $f:\cC\longrightarrow\cC',\ a(1,\omega)\longmapsto a(1,\omega^5)$ is $\wtRKq$-preserving, but cannot be extended to an $\F_{2^4}$-linear
      $\wtRKq$-isometry on $(\F_{2^4})^2$.
      This can be seen by testing all vectors with rank~$1$ as the potential image of $(0,1)\in(\F_{2^4})^2$.
      In the same way one can verify that for each~$\sigma$ in the Galois group $\text{Gal}(\F_{2^4}|\F_2)$ the semi-linear map
      $\cC\longrightarrow\cC',\ a(1,\omega)\longmapsto \sigma(a)(1,\omega^5)$ is $\wtRKq$-preserving, but cannot be extended to a semi-linear
      $\wtRKq$-isometry on $(\F_{2^4})^2$.
\end{alphalist}
\end{exa}

\section{Basic Notions on Characters and Frobenius Rings}\label{SS-Basics}
In this brief section we recall the basic notions and properties of characters of finite abelian groups and of Frobenius rings.

Let~$A$ be a finite abelian group.
A \emph{character}~$\chi$ on $A$ is a group homomorphism $\chi:A\longrightarrow \C^*$, where $\C^*$ is the multiplicative group
of nonzero complex numbers.
The set of all characters has a group structure with addition $(\chi_1\oplus\chi_2)(a):=\chi_1(a)\chi_2(a)$.
We denote this character group by $\widehat{A}$.
Its zero element is the trivial map $\chi\equiv1$.
It is well-known that the groups~$A$ and~$\widehat{A}$ are (non-canonically) isomorphic.
Thus $|A|=|\widehat{A}|$.
On the other hand, $\widehat{\phantom{\big|}\hspace*{.5em}}\hspace*{-.9em}\widehat{A}$
can and will be identified with~$A$ via the map $a(\chi):=\chi(a)$.

Linear independence of characters in the following form will play a crucial role later on.

\begin{prop}\label{P-LinIndepChar}
Let $\chi_1,\ldots,\chi_N$ and $\psi_1,\ldots,\psi_M$ be (not necessarily distinct) characters of~$A$.
Assume that
\[
     \sum_{i=1}^N\chi_i=\sum_{i=1}^M\psi_i
\]
in the vector space $\C^A$ of complex-valued maps on~$A$.
Then the multisets $\{\!\{\chi_1,\ldots,\chi_N\}\!\}$ and $\{\!\{\psi_1,\ldots,\psi_M\}\!\}$ coincide.
\end{prop}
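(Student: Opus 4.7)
The plan is to reduce the statement to the classical fact that the distinct characters of a finite abelian group are linearly independent over $\C$, and then peel off multiplicities.

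First I would gather like terms: let $\phi_1,\ldots,\phi_k$ be an enumeration of the distinct characters occurring in either $\{\!\{\chi_1,\ldots,\chi_N\}\!\}$ or $\{\!\{\psi_1,\ldots,\psi_M\}\!\}$, and let $m_j,\,n_j\in\Z_{\geq 0}$ denote the multiplicities of $\phi_j$ in the first and second multiset respectively. The hypothesis then rewrites as
\[
   \sum_{j=1}^{k}(m_j-n_j)\phi_j \;=\; 0 \quad\text{in }\C^A.
\]
So it suffices to establish that the distinct characters $\phi_1,\ldots,\phi_k$ are linearly independent over~$\C$, because then each $m_j-n_j=0$ and the two multisets coincide.

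For the linear independence, I would invoke the orthogonality relations. For any nontrivial character $\chi$ on the finite abelian group~$A$, pick $a_0\in A$ with $\chi(a_0)\neq 1$; then substituting $a\mapsto a+a_0$ in $\sum_{a\in A}\chi(a)$ gives $\chi(a_0)\sum_a\chi(a)=\sum_a\chi(a)$, forcing $\sum_{a\in A}\chi(a)=0$. Applied to $\chi=\phi_i\ominus\phi_j$ for $i\neq j$, this yields $\sum_{a\in A}\phi_i(a)\overline{\phi_j(a)}=0$, while for $i=j$ the sum equals $|A|$. Thus the $\phi_j$ are pairwise orthogonal, hence linearly independent, in $\C^A$ equipped with the standard Hermitian inner product.

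The argument is essentially routine once one has orthogonality of characters at one's disposal; the only mild point is to be careful that the multisets may share characters, which is precisely what the collecting-of-like-terms step handles. No obstacle of substance is expected here, since this is the standard Artin-style independence result transferred to the multiset formulation needed later in the paper.
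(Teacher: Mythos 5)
Your proof is correct and follows essentially the same path as the paper's: both arguments collect like terms over the distinct characters appearing in the two multisets and then invoke linear independence of distinct characters in $\C^A$. The only cosmetic difference is that the paper cites Jacobson for that independence, while you supply the standard orthogonality proof of it yourself; the reduction step (turning the multiset statement into a vanishing linear combination with integer coefficients) is identical.
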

\begin{proof}
Let $\phi_1,\ldots,\phi_L$ be all the distinct characters among $\chi_1,\ldots,\chi_N,\psi_1,\ldots,\psi_M$. Then the above identity may be written as
\[
    \sum_{i=1}^L\alpha_i\phi_i=\sum_{i=1}^L\beta_i\phi_i
\]
for suitable coefficients $\alpha_i,\,\beta_i\in\Z$.
Now the linear independence of distinct characters in the vector space~$\C^A$, see \cite[p.~291]{Ja85}, implies $\alpha_i=\beta_i$ for all $i=1,\ldots,L$, and
this proves the desired statement.
\end{proof}

Now we turn to finite rings~$R$ with unity.
Let~$\widehat{R^n}$ be the character group of the group $(R^n,+)$.
Then~$\widehat{R^n}$ can be turned into a left and right $R$-module as follows.
For $r\in R$ and $\chi\in\widehat{R^n}$ define the characters $r\chi$ and $\chi r$ in $\widehat{R^n}$ by
\begin{equation}\label{e-rchir}
    (r\chi)(x)=\chi(xr) \text{ and }(\chi r)(x)=\chi(rx)  \text{ for all }x\in R^n.
\end{equation}
Along with the addition~$\oplus$ of the character group $\widehat{R^n}$, this induces a left and a right $R$-module structure
on~$\widehat{R^n}$.

From Honold~\cite[Thm.~1 and~2]{Hon01} (see also Lamprecht~\cite{Lamp53}, Hirano~\cite{Hi97}, and Wood~\cite{Wo99})
we know that a finite ring~$R$ is \emph{Frobenius} if
$R/\text{rad}(R)\cong\text{soc}(\!_{R}R)$ as left $R$-modules, where $\text{rad}(R)$ is the Jacobson radical of~$R$ and
$\text{soc}(\!_{R}R)$ is the left socle of~$R$.
This property is equivalent to the analogous right-sided one~\cite{Hon01}.
It also follows from~\cite[p.~409]{Hon01}
that~$R$ is Frobenius if and only if~$\widehat{R}$ and~$R$ are isomorphic left $R$-modules.
In other words,~$R$ is Frobenius if and only if there exists a character~$\chi\in\widehat{R}$ such that
\begin{equation}\label{e-RRhat}
  R\longrightarrow\widehat{R},\ r\longmapsto r\chi,
\end{equation}
is an isomorphism of left $R$-modules (and thus~$\widehat{R}$ is a free left $R$-module with basis~$\chi$).
Such a character is called a \emph{left generating character}.
It is known~\cite[Thm.~4.3]{Wo99} that a character is left generating if and only if it is right generating.
Hence the left isomorphism in~\eqref{e-RRhat} also implies the according isomorphism
$R\longrightarrow\widehat{R},\ r\longmapsto \chi r$ of right $R$-modules.
We will call such a character~$\chi$ simply a \emph{generating character of}~$R$.

The following proposition will be useful.
\begin{prop}[\mbox{\cite[Cor.~3.6]{ClGo92} and \cite[Prop.~4.2]{Wo99}}] \label{P-GenChar}
Let~$\chi$ be a generating character of~$R$.
\begin{arabiclist}
\item If $I$ is a left or right ideal of $R$ and $I\subset \ker \chi:=\{r\in R\mid \chi(r)=1\}$, then $I=0$.
\item Let~$V$ be a left $R$-module.
         Denote by ${\rm Hom}_R(V,R)$ the group of left $R$-linear maps from~$V$ to~$R$
         and by~$\widehat{V}$ the character group of $(V,+)$.
         The map ${\rm Hom}_R(V,R)\longrightarrow\widehat{V},\;
        g\longmapsto \chi\circ g$ is an injective group homomorphism.
\end{arabiclist}
\end{prop}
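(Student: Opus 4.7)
The strategy is to reduce both parts to the defining property of the generating character, namely the $R$-module isomorphisms $R\longrightarrow\widehat{R},\ r\longmapsto r\chi$ and $R\longrightarrow\widehat{R},\ r\longmapsto \chi r$ supplied by~\eqref{e-RRhat} and its right-sided counterpart. In particular, both assignments are injective, and I plan to use exactly this injectivity to conclude that certain elements in~$\ker\chi$ must vanish.

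For part~(1), suppose that~$I\subseteq\ker\chi$ is, say, a left ideal. For any $r\in I$ and any $x\in R$ the left-ideal property gives $xr\in I\subseteq\ker\chi$, so by~\eqref{e-rchir}
\[
   (r\chi)(x)=\chi(xr)=1\qquad\text{for all }x\in R,
\]
i.e.\ $r\chi$ is the trivial character. Injectivity of $r\mapsto r\chi$ then forces $r=0$, hence $I=0$. The right-ideal case is entirely symmetric, using the identity $(\chi r)(x)=\chi(rx)$ and the right-sided isomorphism $r\mapsto\chi r$. The only subtlety here is bookkeeping: one must pair ``left ideal'' with the left isomorphism and ``right ideal'' with the right one, since the two formulas in~\eqref{e-rchir} differ in which side the multiplication lands.

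For part~(2), the homomorphism property is a one-line verification:
\[
    \chi\circ(g_1+g_2)(v)=\chi\bigl(g_1(v)+g_2(v)\bigr)=\chi(g_1(v))\,\chi(g_2(v))
\]
because $\chi$ is a character of $(R,+)$. The content lies in injectivity. The plan is: if $\chi\circ g$ is the trivial character on $V$, then $g(v)\in\ker\chi$ for every $v\in V$, so $g(V)\subseteq\ker\chi$. Because $g$ is left $R$-linear, its image $g(V)$ is a left $R$-submodule of~$R$, i.e.\ a left ideal of~$R$. Part~(1) therefore forces $g(V)=0$, and hence $g=0$.

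I do not foresee any real obstacle; the argument is essentially a two-step chain (generating character $\Longrightarrow$ no nonzero one-sided ideal sits inside $\ker\chi$ $\Longrightarrow$ injectivity of $g\mapsto\chi\circ g$). The only delicate point to keep in mind is the consistent tracking of left versus right sidedness when invoking~\eqref{e-rchir}.
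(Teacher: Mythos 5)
Your proof is correct. The paper itself does not prove this proposition but attributes it to Claasen--Goldbach and Wood; your argument is the standard one those references give, reducing part (2) to part (1) via the observation that the image of an $R$-linear map is a left ideal, and proving part (1) by observing that a one-sided ideal contained in $\ker\chi$ yields the trivial character under the isomorphism $r\mapsto r\chi$ (resp.\ $r\mapsto\chi r$), whence $r=0$ by injectivity. The side-matching you flag (left ideal with $r\chi$, right ideal with $\chi r$) is exactly the right bookkeeping given the formulas in~\eqref{e-rchir}.
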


Examples of  finite Frobenius rings are finite fields, integer residue ring $\mathbb{Z}_N$, finite chain rings, and
matrix rings $R^{n\times n}$ as well as finite group rings $R[G]$ over a Frobenius ring~$R$.
Direct products of finite Frobenius rings are Frobenius.
The ring in Example~\ref{E-GLLocNonGlob} is not Frobenius;  see~\cite[Ex.~3.2]{ClGo92}.

On the $R$-bimodule~$R^n$ denote by $\inner{\,\cdot\,,\,\cdot\,}$ the standard inner (dot) product.
Using~\eqref{e-rchir} it is easy to see that if~$R$ is a Frobenius ring with generating character~$\chi$,
then~\eqref{e-RRhat} extends to the left $R$-module isomorphism
\begin{equation}\label{e-RRhatnl}
    \alpha_{l}:\; R^n\cong\widehat{R^n},\quad x\longmapsto \chi(\inner{-,x}).
\end{equation}
Similarly, we have the right $R$-module isomorphism
\begin{equation}\label{e-RRhatnr}
     \alpha_{r}:\; R^n\cong\widehat{R^n},\quad x\longmapsto \chi(\inner{x,-}).
\end{equation}

We close this section with the following double annihilator property, which will be useful on several occasions.
\begin{rem}[\mbox{\cite[Theorem 15.1]{Lam99}}]\label{R-FrobAnn}
Let~$R$ be a Frobenius ring.
For a right (resp.\ left) ideal~$I$ of~$R$ define the left annihilator as
$\text{ann}_{l}(I):=\{r\in R\mid ra=0\text{ for all }a\in I\}$ (resp.\ the
right annihilator as $\text{ann}_{r}(I):=\{r\in R\mid ar=0\text{ for all }a\in I\}$).
Then $\text{ann}_{r}(\text{ann}_{l}(I))=I$ for each right ideal~$I$ and
$\text{ann}_{l}(\text{ann}_{r}(I))=I$ for each left ideal~$I$ of~$R$.
\end{rem}

\section{Partitions and Their Dual Partitions}
We introduce character-theoretic dualization of partitions and discuss some crucial properties.
Throughout this section let~$R$ be a finite Frobenius ring.

Let us fix some basic terminology.
The sets of a partition $\cP=(P_m)_{m=1}^M$ are called its \emph{blocks}, and we write $|\cP|$ for the number of
blocks in~$\cP$.
Recall that two partitions~$\cP$ and~$\cQ$ are called \emph{identical} if $|\cP|=|\cQ|$ and the blocks coincide
after suitable indexing.
Moreover,~$\cP$ is called \emph{finer} than~$\cQ$, written as $\cP\leq\cQ$, if
every block of~$\cP$ is contained in a block of~$\cQ$.
Note that if~$\cP\leq\cQ$ then $|\cP|\geq|\cQ|$.
Denote by~$\widesim_{\cP}$ the equivalence relation induced by~$\cP$, thus,
$v\widesim_{\cP}v'$ if $v,\,v'$ are in the same block of~$\cP$.

The following notion of partition duality for abelian groups has been proven to be at the core of MacWilliams identities.
It has been introduced for Frobenius rings (in a left-sided variant) by Byrne et~al.~\cite[p.~291]{BGO07} and goes back to
the notion of F-partitions as introduced by Zinoviev/Ericson in~\cite{ZiEr96}; see also~\cite{ZiEr09}.
Reflexive partitions, as defined below, are exactly the partitions that induce abelian association schemes as
studied in a more general context by Delsarte~\cite{Del73}, Camion~\cite{Cam98}, and others.
For an overview of these various approaches and their relations in the language of partitions see also~\cite{GL13pos}.

\begin{defi}\label{D-DualPartGroup}
Let~$A$ be a finite abelian group and  $\cP=(P_m)_{m=1}^M$ be a partition of~$A$.
The \emph{dual partition} of~$\cP$, denoted by~$\wcP$, is the partition of~$\widehat{A}$ defined via the equivalence relation
\begin{equation}\label{e-simPhat}
  \chi\widesim_{\wcP} \chi' :\Longleftrightarrow \sum_{a\in P_m}\chi(a)=\sum_{a\in P_m}\chi'(a)
  \text{ for all }m=1,\ldots,M.
\end{equation}
The partition~$\cP$ is called \emph{reflexive} if $\wwcP=\cP$.
\end{defi}

Note that $\wwcP$ is a partition of~$A$ due to $A=\widehat{\phantom{\big|}\hspace*{.5em}}\hspace*{-.9em}\widehat{A}$.
The complex numbers $\sum_{a\in P_m}\chi(a)$ are known as the Krawtchouk coefficients of the pair $(\cP,\wcP)$.
They occur in the MacWilliams identities for partition enumerators of codes and their duals, see for instance~\cite{GL13pos}
and the references therein.

The following criterion for reflexivity turns out to be very convenient.
\begin{theo}[\mbox{\cite[Thm.~3.1]{GL13pos}}]\label{T-ReflCrit}
For any partition~$\cP$ on $A$
we have $|\cP|\leq|\wcP|$ and
$\widehat{\phantom{\big|}\hspace*{.6em}}\hspace*{-.9em}\wcP\leq\cP$.
Moreover, $\cP$ is reflexive if and only if $|\cP|=|\wcP|$.
\end{theo}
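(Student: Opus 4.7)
The plan is to associate to each partition $\cQ$ of a finite abelian group $B$ the complex vector space $B_{\cQ} \subseteq \C^B$ of functions that are constant on the blocks of $\cQ$; it has dimension $|\cQ|$, and the indicator functions of the blocks form a basis. Containments of partitions translate into reverse inclusions of these spaces, so the inequality $|\cP| \leq |\wcP|$ and the refinement $\wwcP \leq \cP$ can be proved by finding the right inclusions between $B_{\cP}$, $B_{\wcP}$, and $B_{\wwcP}$, using character-theoretic Fourier analysis on $A$.

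For the first inequality, I would define $f_m \in \C^{\widehat A}$ by $f_m(\chi) = \sum_{a \in P_m} \chi(a)$. By the very definition of $\wcP$ each $f_m$ is constant on every block of $\wcP$, so $f_m \in B_{\wcP}$. Since the $f_m$ are sums of the pairwise distinct characters of $\widehat A$ (using $\widehat{\widehat A} = A$) taken over the disjoint subsets $P_m$, Proposition~\ref{P-LinIndepChar} forces them to be linearly independent. Hence $|\wcP| = \dim B_{\wcP} \geq M = |\cP|$. Essentially, this says the Fourier transform $\mathcal F : \C^A \to \C^{\widehat A}$ sends $B_{\cP}$ injectively into $B_{\wcP}$.

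For $\wwcP \leq \cP$, I would use Fourier inversion to write $\mathbf{1}_{P_m}(a) = \tfrac{1}{|A|}\sum_{\chi} \widehat{\mathbf{1}}_{P_m}(\chi)\,\chi(a)$, where the Fourier coefficients $\widehat{\mathbf{1}}_{P_m}(\chi) = \sum_{a \in P_m} \overline{\chi(a)}$ are constant on each block $\widehat P$ of $\wcP$ (this is the first step recast). Regrouping yields $\mathbf{1}_{P_m}(a) = \tfrac{1}{|A|}\sum_{\widehat P} c_m(\widehat P)\,\sum_{\chi \in \widehat P}\chi(a)$. If $a \sim_{\wwcP} a'$ then by definition $\sum_{\chi \in \widehat P} \chi(a) = \sum_{\chi \in \widehat P} \chi(a')$ for every block $\widehat P$ of $\wcP$, whence $\mathbf{1}_{P_m}(a) = \mathbf{1}_{P_m}(a')$ for all $m$, and so $a, a'$ lie in the same block of $\cP$.

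The ``only if'' direction of the reflexivity criterion is immediate: if $\cP = \wwcP$ then applying the first inequality to $\wcP$ gives $|\wcP| \leq |\wwcP| = |\cP|$, which combined with $|\cP| \leq |\wcP|$ yields equality. The ``if'' direction is the main obstacle, since $\wwcP \leq \cP$ alone does not force equality and the inequalities only go one way. The key leverage is that $|\cP| = |\wcP|$ upgrades the inclusion $\mathcal F(B_{\cP}) \subseteq B_{\wcP}$ to the equality $\mathcal F(B_{\cP}) = B_{\wcP}$ by a dimension count, so also $\mathcal F^{-1}(B_{\wcP}) = B_{\cP}$. Applied to the indicator of a block $\widehat P$ of $\wcP$, this says that the function $a \mapsto \tfrac{1}{|A|}\sum_{\chi \in \widehat P} \chi(a)$ lies in $B_{\cP}$ and is therefore constant on each block of $\cP$. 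But equality of these sums at $a$ and $a'$ for every block $\widehat P$ is precisely the condition $a \sim_{\wwcP} a'$, so each block of $\cP$ is contained in a single block of $\wwcP$; this gives $\cP \leq \wwcP$, which combined with the refinement from the previous paragraph yields $\wwcP = \cP$.
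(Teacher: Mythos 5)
Your argument is correct, and it is essentially the standard Fourier-analytic proof: associate to each partition the space of block-constant functions, observe that the Fourier transform maps $B_{\cP}$ injectively into $B_{\wcP}$ (giving $|\cP|\le|\wcP|$), use Fourier inversion to obtain $\wwcP\le\cP$, and then, when $|\cP|=|\wcP|$, upgrade the inclusion $\mathcal F(B_\cP)\subseteq B_{\wcP}$ to an equality by a dimension count to force $\cP\le\wwcP$. The paper does not reproduce a proof but cites~\cite[Thm.~3.1]{GL13pos}, where the argument proceeds along exactly these lines, so your proposal matches the approach rather than departing from it.

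Two small remarks of presentation. First, the linear independence of the $f_m$ really comes from the linear independence of distinct characters over $\C$ (which also underlies Proposition~\ref{P-LinIndepChar}); Proposition~\ref{P-LinIndepChar} as stated only handles integer coefficients, so it is cleaner to invoke linear independence of characters directly. Second, the two halves of your argument use $f_m(\chi)=\sum_{a\in P_m}\chi(a)$ and $\widehat{\mathbf 1}_{P_m}(\chi)=\sum_{a\in P_m}\overline{\chi(a)}=\overline{f_m(\chi)}$ interchangeably; this is harmless because $B_{\wcP}$ is closed under complex conjugation, but it is worth a sentence to make the bookkeeping explicit (or one can adopt the Fourier convention without the conjugate so that $\mathcal F(\mathbf 1_{P_m})=f_m$ exactly).
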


We will need duality mainly for partitions of $R^n$.
Since~$R$ is a Frobenius ring, we may identify the character-dual~$\widehat{R^n}$ with~$R^n$ as in~\eqref{e-RRhatnl} or as
in~\eqref{e-RRhatnr}.
Either way allows us to define the dual of a partition in~$R^n$ as a partition in~$R^n$.
In order to obtain a useful duality theory, we will need both identifications.
The following definition is taken from~\cite{GL13pos} and adapted to the non-commutative case.

\begin{defi}\label{D-DualPart}
Fix a generating character~$\chi$ of~$R$.
Let $\cP=(P_m)_{m=1}^M$ be a partition of $R^n$.
The \emph{left $\chi$-dual partition} of~$\cP$ and the \emph{right $\chi$-dual partition} of~$\cP$, denoted
by~$\wcPchil$ and~$\wcPchir$, are defined as~$\alpha_l^{-1}(\wcP)$ and $\alpha_r^{-1}(\wcP)$, respectively.
In other words, $\wcPchil$ is the partition of~$R^n$ given by the equivalence relation
\begin{equation}\label{e-simPhat2l}
  v\widesim_{\wcP^{^{\scriptscriptstyle[\chi,l]}}} v' :\Longleftrightarrow
  \sum_{w\in P_m}\chi(\inner{w,v})=\sum_{w\in P_m}\chi(\inner{w,v'}) \text{ for all }m=1,\ldots,M,
\end{equation}
while~$\wcPchir$ is given by
\begin{equation}\label{e-simPhat2r}
  v\widesim_{\wcP^{^{\scriptscriptstyle[\chi,r]}}} v' :\Longleftrightarrow
  \sum_{w\in P_m}\chi(\inner{v,w})=\sum_{w\in P_m}\chi(\inner{v',w}) \text{ for all }m=1,\ldots,M.
\end{equation}
\end{defi}

It is not hard to find examples showing that the left- and right-dual of a given partition do not coincide in general.
Furthermore, the dual partitions depend on the choice of the generating character~$\chi$.
This is even the case for commutative rings;  for an example see~\cite[Ex.~5.7]{GL13pos}.

We have the following relation between the left- and right-dual partitions.
For reflexive partitions this may be regarded as an analogue of the double annihilator property described in Remark~\ref{R-FrobAnn}.

\begin{prop}\label{P-leftrightbidual}
Let~$\cP$ be a partition on~$R^n$.
Then
\begin{equation}\label{e-bidual}
 \widehat{\phantom{\Big|}\hspace*{1.4em}}^{\,\scriptscriptstyle[\chi,r]}\hspace*{-3.5em}\widehat{\cP}^{^{\scriptscriptstyle[\chi,l]}}
 \hspace*{1.2em}
 =\wwcP
 =\;\widehat{\phantom{\Big|}\hspace*{1.4em}}^{\,\scriptscriptstyle[\chi,l]}\hspace*{-3.3em}\widehat{\cP}^{^{\scriptscriptstyle[\chi,r]}}\hspace*{1.2em},
\end{equation}
where~$\wwcP$ is the bidual partition in the group sense of~\eqref{e-simPhat}.
Consequently,~$\cP$ is reflexive if and only if
$\cP=\;\widehat{\phantom{\Big|}\hspace*{1.4em}}^{\,\scriptscriptstyle[\chi,r]}\hspace*{-3.5em}\widehat{\cP}^{^{\scriptscriptstyle[\chi,l]}}\hspace*{1.2em}$, which is equivalent to
$\cP=\;\widehat{\phantom{\Big|}\hspace*{1.4em}}^{\,\scriptscriptstyle[\chi,l]}\hspace*{-3.5em}\widehat{\cP}^{^{\scriptscriptstyle[\chi,r]}}\hspace*{1.2em}$.
Moreover, $\cP=\wcPchil$ if and only if $\cP=\wcPchir$.
We call~$\cP$ $\chi$-self-dual if $\cP=\wcPchil$.
\end{prop}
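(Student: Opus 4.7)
The plan is to reduce the whole proposition to the first equality in~\eqref{e-bidual}: once it is established, the second equality follows by the mirror argument (swapping the roles of $\alpha_l$ and $\alpha_r$), and the reflexivity part is immediate, since by definition $\cP$ is reflexive iff $\wwcP=\cP$, and~\eqref{e-bidual} simply expresses $\wwcP$ as each of the two iterated $\chi$-duals.

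To prove the first identity I would unfold both equivalence relations and check that they coincide. Under the canonical identification $\widehat{\widehat{R^n}}=R^n$ given by $v(\phi)=\phi(v)$, the relation $v\sim_{\wwcP}v'$ reads
\[
  \sum_{\phi\in Q'}\phi(v)=\sum_{\phi\in Q'}\phi(v')\quad\text{for every block }Q'\text{ of }\wcP.
\]
On the other hand, the right-$\chi$-dual of $\wcPchil$, obtained by pulling the group dual of $\wcPchil$ back along $\alpha_r^{-1}$, is described by
\[
  \sum_{w\in Q}\chi(\inner{v,w})=\sum_{w\in Q}\chi(\inner{v',w})\quad\text{for every block }Q\text{ of }\wcPchil.
\]
Since the blocks of $\wcPchil$ are by construction the preimages $\alpha_l^{-1}(Q')$ of the blocks of $\wcP$, the substitution $\phi=\alpha_l(w)$, together with the identity $\alpha_l(w)(v)=\chi(\inner{v,w})$ coming from~\eqref{e-RRhatnl}, matches the two sums termwise, so the two relations are identical.

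For the $\chi$-self-dual statement I would invoke Theorem~\ref{T-ReflCrit}. Since $\alpha_l$ is a bijection, $|\wcPchil|=|\wcP|$, so the hypothesis $\cP=\wcPchil$ forces $|\cP|=|\wcP|$ and hence, by that theorem, $\cP$ is reflexive with $\wwcP=\cP$. The first identity applied to $\cP$ says that the right-$\chi$-dual of $\wcPchil$ equals $\wwcP$; under the hypothesis this right-$\chi$-dual is simply $\wcPchir$, so $\wcPchir=\wwcP=\cP$, and the converse follows by the symmetric argument. The only delicate point throughout is keeping careful track of which isomorphism ($\alpha_l$ or $\alpha_r$) is in play at each step, since $R$ is not assumed commutative and the two maps genuinely differ; once that bookkeeping is sorted out, no machinery beyond Theorem~\ref{T-ReflCrit} and the definitions of $\alpha_l$, $\alpha_r$, and the group dual is needed.
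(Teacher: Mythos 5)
Your proposal is correct and follows essentially the same route as the paper's proof: unfold the definitions of the iterated $\chi$-duals, use the identity $\alpha_l(w)(v)=\chi(\inner{v,w})=\alpha_r(v)(w)$ together with the canonical identification of $R^n$ with its bidual to match the character sums block by block, and for the self-dual equivalence invoke Theorem~\ref{T-ReflCrit} to get reflexivity before applying~\eqref{e-bidual}. The only cosmetic difference is that you unfold the outer dual one step earlier than the paper does; the substance of the computation and the use of Theorem~\ref{T-ReflCrit} are identical.
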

\begin{proof}
Let $\cQ=\big(Q_m\big)_{m=1}^M =\wcPchil$ and let $\cR=\widehat{\cQ}^{^{[\chi,r]}}$.
Then $v,\,v'\in R^n$ satisfy $v\widesim[2]_{\cR}v'$ if and only if $\alpha_r(v)\widesim_{\widehat{\cQ}}\alpha_r(v')$, which means
$\sum_{w\in Q_m}\alpha_r(v)(w)=\sum_{w\in Q_m}\alpha_r(v')(w)$ for all $m=1,\ldots,M$.
We compute
\[
  \sum_{w\in Q_m}\alpha_r(v)(w)=\sum_{w\in Q_m}\chi(\inner{v,w})=\sum_{w\in Q_m}\alpha_l(w)(v)
  =\sum_{\psi\in\alpha_l(Q_m)}\psi(v).
\]
Since $\psi(v)=v(\psi)$, due to the canonical identification of the abelian group $(R^n,+)$ with its bidual character group,
the above shows that $v\widesim[2]_{\cR}v'\Longleftrightarrow v\widesim[2]_{\widehat{\alpha_l(\cQ)}}v'$.
But $\widehat{\alpha_l(\cQ)}=\wwcP$, and we arrive at
$v\widesim[2]_{\widehat{\phantom{\raisebox{.5ex}{|}}\!\!}\hspace*{-.75em}\widehat{\mathcal P}\;}v'$.
In the same way we obtain the second identity of~\eqref{e-bidual}.

As for the last equivalence, assume $\cP=\wcPchil$.
Then $|\cP|=|\wcP|$ and Theorem~\ref{T-ReflCrit} implies $\cP=\wwcP$.
Now $\cP=\wcPchir$ follows from~\eqref{e-bidual}.
\end{proof}

\begin{exa}\label{E-HamPart}
Let~$\cP$ be the Hamming partition on~$R^n$, thus
$\mathcal{P}=(P_i)_{i=0}^n$, where $P_i$ is the set of all vectors in $R^n$ with Hamming weight $i$.
It is well known that $\wcPchil=\wcPchir=\cP$ for all generating characters~$\chi$.
This can be found in many textbooks; for a brief summary in the terminology of partitions and for the Hamming weight
on a group $A_1\times\ldots\times A_n$, see also \cite[Ex.~2.3(c)]{GL13pos}.
One may note that if $R=\F$ is a field, the Hamming partition coincides with the orbits of the right-multiplication action of
$\MonF$ on $\F^n$. This is not true if~$R$ is not a field.
\end{exa}

For further examples and properties of partitions and their duals in the commutative case we refer to~\cite{GL13pos}.

\medskip

Our main interest are partitions that are given by the orbits of a group action on~$R^n$.
In~\cite{GL13homog} it has been shown that these partitions are reflexive for commutative Frobenius rings.
Below we show that the same is true for noncommutative Frobenius rings.
A similar version of that result can also be found in~\cite[Lem.~4.63]{Cam98} by Camion, where it is
derived in the language of association schemes and presented for automorphism groups of abelian groups.
For $n=1$ and $R=\Z_N$ the statement has also been shown by Ericson et al.~\cite[Thm.~1]{ESTZ97}.

\begin{prop}[See also \mbox{\cite[Prop.~2.11]{GL13homog}}]\label{P-Uorbits}
Let~$\cU$ be a subgroup of $\GL(n,R)$.
Consider the (right and left) group actions
\[
  \rho_r: R^n\times\cU\longrightarrow R^n,\ (x,U)\longmapsto xU \ \text{ and }\
  \rho_l: \cU\times R^n\longrightarrow R^n,\ (U,x)\longmapsto (Ux\T)\T.
\]
Denote by~$\cP_\cU$  and $\cP_{\cU^{\sf T}}$ the partitions of~$R^n$
given by the orbits of the actions~$\rho_r$ and~$\rho_l$, respectively.
Then $\cP_{\cU^{\sf T}}=\widehat{\cP_{\cU}}^{\scriptscriptstyle[\chi,l]}$ and
$\cP_{\cU}=\widehat{\cP_{\cU^{\sf T}}}^{\scriptscriptstyle[\chi,r]}$
for each generating character~$\chi$ of~$R$.
As a consequence,~$\cP_{\cU}$ and $\cP_{\cU^{\sf T}}$ are reflexive.
\end{prop}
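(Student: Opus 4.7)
The plan is to establish the two identities $\cP_{\cU^{\sf T}}=\widehat{\cP_\cU}^{\scriptscriptstyle[\chi,l]}$ and $\cP_\cU=\widehat{\cP_{\cU^{\sf T}}}^{\scriptscriptstyle[\chi,r]}$ directly, as applications of the linear independence of characters (Proposition~\ref{P-LinIndepChar}), combined with the bijectivity of the module isomorphisms $\alpha_l,\alpha_r$ guaranteed by the Frobenius hypothesis, see~\eqref{e-RRhatnl} and~\eqref{e-RRhatnr}. Once both identities are in hand, the reflexivity of $\cP_\cU$ and $\cP_{\cU^{\sf T}}$ will drop out of Proposition~\ref{P-leftrightbidual}.

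For the first identity, I would fix $v,v'\in R^n$ and analyze the defining relation of $\widehat{\cP_\cU}^{\scriptscriptstyle[\chi,l]}$, namely that $\sum_{w\in x\cU}\chi(\langle w,v\rangle)=\sum_{w\in x\cU}\chi(\langle w,v'\rangle)$ for every $x\in R^n$. Each orbit sum equals $|\mathrm{Stab}_\cU(x)|^{-1}\sum_{A\in\cU}\chi(\langle xA,v\rangle)$, and because the stabilizer is nonzero the condition is equivalent, for all $x\in R^n$, to
\[
\sum_{A\in\cU}\alpha_r(vA^{\sf T})(x)=\sum_{A\in\cU}\alpha_r(v'A^{\sf T})(x),
\]
where I have used the identity $\chi(\langle xA,v\rangle)=\chi(xAv^{\sf T})=\alpha_r(vA^{\sf T})(x)$ (noting that $xAv^{\sf T}$ and $vA^{\sf T}x^{\sf T}$ agree as $1\times 1$ matrices). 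Proposition~\ref{P-LinIndepChar}, together with the bijectivity of $\alpha_r$, then forces the multisets $\{\!\{vA^{\sf T}:A\in\cU\}\!\}$ and $\{\!\{v'A^{\sf T}:A\in\cU\}\!\}$ to coincide. Since every element in a given $\cU^{\sf T}$-orbit is hit with the constant multiplicity $|\cU|/|v\cU^{\sf T}|$ (orbit-stabilizer), this equality of multisets is equivalent to $v\cU^{\sf T}=v'\cU^{\sf T}$, i.e., $v\widesim_{\cP_{\cU^{\sf T}}}v'$.

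The second identity is obtained by the analogous computation on $\cU^{\sf T}$-orbits: the rewriting $\chi(\langle v,xA^{\sf T}\rangle)=\chi(vAx^{\sf T})=\alpha_r(vA)(x)$ turns the defining condition of $\widehat{\cP_{\cU^{\sf T}}}^{\scriptscriptstyle[\chi,r]}$ into the coincidence of $\{\!\{vA:A\in\cU\}\!\}$ with $\{\!\{v'A:A\in\cU\}\!\}$, equivalently $v\cU=v'\cU$. Combining both identities with Proposition~\ref{P-leftrightbidual} then gives $\cP_\cU=\widehat{\cP_{\cU^{\sf T}}}^{\scriptscriptstyle[\chi,r]}=\widehat{\widehat{\cP_\cU}^{\scriptscriptstyle[\chi,l]}}^{\scriptscriptstyle[\chi,r]}=\widehat{\widehat{\cP_\cU}}$, which is reflexivity of $\cP_\cU$ by definition; the same argument with $\cU$ replaced by $\cU^{\sf T}$ handles $\cP_{\cU^{\sf T}}$. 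The only delicate point is the noncommutative bookkeeping: one must consistently apply $\alpha_r$ (rather than $\alpha_l$) so that the summation index $A\in\cU$ ends up inside the argument of a bijection, thereby enabling the clean application of Proposition~\ref{P-LinIndepChar} to force the orbit equality.
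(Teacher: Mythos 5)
Your strategy — proving both identities directly via Proposition~\ref{P-LinIndepChar} and the bijectivity of~\eqref{e-RRhatnl}--\eqref{e-RRhatnr}, then reading off reflexivity from Proposition~\ref{P-leftrightbidual} — is conceptually sound and differs from the paper's, which instead shows only that each orbit partition is finer than the corresponding dual and then squeezes to equality via the cardinality criterion of Theorem~\ref{T-ReflCrit}. However, the noncommutative bookkeeping, which you correctly single out as the delicate point, is wrong as written, and it is precisely the noncommutative case that this proposition is meant to add to the literature.

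The problem is the claim that $xAv\T$ and $vA\T x\T$ agree as $1\times 1$ matrices. Writing out entries,
\[
xAv\T=\sum_{i,j}x_iA_{ij}v_j, \qquad vA\T x\T=\sum_{i,j}v_jA_{ij}x_i,
\]
and in a noncommutative ring these are different elements of~$R$ in general; the transpose reversal $(BC)\T=C\T B\T$ fails over noncommutative rings, and the end result being a scalar does not rescue it. Hence the asserted identity $\chi(\inner{xA,v})=\alpha_r(vA\T)(x)$ is false and your derivation of the first equality breaks. The paper explicitly notes that this proposition extends the commutative result of~\cite{GL13homog} to noncommutative Frobenius rings, and it even remarks after the statement that $(Ux\T)\T=xU\T$ requires commutativity — so the identification of the $\rho_l$-orbit of~$v$ with $\{vA\T:A\in\cU\}$, which pervades your write-up, is itself commutative-only.

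The fix requires the opposite choice of isomorphism from the one you advocate. For the first identity use~$\alpha_l$: one has $\chi(\inner{xA,v})=\chi(xAv\T)=\chi\bigl(\inner{x,(Av\T)\T}\bigr)=\alpha_l\bigl((Av\T)\T\bigr)(x)$, and the element $(Av\T)\T=\rho_l(A,v)$ lies exactly in the $\rho_l$-orbit of~$v$. Proposition~\ref{P-LinIndepChar} and bijectivity of~$\alpha_l$ then force the multiset equality $\{\!\{(Av\T)\T:A\in\cU\}\!\}=\{\!\{(Av'\T)\T:A\in\cU\}\!\}$, and your orbit-stabilizer argument shows this is equivalent to the $\rho_l$-orbits of~$v$ and~$v'$ coinciding. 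For the second identity, the $\rho_l$-orbit element is $(Ax\T)\T$ rather than $xA\T$, and then $\chi\bigl(\inner{v,(Ax\T)\T}\bigr)=\chi(vAx\T)=\alpha_r(vA)(x)$ is the correct computation (here $\alpha_r$ is indeed the right choice, in keeping with the fact that this identity concerns the right dual). With these repairs your argument is correct and remains a more direct alternative to the paper's counting argument, at the price of invoking linear independence of characters where the paper only counts blocks.
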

Note that if~$R$ is commutative, then $(Ux\T)\T=xU\T$, and~$\rho_l$ is simply the right action induced by the
transposed group $\cU\T:=\{U\T\mid U\in\cU\}$.

\begin{proof}
Let $v,\,v'\in R^n$ be in the same partition set of~$\cP_{\cU^{\sf T}}$, thus $v'=(Uv\T)\T$ for some $U\in\cU$, and thus
$\inner{w,v'}=wUv\T=\inner{wU,v}$.
Using that~$PU=P$ for each orbit~$P$ of~$\cP_{\cU}$, we obtain
$\sum_{w\in P}\chi(\inner{w,v'})=\sum_{w\in P}\chi(\inner{w,v})$.
This shows that~$\cP_{\cU^{\sf T}}$ is finer than or equal to
$\widehat{\cP_{\cU}}^{\scriptscriptstyle[\chi,l]}$
and thus $|\cP_{\cU^{\sf T}}|\geq|\widehat{\cP_{\cU}}^{\scriptscriptstyle[\chi,l]}|$.
On the other hand, if $v\widesim[2]_{\cP_{\cU}}v'$, then $v'=vU$ for some $U\in\cU$, and therefore
$\inner{v',w}=\inner{v,(Uw\T)\T}$.
This yields $\sum_{w\in Q}\chi(\inner{v',w})=\sum_{w\in Q}\chi(\inner{v,w})$ for each block $Q$ of $\cP_{\cU^{\sf T}}$,
and thus $\cP_{\cU}$ is finer than or equal to $\widehat{\cP_{\cU^{\sf T}}}^{\scriptscriptstyle[\chi,r]}$.
Hence $|\cP_{\cU}|\geq|\widehat{\cP_{\cU^{\sf T}}}^{\scriptscriptstyle[\chi,r]}|$.
With the aid of Theorem~\ref{T-ReflCrit} we conclude
$|\cP_{\cU^{\sf T}}|\geq|\widehat{\cP_{\cU}}^{\scriptscriptstyle[\chi,l]}|\geq|\cP_{\cU}|
  \geq |\widehat{\cP_{\cU^{\sf T}}}^{\scriptscriptstyle[\chi,r]}|\geq|\cP_{\cU^{\sf T}}|$.
Thus, we have equality at each step,  and this results in the desired identities.
Reflexivity follows with the aid of Proposition~\ref{P-leftrightbidual}.
\end{proof}

The above result has the remarkable consequence that the group
actions~$\rho_r$ and~$\rho_l$ lead to the same number of orbits in~$R^n$.
This property is not true in general if~$R$ is not a Frobenius ring.
For instance, for~$R$ as in Example~\ref{E-GLLocNonGlob}, the action on~$R^2$ of the group
\[
   \cU=\bigg\{\begin{pmatrix}1&r\\0&u\end{pmatrix}\,\bigg|\, r\in R,\,u\in R^*\bigg\}\leq \GL_2(R)
\]
leads to~$17$ orbits, whereas $\cU\T$ produces~$20$ orbits.

\section{Subgroups with the Local-Global Property}\label{SS-LocGlob}
In this section we will provide some answers to Problem~\ref{P-LocGlob} by establishing the local-global property for
various subgroups of $\glnr$.
Throughout, let~$R$ be a finite Frobenius ring and let~$\chi$ be a generating character of~$R$.

We start with the following two lemmas that will be needed several times in the future.
The first one is due to Wood~\cite{Wo99} and comes as a consequence of a result of Bass~\cite{Bass64}.

\begin{lemma}[\mbox{\cite[Prop.~5.1]{Wo99}}]\label{L-CyclicMod}
Let $\mathcal{M}$  be a right module over a finite ring~$S$.
Suppose $x,y\in \mathcal{M}$ generate the same cyclic right $S$-module, i.e., $xS=yS$.
Then $x= y\alpha$ for some unit $\alpha\in S$.
\end{lemma}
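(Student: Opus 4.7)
The plan is to exploit the finiteness of $S$ and apply a stable-range argument due to Bass. First I would unwind the hypothesis $xS=yS$: it yields elements $a,b\in S$ with $x=ya$ and $y=xb$, hence $y=yab$, which means $1-ab\in\text{ann}_r(y)=:I$. In particular, $aS+I=S$ as right ideals of $S$, since $ab+(1-ab)=1$.

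Next I would reduce the problem to a lifting statement. For any $i\in I$ the element $\alpha:=a+i$ satisfies
\[
   y\alpha = ya+yi = ya = x,
\]
so it suffices to exhibit some $i\in I$ for which $\alpha=a+i$ is a unit of~$S$. In other words, I need to find a unit of $S$ representing the coset $a+I$; equivalently, the already right-invertible class $\bar a\in S/I$ should lift to a genuine unit.

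The existence of such an~$i$ follows from a stable-range argument. Since $S$ is finite it is Artinian, hence semilocal, so by Bass's theorem it has stable range~$1$: whenever $aS+bS=S$ there exists $c$ with $a+bc\in S^\times$. Applied iteratively along a finite generating set of the right ideal~$I$ (or, equivalently, by reducing modulo $\text{rad}(S)$, where $S/\text{rad}(S)$ is semisimple so that units can be explicitly produced and then lifted back along the standard idempotent/unit-lifting property of finite rings), the equation $aS+I=S$ yields some $i\in I$ with $a+i\in S^\times$. Setting $\alpha:=a+i$ then gives $x=y\alpha$ as required.

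The main subtlety lies in invoking Bass's theorem correctly, since its textbook formulation concerns two principal right ideals $aS+bS=S$ whereas the condition at hand is $aS+I=S$ for a general right ideal~$I$. Handling this either requires iteration over a finite generating set of~$I$ or a direct pigeonhole argument on the coset $a+I$: in a finite ring, right-invertibility is equivalent to being a two-sided unit, so one need only show that some element of $a+I$ is not a right zero-divisor. Either way, the essential inputs are the finiteness of~$S$ (through its semilocal structure) and the collapse of one-sided to two-sided invertibility in finite rings.
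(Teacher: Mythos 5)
Your argument is correct and is precisely the approach the paper has in mind: the paper does not prove this lemma but cites it from Wood (who in turn derives it from Bass's stable range theorem for semilocal rings), and your reconstruction follows that line exactly. The reduction $xS=yS\Rightarrow x=ya,\ y=xb\Rightarrow 1-ab\in I:=\text{ann}_r(y)$ and $aS+I=S$, followed by the observation that any $\alpha\in a+I$ satisfies $y\alpha=x$, so it suffices to find a unit in $a+I$, is the standard route, and finiteness $\Rightarrow$ semilocal $\Rightarrow$ stable range~$1$ closes the gap.

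One small remark: the ``subtlety'' you raise about passing from $aS+bS=S$ to $aS+I=S$ is not really an obstacle and needs no iteration or pigeonhole. From $aS+I=S$ write $1=ac+i_0$ with $c\in S$, $i_0\in I$; then already $aS+i_0S=S$, and the two-element form of stable range~$1$ yields $d\in S$ with $a+i_0d\in S^\times$, where $i_0d\in I$ since $I$ is a right ideal. So the general-ideal version follows from the principal-ideal version in one step.
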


The following technical lemma will be a crucial step in establishing the local-global property for various subgroups.

\begin{lemma}\label{L-technical}
Let $\cU$ be a subgroup of $\glnr$ with associated orbit partition $\cP=\cP_{\cU}$ of $R^n$ (see Proposition~\ref{P-Uorbits}), and
let~$Q$ be any block of the left-dual partition $\wcPchil=\cP_{\cU^{\sf T}}$.
Let $f:\cC\longrightarrow R^n$ be a local $\cU$-map defined on the code $\cC\subseteq R^n$.
Then
\begin{equation}\label{e-sumQ}
\sum_{y\in Q}\chi(\inner{f(x),y})=\sum_{y\in Q}\chi(\inner{x,y})\text{ for all }x\in\cC.
\end{equation}
Moreover, for every $z\in R^n$ there exists a matrix $A_z\in\cU$ such that
\[
    \inner{f(x),z}=xA_z z\T  \text{ for all }x\in\cC.
\]
\end{lemma}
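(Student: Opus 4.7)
The plan is to derive~\eqref{e-sumQ} by a direct reindexing argument, and then to extract the global matrix $A_z$ from the orbit sum by invoking the linear independence of characters (Proposition~\ref{P-LinIndepChar}) together with the injectivity in Proposition~\ref{P-GenChar}(2).

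I would prove~\eqref{e-sumQ} first. Since $f$ is a local $\cU$-map, for each $x \in \cC$ there exists $M_x \in \cU$ with $f(x) = xM_x$. By Proposition~\ref{P-Uorbits}, every block $Q$ of $\wcPchil$ is an orbit of the action $\rho_l$, hence is invariant under the bijection $y \longmapsto (M_x y\T)\T$ of $R^n$. Substituting $f(x) = xM_x$ into the left-hand side of~\eqref{e-sumQ} and reindexing with this bijection produces the right-hand side.

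For the second claim, fix $z \in R^n$ and apply~\eqref{e-sumQ} to the specific block $Q(z) := \{(Uz\T)\T \mid U \in \cU\}$, the $\rho_l$-orbit of $z$. Writing $H_z = \{U \in \cU \mid Uz\T = z\T\}$ for the stabilizer, each $y \in Q(z)$ has exactly $|H_z|$ preimages $U \in \cU$, so~\eqref{e-sumQ} rewrites as
\[
  \sum_{U \in \cU} \chi\bigl(f(x)\, U z\T\bigr) \;=\; \sum_{U \in \cU} \chi\bigl(x\, U z\T\bigr) \quad\text{for all } x \in \cC.
\]
For each $U \in \cU$, define the left $R$-linear map $\ell_{Uz\T} : R^n \longrightarrow R,\; x \longmapsto xUz\T$. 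Both $\ell_{Uz\T}|_\cC$ and $\ell_{Uz\T}\circ f$ lie in $\mathrm{Hom}_R(\cC, R)$, and the two sides of the displayed identity are exactly $\sum_U \chi\circ(\ell_{Uz\T}\circ f)$ and $\sum_U \chi\circ(\ell_{Uz\T}|_\cC)$, evaluated at $x$.

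Proposition~\ref{P-LinIndepChar} then forces the multisets of characters $\{\!\{\chi\circ(\ell_{Uz\T}\circ f) \mid U\in\cU\}\!\}$ and $\{\!\{\chi\circ(\ell_{Uz\T}|_\cC) \mid U\in\cU\}\!\}$ in $\widehat{\cC}$ to coincide, and the injectivity in Proposition~\ref{P-GenChar}(2) lifts this to an equality of the corresponding multisets in $\mathrm{Hom}_R(\cC, R)$. Selecting the element of the first multiset indexed by $U = I$, there must exist $A_z \in \cU$ with $\ell_{z\T}\circ f = \ell_{A_z z\T}|_\cC$, which is precisely the required identity $\inner{f(x), z} = xA_z z\T$ for all $x\in\cC$. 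The conceptual obstacle is that~\eqref{e-sumQ} is essentially tautological in isolation: its power is unlocked only after unpacking it into a sum of genuine characters of $\cC$ and invoking the rigidity provided by Propositions~\ref{P-LinIndepChar} and~\ref{P-GenChar}(2).
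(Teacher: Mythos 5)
Your proposal is correct, and the tools used (Proposition~\ref{P-Uorbits}, Proposition~\ref{P-LinIndepChar}, Proposition~\ref{P-GenChar}(2)) are the same ones the paper relies on, but you take a mildly different route at both stages. For~\eqref{e-sumQ}, the paper invokes $\cP=\widehat{\cP_{\cU^{\sf T}}}^{^{\scriptscriptstyle[\chi,r]}}$ from Proposition~\ref{P-Uorbits} and then reads off~\eqref{e-sumQ} directly from the defining equivalence~\eqref{e-simPhat2r}; you instead substitute $f(x)=xM_x$ and reindex by the bijection $y\mapsto(M_x y\T)\T$ on the $\rho_l$-orbit $Q$. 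Your reindexing argument is more elementary and shows concretely \emph{why} the orbit sums agree, whereas the paper's is shorter because the duality machinery has already been packaged. For the second claim, the paper takes the block $Q$ containing $z$, applies Proposition~\ref{P-LinIndepChar} directly to the orbit sum~\eqref{e-sumQ} to conclude that the character $\chi(\inner{f(-),z})$ (one summand on the left) must equal $\chi(\inner{-,v})$ for some $v\in Q$, and then identifies $v=(A_zz\T)\T$. You reach the same endpoint, but insert an unnecessary detour: inflating the orbit sum to a sum over all of $\cU$ with a stabilizer-cardinality correction factor, and then invoking multiset coincidence rather than mere membership. This is logically sound (the factor $|H_z|$ is constant and cancels, and injectivity of $g\mapsto\chi\circ g$ does propagate multiset equality), but you could apply Proposition~\ref{P-LinIndepChar} to~\eqref{e-sumQ} as written, pick out the summand indexed by $y=z$ on the left, and match it against some summand indexed by $v\in Q$ on the right, which is exactly the paper's cleaner path.
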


\begin{proof}
Since $f$ is a local $\cU$-map, we have $x\sim_{\cP}f(x)$ for every $x\in \cC$.
From Proposition~\ref{P-Uorbits} we know that~$\cP=\widehat{\cP_{\cU^{\sf T}}}^{^{\scriptscriptstyle[\chi,r]}}$.
Now the identity in~\eqref{e-sumQ} follows from~\eqref{e-simPhat2r}.

For the second statement note first that for any fixed $y\in R^n$, the assignments $x\longmapsto \inner{x,y}$ and
$x\longmapsto \inner{f(x),y}$ are (left-)linear maps from $\cC$ to $R$.
As a consequence, $\chi(\inner{-,y})$ and $\chi(\inner{f(-),y})$ are characters of (the additive group of) $\cC$.
Thus, both sides of the identity~\eqref{e-sumQ} consist of sums of character values, and Proposition~\ref{P-LinIndepChar}
tells us that every character that appears on the left hand side
of~\eqref{e-sumQ} must appear on the right hand side as well (and vice versa).
Let now $z\in R^n$.
Then~$z$ is contained in some partition set~$Q$ of $\wcPchil$, and by the above there exists some $v\in Q$ such that
$\chi(\inner{f(-),z})=\chi(\inner{-,v})$.
Since~$\wcPchil$ is the orbit partition of the action~$\rho_l$ due to Proposition~\ref{P-Uorbits},  $v=(A_z z\T)\T$ for some
$A_z\in\cU$.
Hence $\chi(\inner{f(-),z})=\chi(\inner{-,(A_z z\T)\T})$.
Now Proposition \ref{P-GenChar}(2) yields $\inner{f(-),z}=\inner{-,(A_z z\T)\T}$,
as maps from $\cC$ to $R$.
Hence $\inner{f(x),z}=\inner{x,(A_z z\T)\T}=xA_z z\T$ for all $x\in\cC$.
\end{proof}

Now we are ready to investigate various subgroups of $\glnr$  with respect to the local-global property.
We begin with the group $\ltnr$ of all invertible lower triangular matrices.
Its relation to the Rosenbloom-Tsfasman metric will be discussed in the next section.

\begin{theo}\label{T-ltnr}
The group $\LT(n,R)$ satisfies the local-global property.
\end{theo}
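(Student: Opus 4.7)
The plan is to apply Lemma~\ref{L-technical} with $\cU=\ltnr$ and then assemble a single global lower triangular invertible matrix column-by-column from the local data evaluated at the standard basis vectors. Given a local $\ltnr$-map $f:\cC\longrightarrow R^n$, Lemma~\ref{L-technical} produces, for each $z\in R^n$, a matrix $A_z\in\ltnr$ satisfying $\inner{f(x),z}=xA_z z\T$ for all $x\in\cC$.

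The central observation is that by plugging in $z=e_j$, the $j$-th standard basis row vector, for $j=1,\ldots,n$, the column vector $A_{e_j}e_j\T$ is precisely the $j$-th column of $A_{e_j}$. Thus the identity from the lemma reduces to
\[
\inner{f(x),e_j}=x\cdot(\text{$j$-th column of }A_{e_j})\text{ for all }x\in\cC.
\]
I will then define a candidate global matrix $M\in R^{n\times n}$ by declaring its $j$-th column to be the $j$-th column of $A_{e_j}$; note that the $A_{e_j}$ for different $j$ are completely independent choices. Coordinate-by-coordinate this yields $\inner{f(x),e_j}=\inner{xM,e_j}$ for all $x\in\cC$ and all $j$, so $f(x)=xM$ as desired.

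It remains to verify that $M\in\ltnr$. Since each $A_{e_j}\in\ltnr$ is lower triangular, its $j$-th column has zero entries in rows $1,\ldots,j-1$, which forces $M_{ij}=0$ whenever $i<j$; hence $M$ is lower triangular. Its diagonal entry $M_{jj}=(A_{e_j})_{jj}$ is the $(j,j)$-entry of an invertible lower triangular matrix over $R$, and hence a unit in $R$. A lower triangular square matrix over a finite ring with all diagonal entries invertible is itself invertible (one solves $MN=I$ recursively, determining the columns of $N$ from bottom to top), so $M\in\ltnr$.

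The whole argument rests on the fact that lower triangularity is a column-wise condition, which lets us assemble $M$ from independently chosen $A_{e_j}$'s without any cross-column interaction. Consequently there is no real obstacle here beyond having Lemma~\ref{L-technical} available. This assembly strategy is rather specific to triangular groups; for subgroups such as $\Mon(n,R)$, where a column-type constraint must be compatible with a row-type constraint, the analogous column-by-column assembly does not automatically produce a group element, and substantially more work is needed.
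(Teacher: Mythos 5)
Your proof is correct and follows essentially the same route as the paper: apply Lemma~\ref{L-technical} to the standard basis vectors, assemble the global matrix column-by-column from the resulting $A_{e_j}$, and observe that the column-wise nature of lower triangularity guarantees the assembled matrix lies in $\ltnr$. The paper's proof is identical in substance; your additional remark that a lower triangular matrix with unit diagonal is automatically invertible, and your closing observation about why this assembly fails for $\Mon(n,R)$, are accurate but not needed beyond what the paper records.
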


\begin{proof}
Let $\cC\subseteq R^n$ be a code and $f:\cC\longrightarrow R^n$ be a local $\mathrm{LT}(n,R)$-map.
Denote by $f_i:\cC\longrightarrow R$ the $i$-th coordinate function of~$f$.
By Lemma~\ref{L-technical} there exists for each standard basis vector $e_i\in R^n$ a matrix $A_i\in \ltnr$
such that $\inner{f(x),e_i}=xA_i e_i\T$ for all $x\in\cC$.
But this means $f_i(x)=xA_i e_i\T$ for all $x\in\cC$.
Define the matrix $B\in R^{n\times n}$ as $B=(A_1 e_1\T,\ldots,A_n e_n\T)$,
i.e., the $i$-th column of~$B$ is given by the $i$-th column of $A_i$.
Since $A_i\in\ltnr$ for all~$i$, the matrix~$B$ is also lower triangular with units on the diagonal.
In other words, $B\in\ltnr$.

By construction, $f(x)=(f_1(x),\ldots,f_n(x))=xB$ for all $x\in\cC$.
This proves that $f$ is a global $\ltnr$-map.
\end{proof}

In the same way one can show that for any subgroup $U\leq R^\times$, the group $\mathrm{LT}_U(n,R)$ consisting of the lower
triangular matrices with elements from~$U$ on the diagonal has the local-global property.

\medskip

The above proof may serve as a blueprint for establishing the local-global property for further groups.
The only step in the previous proof that made explicit use of the lower triangular form is the argument that the
constructed matrix~$B$ is also invertible and lower triangular.
While this was evident in the above case, there is a more general and elegant argument that will make the above construction work
more generally.

We first present the following result, which takes care of the invertibility of the global matrix.

\begin{lemma}\label{L-Sunits}
Let~$S$ be any subring of $R^{n\times n}$, and let~$\cU(S)$ be the group of invertible matrices in~$S$.
Suppose $f:\cC\longrightarrow\cC'$ is an isomorphism with the property that there exist matrices $A,B\in S$ such that
$f(x)=xA$ for all $x\in\cC$ and $f^{-1}(y)=yB$ for all $y\in\cC'$.
Then $f$ is a global $\cU(S)$-map.
\end{lemma}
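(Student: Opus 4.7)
The plan is to apply Lemma~\ref{L-CyclicMod} to an appropriately chosen quotient of $S$, namely $S/J$ where $J$ is the right annihilator of $\cC$ in $S$.

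First I would define
\[
 J := \{N \in S : xN = 0 \text{ for all } x \in \cC\}
\]
and verify that $J$ is a right ideal of $S$: if $N \in J$ and $s \in S$, then for every $x \in \cC$ we have $x(Ns) = (xN)s = 0$, so $Ns \in J$. (Note $J$ is generally not a left ideal, since $xs$ need not lie in $\cC$.) Consequently $S/J$ becomes a finite right $S$-module. The key point in passing to this quotient is that the hypothesis $f^{-1}\circ f = \id_\cC$ translates to $xAB = x$ for every $x \in \cC$, which is exactly the statement $AB - I \in J$.

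Next I would exhibit $A + J$ and $I + J$ as generators of the same cyclic submodule of $S/J$. In $S/J$ we have
\[
 (A+J)\cdot B \;=\; AB + J \;=\; I + J,
\]
so the cyclic right submodule $(A+J)S$ contains $I+J$ and therefore equals all of $S/J$. Trivially $(I+J)S = S/J$ as well, so $(A+J)S = (I+J)S$. Since $S$ is finite, Lemma~\ref{L-CyclicMod} applied to the right $S$-module $S/J$ yields a unit $\alpha \in \cU(S)$ such that
\[
 A + J \;=\; (I+J)\alpha \;=\; \alpha + J,
\]
where in the last equality we used once more that $J$ is a right ideal, so $J\alpha \subseteq J$. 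Hence $A - \alpha \in J$, which means $xA = x\alpha$ for all $x \in \cC$, and setting $M := \alpha$ exhibits $f$ as a global $\cU(S)$-map.

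There is no deep obstacle here; the only real step is recognizing that the right-sidedness of the annihilator $J$ fits perfectly with right multiplication by $S$, and that the hypothesis $B \in S$ is exactly what is needed to certify that the coset $A+J$ already hits $I+J$ (and hence generates $S/J$) before Lemma~\ref{L-CyclicMod} is invoked.
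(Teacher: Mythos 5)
Your proof is correct and relies on the same key lemma (Lemma~\ref{L-CyclicMod}) and the same strategy as the paper, differing only in presentation: where you pass to the quotient $S/J$ by the right annihilator of $\cC$ and compare $A+J$ with $I+J$, the paper picks a generator matrix $G$ of $\cC$ and compares $G$ with $G'=GA$ inside the right $S$-module $R^{k\times n}$. Since $s+J\mapsto Gs$ is an isomorphism of $S/J$ onto $GS$ carrying $I+J\mapsto G$ and $A+J\mapsto G'$, the two arguments are identical up to this identification; your version avoids choosing $G$ at the cost of introducing $J$.
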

\begin{proof}
Let~$G\in R^{k\times n}$ be any matrix whose rows generate the code~$\cC$, thus $\cC=\{uG\mid  u\in R^k\}$.
Define $G':=GA\in R^{k\times n}$.
Then $\cC'=\{uG'\mid u\in R^k\}$ and $G=G'B$ due to the assumption on~$f^{-1}$.
This implies that the two right $S$-modules $GS:=\{GM\mid M\in S\}$ and $G'S$ are equal, and
Lemma~\ref{L-CyclicMod} provides us with an invertible matrix $C\in\cU(S)$ such that $G'=GC$.
Thus, $f(x)=f(uG)=uGA=uG'=uGC=xC$ for all $x=uG\in\cC$, and hence~$f$ is a
global $\cU(S)$-map.
\end{proof}

The following class of rings will turn out to be crucial.

\begin{defi}\label{D-Constr}
Let~$S$ be a subring of $R^{n\times n}$ and~$\cU(S)$ be the group of invertible matrices in~$S$.
The subring $S$ is called \emph{constructible} if for all matrices $A_1,\ldots, A_n\in\cU(S)$, the matrix~$B\in R^{n\times n}$
defined as $B=(A_1 e_1\T,\ldots,A_n e_n\T)$ is in~$S$.
\end{defi}
As in the proof of Theorem~\ref{T-ltnr}, the $i$-th column of the matrix~$B$ is the $i$-th column of~$A_i$.
Observe that we do not require that~$B$ be in $\cU(S)$.

Now we can generalize the proof of Theorem~\ref{T-ltnr} to constructible rings and their groups of units.
For later use we first provide the following more technical result.

\begin{theo}\label{T-ConstrLocGlob}
Let~$S$ be a constructible subring of $R^{n\times n}$.
Let $f:\cC\longrightarrow\cC'$ be an isomorphism of the codes $\cC,\,\cC'\subseteq R^n$ with the property that for each
$x\in\cC$ there exists a matrix
$A_x\in S$ such that $f(x)=xA_x$ and, likewise, for each $y\in \cC'$  there exists a matrix
$B_y\in S$ such that $f^{-1}(y)=yB_y$.
Then~$f$ is a global $\cU(S)$-map.
\end{theo}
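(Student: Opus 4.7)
The plan is to follow the blueprint of Theorem~\ref{T-ltnr}, applying Lemma~\ref{L-technical} coordinate-wise, assembling a global matrix via constructibility, and concluding with Lemma~\ref{L-Sunits}. The one new subtlety is that Lemma~\ref{L-technical} is stated for local $\cU$-maps with $\cU\leq\glnr$, whereas our hypotheses only provide $A_x\in S$ (not necessarily in $\cU(S)$). So the first task is to promote these pointwise $S$-matrices to invertible ones.

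To do so, I would observe that the bijectivity of $f$ forces the cyclic right $S$-submodules $xS$ and $f(x)S$ of $R^n$ to coincide for every $x\in\cC$: the inclusion $f(x)S\subseteq xS$ is immediate from $f(x)=xA_x$ with $A_x\in S$, while $x=f^{-1}(f(x))=f(x)B_{f(x)}$ yields $xS\subseteq f(x)S$. Since $S$ is a finite ring (being a subring of the finite ring $R^{n\times n}$), Lemma~\ref{L-CyclicMod} applied to $R^n$ as a right $S$-module produces a unit $\alpha_x\in\cU(S)$ with $x=f(x)\alpha_x$, whence $f(x)=x\alpha_x^{-1}$ with $\alpha_x^{-1}\in\cU(S)$. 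Thus $f$ is automatically a local $\cU(S)$-map, and by the symmetric hypothesis so is $f^{-1}:\cC'\longrightarrow R^n$.

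Next, I would apply Lemma~\ref{L-technical} to $f$ with $\cU=\cU(S)$, taking $z=e_i$ for each standard basis vector. This produces matrices $A_1,\ldots,A_n\in\cU(S)$ such that $f_i(x)=\inner{f(x),e_i}=xA_ie_i\T$ for all $x\in\cC$ and all $i=1,\ldots,n$. Setting $B=(A_1e_1\T,\ldots,A_ne_n\T)\in R^{n\times n}$, constructibility of $S$ (Definition~\ref{D-Constr}) gives $B\in S$, and by construction $f(x)=xB$ for every $x\in\cC$. Running the identical argument on $f^{-1}$ produces a matrix $B'\in S$ with $f^{-1}(y)=yB'$ for every $y\in\cC'$.

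With these two global matrices $B,B'\in S$ in hand, Lemma~\ref{L-Sunits} upgrades them into a single matrix in $\cU(S)$ that realizes $f$, completing the proof. The main obstacle is the upgrade from $A_x\in S$ to $A_x\in\cU(S)$; once that is in place, the rest is exactly the Theorem~\ref{T-ltnr} blueprint, with constructibility replacing the ad hoc observation that columns of lower triangular units again give a lower triangular unit, and Lemma~\ref{L-Sunits} absorbing the fact that constructibility produces a matrix only in $S$ rather than in $\cU(S)$.
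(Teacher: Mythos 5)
Your proposal is correct and follows the paper's own proof essentially step for step: first using Lemma~\ref{L-CyclicMod} to promote the pointwise $S$-matrices to units in $\cU(S)$ (so that $f$ and $f^{-1}$ are local $\cU(S)$-maps), then applying Lemma~\ref{L-technical} at the standard basis vectors and constructibility to obtain global matrices in $S$, and finally invoking Lemma~\ref{L-Sunits}. The one subtlety you flag (that the hypotheses only give $A_x\in S$, not $A_x\in\cU(S)$) is exactly what the paper handles in the first paragraph of its proof.
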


\begin{proof}
We first show that $f$ is a local $\cU(S)$-map.
Fix $x\in\cC$ and let $y=f(x)$. Then $y=xA_x$ and $x=yB_y$ for some matrices $A_x,\,B_y\in S$, and thus
the right $S$-modules $xS$ and $yS$ coincide.
Lemma~\ref{L-CyclicMod} yields $y=xC_x$ for some $C_x\in\cU(S)$, which
shows that~$f$ is a local $\cU(S)$-map.
Consequently, $f^{-1}$ is a local $\cU(S)$-map as well.

In order to show that~$f$ is a global $\cU(S)$-map, we proceed as in the proof of Theorem~\ref{T-ltnr}.
By Lemma~\ref{L-technical} there exists for each standard basis vector $e_i\in R^n$ a matrix $A_i\in\cU(S)$ such that
$f_i(x)=\inner{f(x),e_i}=xA_i e_i\T$ for all $x\in\cC$.
Define the matrix $A\in R^{n\times n}$ as $A=(A_1 e_1\T,\ldots,A_n e_n\T)$.
By constructibility, the matrix~$A$ is in~$S$.
Furthermore,
\begin{equation}\label{e-fxA}
    f(x)=xA\text{ for all }x\in\cC.
\end{equation}
In the same way we obtain a matrix $B\in S$ such that $f^{-1}(y)=yB$ for all $y\in\cC'$.
Now Lemma~\ref{L-Sunits} concludes the proof.
\end{proof}

The last result along with the fact that the inverse of any local $\cU(S)$-map is also a local $\cU(S)$-map immediately leads to the
local-global property of $\cU(S)$.
\begin{theo}\label{T-Constr}
Let~$S$ be a constructible subring of $R^{n\times n}$.
Then~$\cU(S)$ has the local-global property.
\end{theo}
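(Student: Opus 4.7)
The plan is to deduce this theorem almost immediately from the preceding Theorem~\ref{T-ConstrLocGlob}. Recall that Theorem~\ref{T-ConstrLocGlob} requires an isomorphism $f:\cC\longrightarrow\cC'$ together with pointwise representatives $A_x, B_y$ that lie in~$S$ (not necessarily in $\cU(S)$). Since $\cU(S)\subseteq S$, every local $\cU(S)$-map already satisfies these weaker hypotheses automatically, so the only real work is to produce the isomorphism and the inverse-side representatives.

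Concretely, I would start with a local $\cU(S)$-map $f:\cC\longrightarrow R^n$ and set $\cC':=f(\cC)$. As already noted in the paragraph following Definition~\ref{D-LocGlobMap}, any local $\cU(S)$-map is injective (since each $M_x\in\cU(S)$ is invertible), hence $f:\cC\longrightarrow\cC'$ is an isomorphism, and its inverse $f^{-1}:\cC'\longrightarrow\cC$ is again a local $\cU(S)$-map. Thus for every $x\in\cC$ there is $A_x\in\cU(S)\subseteq S$ with $f(x)=xA_x$, and for every $y\in\cC'$ there is $B_y\in\cU(S)\subseteq S$ with $f^{-1}(y)=yB_y$.

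With these ingredients in hand, the hypotheses of Theorem~\ref{T-ConstrLocGlob} are met verbatim, and it yields a single matrix $C\in\cU(S)$ such that $f(x)=xC$ for all $x\in\cC$. That is precisely the statement that $f$ is a global $\cU(S)$-map, completing the proof.

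There is essentially no obstacle here: the genuine content (linear independence of characters, the constructibility hypothesis, and the passage from column-wise matrices in~$S$ to a single invertible matrix in $\cU(S)$ via Lemma~\ref{L-Sunits}) has already been absorbed in Theorem~\ref{T-ConstrLocGlob}. The only subtlety worth making explicit is the remark that the inverse of a local $\cU(S)$-map is again of the same type, which is why one obtains both a right-side and a left-side representation and can invoke the stronger version.
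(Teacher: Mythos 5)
Your proposal is correct and matches the paper's own argument exactly: the paper simply remarks, in the sentence preceding the theorem, that it follows "along with the fact that the inverse of any local $\cU(S)$-map is also a local $\cU(S)$-map immediately" from Theorem~\ref{T-ConstrLocGlob}, which is precisely the reduction you spell out. No new ideas are needed beyond noting injectivity and $\cU(S)\subseteq S$, both of which you correctly identify.
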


Constructibility of the ring~$R^{n\times n}$ and of the ring of diagonal matrices leads to the following.

\begin{cor}\label{C-GLDelta}
The general linear group $\glnr$ and the group $\Delta(n,R)$ of invertible diagonal matrices satisfy the local-global property.
\end{cor}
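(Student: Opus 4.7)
The plan is to deduce the corollary directly from Theorem~\ref{T-Constr}, which asserts the local-global property for $\cU(S)$ whenever $S\subseteq R^{n\times n}$ is constructible. So all that remains is to verify constructibility for two specific subrings, namely $R^{n\times n}$ itself and the subring $D\subseteq R^{n\times n}$ of diagonal matrices. Note that $\cU(R^{n\times n})=\glnr$ and $\cU(D)=\Delta(n,R)$ by definition.

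First I would dispatch $R^{n\times n}$: given any $A_1,\ldots,A_n\in \glnr$, the matrix $B=(A_1e_1\T,\ldots,A_ne_n\T)$ tautologically lies in $R^{n\times n}$, so $R^{n\times n}$ is constructible. Applying Theorem~\ref{T-Constr} yields the local-global property for $\glnr$.

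Next I would handle the diagonal case. Let $A_1,\ldots,A_n\in\Delta(n,R)$. Since each $A_i$ is diagonal, the column $A_ie_i\T$ has its (possibly) nonzero entry only in row~$i$. Therefore the matrix $B=(A_1e_1\T,\ldots,A_ne_n\T)$ has a nonzero entry only on the $(i,i)$-positions, i.e., $B\in D$. This shows $D$ is constructible, and Theorem~\ref{T-Constr} gives the local-global property for $\Delta(n,R)$.

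There is really no obstacle here; the genuine work was already done in Theorem~\ref{T-ConstrLocGlob} and Theorem~\ref{T-Constr}, which extracted the combinatorial condition (constructibility) that allows the column-by-column assembly argument of Theorem~\ref{T-ltnr} to produce a global transformation inside the desired subgroup. The corollary amounts to checking that the two target subrings satisfy this closure condition, and both checks are immediate from the shape of the column $A_ie_i\T$.
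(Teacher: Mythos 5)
Your proof is correct and follows exactly the approach the paper intends: the paper's sentence preceding the corollary ("Constructibility of the ring $R^{n\times n}$ and of the ring of diagonal matrices leads to the following") is precisely the observation you spell out, and your two constructibility checks are the right (and essentially trivial) verifications before invoking Theorem~\ref{T-Constr}.
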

We have seen already in Example~\ref{E-GLLocNonGlob} that the local-global property of~$\glnr$ does not hold in general if the
ring~$R$ is not Frobenius.

Further groups satisfying the local-global property are obtained with the following simple properties.

\begin{rem}\label{R-Intersection}
If $S_1,\,S_2\subseteq R^{n\times n}$ are constructible subrings, then so is $S_1\cap S_2$.
As a consequence, $\cU(S_1)\cap\cU(S_2)$ satisfies the local-global property.
\end{rem}

\begin{prop}\label{P-Conj}
Let $\cU$ be a subgroup of $\glnr$ that satisfies the local-global property.
Then each conjugate $\cU^P:=P\cU P^{-1}$, where $P\in\glnr$, also satisfies the local-global property.
\end{prop}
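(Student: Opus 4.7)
The plan is to reduce the local-global property of $\cU^P$ to that of $\cU$ by conjugating codes and maps through the matrix $P$. The key observation is that right-multiplication by $P$ is an $R$-module automorphism of $R^n$, so it carries codes to codes and linear maps to linear maps in a way compatible with the local $\cU^P$-structure.

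Concretely, let $f:\cC\longrightarrow R^n$ be a local $\cU^P$-map on some code $\cC\subseteq R^n$. Write each witnessing matrix as $M_x=PN_xP^{-1}$ with $N_x\in\cU$. Consider the code $\cC':=\cC P=\{xP\mid x\in\cC\}\subseteq R^n$, which is a left $R$-submodule because $P\in\glnr$, and define
\[
   g:\cC'\longrightarrow R^n,\quad g(xP):=f(x)P.
\]
I would first check that $g$ is well-defined (because $y\in\cC'$ uniquely determines $x=yP^{-1}\in\cC$) and $R$-linear (because multiplication by $P$ on the right commutes with left scalar multiplication and addition). The point is then the computation
\[
   g(xP)=f(x)P=xM_xP=xPN_xP^{-1}P=(xP)N_x,
\]
which exhibits $g$ as a local $\cU$-map on $\cC'$.

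By the assumed local-global property of $\cU$, there exists a universal matrix $N\in\cU$ such that $g(y)=yN$ for every $y\in\cC'$. Undoing the substitution $y=xP$ yields $f(x)P=xPN$, hence $f(x)=xPNP^{-1}$ for all $x\in\cC$. Since $PNP^{-1}\in\cU^P$, this shows that $f$ is a global $\cU^P$-map, completing the argument.

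There is essentially no obstacle here; the only thing that requires care is verifying that $g$ is a well-defined left $R$-linear map and that the bookkeeping $M_x=PN_xP^{-1}$ produces the correct local $\cU$-witness for $g$. Once that is in place, the result follows immediately from the local-global property of $\cU$.
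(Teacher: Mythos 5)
Your proof is correct and is essentially the same argument as the paper's: both define the auxiliary map $g(xP):=f(x)P$ on the transformed code $\cC P$, verify it is a local $\cU$-map via the identity $g(xP)=(xP)N_x$, invoke the local-global property of $\cU$ to get a global witness $N$, and conjugate back to obtain the global witness $PNP^{-1}\in\cU^P$ for $f$.
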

\begin{proof}
Let $f:\cC\longrightarrow\cC'$ be a local $\cU^P$-map between the codes $\cC,\,\cC'\subseteq R^n$.
Then for any $x\in \cC$ there is an $A_x\in\cU$ such that $f(x)=x(PA_xP^{-1})$.
It follows that $f(x)P=xPA_x$ for all $x\in\cC$.
Define the map $g:\cC P\longrightarrow \cC'P$ by $g(xP):=f(x)P$.
Notice that~$g$ is linear and satisfies $g(xP)=(xP)A_x$. Hence $g$ is a local $\cU$-map.
By the local-global property of~$\cU$ there exists a matrix $A\in\cU$ such that $g(xP)=xPA$ for all $x\in \cC$.
Now $f(x)=g(xP)P^{-1}=xPAP^{-1}$ for all $x\in\cC$, showing that~$f$ is a global $\cU^P$-map.
\end{proof}

The following result is immediate from the definition of constructibility.

\begin{theo}\label{T-triangular}
Let $S_i$ be constructible subrings of $R^{n_i\times n_i}$ for $i=1,\ldots,t$. Put $n:=n_1+\ldots + n_t$.
\begin{alphalist}
\item The subring $\mathrm{diag}(S_1,\ldots,S_t)$ of $R^{n\times n}$ is constructible.
       As a consequence, the subgroup of $\glnr$ consisting of all block diagonal matrices of the form
      $\mathrm{diag}(A_1,\ldots, A_t)$, where $A_i\in\cU(S_i)$, satisfies the local-global property.
\item  The subring of $R^{n\times n}$ consisting of all matrices of the form
      \begin{equation}\label{e-blocktriang}
	      \begin{pmatrix}
	      A_{11}&&&\\
	      A_{21}&A_{22}&&\\
	      \vdots&\vdots&\ddots\\
	      A_{t1}&A_{t2}&\cdots&A_{tt}
	      \end{pmatrix},
      \end{equation}
      where $A_{ii}\in S_i$  and $A_{ij}$ is any matrix in $R^{n_i\times n_j}$, is constructible, and thus
      the subgroup of $R^{n\times n}$ consisting of all lower block triangular matrices as
      in~\eqref{e-blocktriang} and where $A_{ii}\in\cU(S_i)$ satisfies the local-global property.
\end{alphalist}
\end{theo}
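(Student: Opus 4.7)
The plan is to reduce both parts to constructibility of the relevant subring and then invoke Theorem~\ref{T-Constr}. The subring property is routine: block-diagonal (respectively lower block triangular) matrices are closed under addition and multiplication, with the $(i,i)$-block of a product being the product of the $(i,i)$-blocks, which remains in $S_i$; the identity matrix has the required shape with $I_{n_i}\in S_i$ on the diagonal. Hence in either case we have a subring~$S\subseteq R^{n\times n}$, and it remains to check that~$S$ is constructible in the sense of Definition~\ref{D-Constr}.

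For part~(a), write $S=\mathrm{diag}(S_1,\ldots,S_t)$ and fix $A_1,\ldots,A_n\in\cU(S)$. First I would observe that if $A=\mathrm{diag}(A^{(1)},\ldots,A^{(t)})\in S$ is invertible in $S$ with inverse $B=\mathrm{diag}(B^{(1)},\ldots,B^{(t)})\in S$, then block multiplication gives $A^{(k)}B^{(k)}=B^{(k)}A^{(k)}=I_{n_k}$, so each diagonal block of~$A_j$ lies in $\cU(S_k)$. Now form $B=(A_1 e_1\T,\ldots,A_n e_n\T)$. Since each $A_j$ is block-diagonal, the $i$-th column of $A_j$ is supported entirely in the block containing index~$i$, so $B$ is block-diagonal. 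The $k$-th diagonal block of $B$ is assembled from the columns of the $k$-th diagonal blocks of the relevant $A_j$'s by precisely the constructibility recipe for~$S_k$, applied to $n_k$ elements of $\cU(S_k)$. Constructibility of~$S_k$ yields that this diagonal block lies in~$S_k$, and hence $B\in S$.

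For part~(b) the argument is essentially the same. Given $A_1,\ldots,A_n\in\cU(S)$ with $S$ the subring in~\eqref{e-blocktriang}, the inverse of any $A_j$ in~$S$ is again lower block triangular, so comparing diagonal blocks of $A_j B_j=B_j A_j=I$ forces the $(k,k)$-block of $A_j$ into $\cU(S_k)$. The $i$-th column of $A_j$ is zero above the block of index~$i$, has its in-block entries drawn from the $(k,k)$-block of $A_j$, and is arbitrary in the blocks strictly below. Assembling these columns yields a lower block triangular $B$ whose $k$-th diagonal block is produced by the constructibility recipe for~$S_k$ from $n_k$ elements of $\cU(S_k)$, hence lies in~$S_k$, while the strictly lower off-diagonal blocks are unconstrained elements of $R^{n_k\times n_l}$. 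Thus $B$ has the shape in~\eqref{e-blocktriang}, so $B\in S$.

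The only step that is not completely formal is the transfer of invertibility from~$S$ to the diagonal blocks, but the shape constraint on~$S$ forces the inverse (when it exists in~$S$) to share the shape, and then the block-wise identities can be read off directly; this is the place where it is essential that constructibility is phrased in terms of $\cU(S)$ rather than merely invertibility in $R^{n\times n}$. Once both subrings are constructible, Theorem~\ref{T-Constr} immediately delivers the local-global property of $\cU(S)$ in each case, which in the block-diagonal setting coincides with the described group of matrices $\mathrm{diag}(A_1,\ldots,A_t)$ with $A_i\in\cU(S_i)$.
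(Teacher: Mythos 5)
Your proof is correct and is essentially the argument the paper has in mind; the authors state the result is ``immediate from the definition of constructibility'' and you have simply spelled out the verification. The key observation in both parts is that the $i$-th column of $A_i$ is supported on block $k(i)$ and below, and that the diagonal block of an element of $\cU(S)$ lies in $\cU(S_k)$ (which you justify correctly by comparing diagonal blocks of $A_jB_j=B_jA_j=I$, using that $B_j\in S$), so the constructibility recipe for each $S_k$ applies blockwise.
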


While part~(a) is an obvious, and actually not very helpful result (see also Remark~\ref{R-ProdWeight} in the next section), we will encounter
an interesting and non-trivial application of part~(b) in Section~\ref{SS-Poset}.

\medskip
We have not yet addressed the local-global property of the most prominent group in the area of MacWilliams extension theorems: the group of monomial matrices.
Since this group is not the group of units of a constructible ring, the local-global property does not follow from the preceding
considerations.
Even though the MacWilliams extension theorem is well-known for monomial matrices~\cite{Wo97}, we think it is worthwhile to briefly
discuss this case in the language of this paper.

Let $U\leq R^\times$ be a subgroup of the multiplicative group of~$R$ and let $\MonUR\leq\GL(n,R)$ be as in Definition~\ref{D-MonMat}.
The local $\MonUR$-maps can be described in the following way.
Let $\cP_U=(P_\ell)_{\ell=1}^t$, that is, $P_1,\,\ldots,P_t$ are the distinct orbits of the right action~$\rho_r$ of~$U$ on~$R$ as in Proposition~\ref{P-Uorbits}.
For $x\in R^n$ define the composition vector as $\text{comp}_{U}(x):=(s_1,\ldots,s_t)$, where $s_\ell=|\{i\mid x_i\in P_\ell\}|$.
Then a linear map~$f:\cC\longrightarrow R^n$ is a local $\MonUR$-map if and only if it is $\text{comp}_{U}$-preserving, that is,
$\text{comp}_{U}\big(f(x)\big)=\text{comp}_{U}(x)$ for all $x\in\cC$.

For instance, if $U$ is the trivial subgroup $U=\{1\}$,
then $\text{comp}_{U}(x)$ is the \emph{complete weight} of~$x$, defined as
$\text{cw}(x)=(s_\alpha\mid \alpha \in R)$, where $s_\alpha=|\{i\mid x_i=\alpha\}|$; see for instance~\cite[p.~142]{MS77}.
On the other hand, if $R$ is a field and $U=R^\times$, then $\text{comp}_{U}(x)$ is, up to notation, the Hamming weight of~$x$.
Thus, the $\text{comp}_{U}$-preserving maps are exactly the Hamming weight-preserving maps.
It is well known, but not obvious, that the same is true for Frobenius rings; see Wood~\cite[Thm.~6.3]{Wo99}.
We will encounter that result in Theorem~\ref{T-MacWExtHamm}.

The following theorem has been coined the MacWilliams extension theorem for $\text{comp}_{U}$-pre\-ser\-ving maps.
In that terminology it has been shown by Wood~\cite[Thm.~10]{Wo97}; see also Theorem~\ref{T-Goldberg} by Goldberg.
For later use we provide a proof in our language.
It is slightly shorter than the one in~\cite{Wo97} because our approach allows us to circumvent the argument based on
averaging characters needed in~\cite[Thm.~10]{Wo97}.

\begin{theo}\label{T-MonUR}
For any subgroup $U\leq R^\times$ the group $\MonUR$ satisfies the local-global property.
\end{theo}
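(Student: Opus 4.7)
The plan follows the strategy used in the proof of Theorem~\ref{T-ltnr}: build the desired global monomial matrix one column at a time by invoking Lemma~\ref{L-technical}, and then verify that the resulting matrix can be arranged to lie in $\cU := \MonUR$. Let $f : \cC \to R^n$ be a local $\cU$-map.

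Apply Lemma~\ref{L-technical} with $z = e_j$, for each $j = 1, \ldots, n$, to obtain matrices $A_j \in \cU$ satisfying $f(x)_j = x A_j e_j\T$ for all $x \in \cC$. Since $A_j$ is monomial with entries in $U$, the column $A_j e_j\T$ equals $u_j e_{\sigma(j)}$ for some $u_j \in U$ and index $\sigma(j) \in \{1, \ldots, n\}$; hence $f(x)_j = u_j x_{\sigma(j)}$ on $\cC$. Assembling these columns yields a matrix $M$ with $f(x) = xM$ for every $x \in \cC$, and $M$ belongs to $\cU$ precisely when $\sigma$ is a permutation of $\{1, \ldots, n\}$.

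The main obstacle is to secure this permutation property, since the data $(u_j, \sigma(j))$ produced by Lemma~\ref{L-technical} is not unique. Because $\cU$ is a group, the inverse $f^{-1} : f(\cC) \to R^n$ is also a local $\cU$-map; applying Lemma~\ref{L-technical} to $f^{-1}$ symmetrically yields indices $\tau(i) \in \{1, \ldots, n\}$ and units $v_i \in U$ with $f^{-1}(y)_i = v_i y_{\tau(i)}$ on $f(\cC)$. Substituting $y = f(x)$ gives the compatibility identity $\pi_i|_\cC = v_i u_{\tau(i)} \pi_{\sigma(\tau(i))}|_\cC$, where $\pi_k : R^n \to R$ is the $k$-th coordinate projection. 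I would then partition the source indices into equivalence classes $C_1, \ldots, C_r$ under the relation $\pi_i|_\cC \in U \cdot \pi_{i'}|_\cC$, and the target indices into analogous classes $D_1, \ldots, D_r$ via the functionals $f_j|_\cC$; the bipartite graph of admissible matches splits into complete bipartite components pairing each $C_k$ with a unique $D_k$. Using $\text{comp}_U$-preservation of $f$ together with the fact that the coordinates within any source (or target) equivalence class lie in a common $U$-orbit of $R$ or are uniformly zero, one concludes that $|C_k| = |D_k|$ for every paired component. Hall's marriage theorem then delivers a bijection $\sigma$ compatible with the graph, completing the construction of a monomial matrix $M \in \cU$ with $f(x) = xM$ on $\cC$.
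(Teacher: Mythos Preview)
Your overall strategy is sound and close in spirit to the paper's, but there is a genuine gap at the step where you assert $|C_k|=|D_k|$. Your justification appeals to $\text{comp}_U$-preservation together with the observation that coordinates within one class lie in a common $U$-orbit of~$R$. The difficulty is that for a fixed $x\in\cC$, coordinates belonging to \emph{different} classes $C_k,\,C_{k'}$ may land in the \emph{same} $U$-orbit of~$R$; consequently $\text{comp}_U$-preservation only yields, for each $x$ and each $U$-orbit $P\subseteq R$,
\[
   \sum_{k:\ \phi_k(x)\in P}|C_k|\;=\;\sum_{k:\ \phi_k(x)\in P}|D_k|,
\]
and it is not at all clear that varying~$x$ produces enough independent constraints to isolate each~$|C_k|$. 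Making this rigorous amounts to showing that distinct $U$-orbits of functionals in $\text{Hom}_R(\cC,R)$ can be fully distinguished by the $U$-orbit pattern of their values on~$\cC$ --- but that is essentially the content of the theorem itself. The two-sided data $\sigma,\tau$ you extract only gives maps $D_k\to C_k$ and $C_k\to D_k$, not a size comparison.

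The paper closes this gap by using the \emph{first} part of Lemma~\ref{L-technical} (the character-sum identity~\eqref{e-sumQ}) rather than only its second part. With $Q=\{\alpha e_i\mid \alpha\in U,\ i=1,\dots,n\}$ one has
\[
   \sum_{y\in Q}\chi(\inner{f(-),y})=\sum_{y\in Q}\chi(\inner{-,y})
\]
as an equality of sums of characters on~$\cC$. Linear independence (Proposition~\ref{P-LinIndepChar}) matches $\chi\circ f_1$ with some $\chi\circ(\pi_{\tau(1)}\!\cdot\!\alpha_1)$, whence $f_1=\pi_{\tau(1)}\!\cdot\!\alpha_1$ by Proposition~\ref{P-GenChar}(2). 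The decisive step is then an \emph{elimination}: the entire $U$-packet $\{\chi(f_1\!\cdot\!\alpha):\alpha\in U\}$ on the left equals the packet $\{\chi(\pi_{\tau(1)}\!\cdot\!\alpha):\alpha\in U\}$ on the right, so both may be removed, leaving a shorter identity of the same shape. Iterating forces~$\tau$ to be injective. In your language this argument directly establishes the multiset equality
$\{\!\{f_j\!\cdot\!\alpha: j,\,\alpha\in U\}\!\}=\{\!\{\pi_i\!\cdot\!\alpha: i,\,\alpha\in U\}\!\}$
in $\text{Hom}_R(\cC,R)$, which immediately gives $|C_k|=|D_k|$ --- after which Hall's theorem is unnecessary, since any bijection $C_k\to D_k$ for each~$k$ furnishes the permutation.
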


\begin{proof}
Let $\cU:=\MonUR$ and $f:\cC\longrightarrow R^n$ be a local $\cU$-map.
The group $\text{Hom}_R(\cC,R)$ of left $R$-linear maps is a right $R$-module by defining
$(f\!\cdot\! \alpha)(x)=f(x)\alpha$ for all $x\in\cC$.
Denote by $f_1,\ldots,f_n$ the coordinate functions of~$f$ and by $\pi_1,\ldots,\pi_n$ the projections of~$R^n$ on its components.
Then $f_i,\,\pi_i\in\text{Hom}_R(\cC,R)$.
We have to show that there exist $\alpha_1,\ldots,\alpha_n\in U$ and a permutation $\tau\in S_n$ such that
$f_i=\pi_{\tau(i)}\!\cdot\!\alpha_i,\,i=1,\ldots,n$.

Let $Q$ be the block of $\widehat{\cP_{\cU}}^{\scriptscriptstyle[\chi,l]}=\cP_{\cU^{\sf T}}$ that contains the standard basis vector~$e_1$.
Then the very definition of~$\cP_{\cU^{\sf T}}$ yields $Q=\{\alpha e_i\mid i=1,\ldots,n,\,\alpha\in U\}$.
Lemma~\ref{L-technical} yields
\begin{equation}\label{e-sumoverP}
    \sum_{y\in Q}\chi(\langle f(-),y\rangle)=\sum_{y\in Q}\chi(\langle -,y\rangle)
\end{equation}
as sums of characters on~$\cC$.
Thanks to Proposition~\ref{P-LinIndepChar}, the character $\chi(\inner{f(-),e_1})$ must appear on the right hand side
of~\eqref{e-sumoverP}.
Thus there exists some $\alpha_1\in U$ and $e_{\tau(1)}$ such that $\chi(\inner{f(-),e_1})=\chi(\inner{-,\alpha_1e_{\tau(1)}})$
as characters on~$\cC$, and Proposition~\ref{P-GenChar}(2) yields that the linear maps $\inner{f(-),e_1}$ and
$\inner{-,\alpha_1e_{\tau(1)}}$ coincide on~$\cC$.
In other words $f_1= \pi_{\tau(1)}\!\cdot\!\alpha_1$.

Next, the set~$Q$ contains the sets $Q_i:=\{\alpha e_i\mid \alpha \in U\}$ for each $i=1,\ldots,n$.
Moreover,
$ \sum_{y\in Q_1}\chi(\inner{f(-),y})=\sum_{\alpha \in U} \chi(\inner{f(-), \alpha e_1})
  =\sum_{\alpha \in U} \chi(\inner{f(-), e_1} \alpha)
   =\sum_{\alpha \in U} \chi(\inner{-,\alpha_1 e_{\tau(1)}}\alpha )
   =\sum_{y\in Q_{\tau(1)}}\chi(\inner{-,y})$.
Consequently, the identity~\eqref{e-sumoverP} can be reduced to
\begin{equation}\label{e-sumoverPP}
    \sum_{y\in Q\backslash Q_1}\chi(\inner{f(-),y})=\sum_{y\in Q\backslash Q_{\tau(1)}}\chi(\inner{-,y}).
\end{equation}
In other words, we eliminated all scalar multiples of $e_1$ and $e_{\tau(1)}$ on the left and right hand side
of~\eqref{e-sumoverP}, respectively.
Now we may repeat the argument with the character $\chi(\inner{f(-),e_2})$ appearing on the left hand side
of~\eqref{e-sumoverPP}.
Again, it must appear on the right hand side, and as above this means that there exists some $\alpha_2\in U$ and $\tau(2)\neq\tau(1)$ such that 
$f_2=\inner{-,\alpha_2 e_{\tau(2)}}=\pi_{\tau(2)}\!\cdot\!\alpha_2$.

Continuing in this fashion, we obtain the desired result.
\end{proof}

\section{Weight-Preserving Maps and MacWilliams Extension Theorems}\label{SS-Goldberg}
This section is devoted to establishing MacWilliams extension theorems for various isometries.
More precisely, we address Question~\ref{Q-Goldberg} and show that for certain weight functions, the weight-preserving
isomorphisms between codes in~$R^n$ are given by global $\cU$-maps for a suitable group~$\cU\leq\GL(n,R)$, and thus extend to
weight-preserving isomorphisms on~$R^n$.

As before, let~$R$ be a finite Frobenius ring.

\medskip
We begin with the Rosenbloom-Tsfasman weight (RT-weight) on $R^n$, see~\cite{RoTs97}.
For a vector $x=(x_1,\ldots,x_n)\in R^n$  the RT-weight is defined as
\begin{equation}\label{e-RTweight}
  \wtRT(x):=\left\{\begin{array}{cl}0,&x=0,\\ \max\{i\mid x_i\neq 0\},&\text{otherwise.}\end{array}\right.
\end{equation}
It is not hard to see that the distance between two vectors $x,y$ defined as ${\rm d}_{\rm RT}(x,y):=\wtRT(x-y)$ is a metric
on $R^n$ (see also \cite[Lem.~1.1]{BGL95}).

The RT-metric is, just like the Hamming metric, a special case of a poset metric.
The latter will be studied in greater generality in the next section where also some background is provided.

Evidently, the RT-weight is constant on the $\ltnr$-orbits (but vectors with the same RT-weight need not be in the same orbit,
e.g., $(1,0),\,(2,0)\in\Z_4^2$).
In the following theorem we show that every $\wtRT$-preserving isomorphism $f:\cC\longrightarrow R^n$ is a global $\ltnr$-map.
For the case where $\cC=\F^n$ the result appears as a special case in~\cite[Thm.~1]{Lee03}; see also Remark~\ref{R-CumRT} below.

\begin{theo}\label{T-rho}
Let $f:\cC\longrightarrow R^n$ be a linear $\wtRT$-preserving map on the code $\cC\subseteq R^n$, thus $\wtRT(f(x))=\wtRT(x)$ for all $x\in\cC$.
Then $f$ is a global $\ltnr$-map.
As a consequence, the $\wtRT$-preserving maps satisfy the MacWilliams extension theorem, and
the group of isometries of the metric space $(R^n,{\rm d}_{\rm RT})$ is given by $\ltnr$.
\end{theo}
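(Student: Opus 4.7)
The plan is to reduce everything to the local--global property of $\ltnr$ already established in Theorem~\ref{T-ltnr}: it suffices to show that every linear $\wtRT$-preserving map $f:\cC\longrightarrow R^n$ is automatically a local $\ltnr$-map. Once this is done, the fact that $f$ is a global $\ltnr$-map is Theorem~\ref{T-ltnr}, the MacWilliams extension property is immediate from Definition~\ref{D-LocGlobMap}, and the description of the isometry group follows because $\wtRT$ is preserved by every matrix in $\ltnr$ by a direct computation, while any $R$-linear $\wtRT$-isometry of $R^n$ is in particular $\wtRT$-preserving (set the origin as the base point) and thus a global $\ltnr$-map by the first part.

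The main obstacle is the local claim, since, in contrast to the field case, two vectors of the same RT-weight need not lie in the same $\ltnr$-orbit over a non-field Frobenius ring; for instance $(1,0)$ and $(2,0)$ both have RT-weight~$1$ in $\Z_4^2$ but no matrix in $\ltnr$ over~$\Z_4$ sends one to the other. The $R$-linearity of~$f$ must therefore be used essentially. I would fix $x\in\cC$ with $\wtRT(x)=k$, set $y:=f(x)$ (so $\wtRT(y)=k$), and attach to~$x$ and~$y$ the nested families of right ideals
\[
    I_j := x_j R+\cdots+x_k R,\qquad J_j := y_j R+\cdots+y_k R,\qquad j=1,\ldots,k+1,
\]
with $I_{k+1}=J_{k+1}=0$. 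Since $\cC$ is a left $R$-module and $f(rx)=rf(x)$ preserves $\wtRT$, an unfolding of the definition shows that $\wtRT(rx)<j$ is equivalent to $r\in\mathrm{ann}_l(I_j)$, and analogously for~$y$; thus $\mathrm{ann}_l(I_j)=\mathrm{ann}_l(J_j)$ for every~$j$. The double-annihilator identity over finite Frobenius rings (Remark~\ref{R-FrobAnn}) then upgrades this to $I_j=J_j$.

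With this in hand, the construction of an $M\in\ltnr$ satisfying $xM=y$ becomes mechanical. Because $I_j=x_jR+I_{j+1}=y_jR+I_{j+1}$, the residues $\bar x_j$ and $\bar y_j$ both generate the cyclic right $R$-module $I_j/I_{j+1}$, so Lemma~\ref{L-CyclicMod} furnishes a unit $\alpha_j\in R^\times$ with $y_j\equiv x_j\alpha_j\pmod{I_{j+1}}$; expanding the congruence produces elements $\beta_{i,j}\in R$ with $y_j=x_j\alpha_j+\sum_{i=j+1}^{k}x_i\beta_{i,j}$. Taking $M_{jj}:=\alpha_j$ for $j\leq k$, $M_{ij}:=\beta_{i,j}$ for $j<i\leq k$, $M_{jj}:=1$ for $j>k$, and all other strictly sub-diagonal entries equal to~$0$ yields a lower triangular matrix $M\in\ltnr$ for which a direct verification of $(xM)_j$ in the two cases $j\leq k$ and $j>k$ gives $xM=y$. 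This establishes that $f$ is a local $\ltnr$-map, and Theorem~\ref{T-ltnr} then completes the argument.
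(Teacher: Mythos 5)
Your proof is correct, and while it lands on the same local--global scaffolding (Theorem~\ref{T-ltnr}), it reaches the \emph{local} claim by a genuinely different route than the paper.

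The paper does not show directly that $f$ is a local $\ltnr$-map. Instead it verifies the hypothesis of Theorem~\ref{T-ConstrLocGlob}: for each $x\in\cC$ there is a lower triangular but \emph{not necessarily invertible} $A_x\in S$ (where $S$ is the ring of all lower triangular matrices) with $f(x)=xA_x$, and symmetrically for $f^{-1}$. The key step is a one-sided containment $y_jR\subseteq x_jR+\cdots+x_nR$ obtained by the double annihilator identity; invertibility is then supplied once, inside Theorem~\ref{T-ConstrLocGlob}, via a single application of Lemma~\ref{L-CyclicMod} to the right $S$-modules $xS=yS$. Your argument instead upgrades the containment to the two-sided equalities $I_j=J_j$ of the full filtration (which the paper effectively proves in two halves), and then extracts units on the diagonal by applying Lemma~\ref{L-CyclicMod} level by level to the quotients $I_j/I_{j+1}$, assembling an explicit $M\in\ltnr$ with $xM=y$. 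What the paper's route buys is economy and reusability: the constructible-ring framework is set up once and invoked for several theorems, and only the weak existence of a matrix in $S$ need be checked at this point. What your route buys is transparency: it makes the local $\ltnr$-property visible directly, with a concrete witness matrix built from the filtration, and it does not need to invoke the symmetric statement about $f^{-1}$ in order to obtain units. Both ultimately lean on Lemma~\ref{L-technical} through the local-to-global step, so the difference is in how the local claim is certified, not in the overall architecture.

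One small point worth being explicit about in a write-up: the equivalence $\wtRT(rx)<j \Leftrightarrow r\in\mathrm{ann}_l(I_j)$ uses that $x_i=0$ for $i>k$, and passing from $\mathrm{ann}_l(I_j)=\mathrm{ann}_l(J_j)$ to $I_j=J_j$ is exactly Remark~\ref{R-FrobAnn} applied to the right ideals $I_j,J_j$. You state both correctly, but they are the load-bearing steps.
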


The last statement, concerning the isometry group of $(R^n,{\rm d}_{\rm RT})$, can also easily be seen by considering the standard basis of~$R^n$.

The examples in~\ref{E-GLLocNonGlob} and~\ref{E-NonFrobExa} show that the result is not true for non-Frobenius rings.
In that case a $\wtRT$-preserving map may not even be a \emph{local} $\ltnr$-map.

In the next section we will provide a much more general result establishing the extension property of certain poset-weight-preserving isomorphisms.
Since its proof will be considerably more elaborate, we think it is worth presenting a direct proof of the above case.

\begin{proof}[Proof of Theorem~\ref{T-rho}]
Let $f:\cC\longrightarrow R^n$ be a $\wtRT$-isometry.
Denote the ring of all lower triangular matrices by $S\subseteq R^{n\times n}$.
Then~$S$ is constructible and $\ltnr=\cU(S)$, the group of invertible matrices in~$S$.\footnote{We thank Jay Wood for pointing out the
interesting fact that~$S$ is not a Frobenius ring; see for instance~\cite[Ex.~(15.26)]{Lam99}.}
Note that both~$f$ and~$f^{-1}$ (defined on the image of~$f$) are $\wtRT$-isometries.
Therefore, using symmetry and Theorem~\ref{T-ConstrLocGlob}, it suffices to show that for every $x\in\cC$
there exists a matrix $A_x\in S$ such that $f(x)=xA_x$.

Fix $x\in\cC$ and let $y=f(x)$.
The existence of a matrix~$A_x$ is equivalent to the existence of elements
$a_{ij}\in R,\,i\geq j,$ such that $y_j=\sum_{i=j}^{n} x_ia_{ij}$.
This is equivalent to showing that $y_j\in x_jR+\ldots +x_nR$ for all~$j$.
We make use of annihilator ideals.
Let $r \in \mathrm{ann}_l(x_jR+\ldots +x_nR)$. Then $r x_i=0$ for $i\geq j$ and therefore the vector $rx$ satisfies
 $\wtRT(r x)< j$.
Thus $\wtRT(f(rx))=\wtRT(r y)< j$.
In particular $r y_j=0$.
All of this shows that $\mathrm{ann}_l(x_jR+\ldots+x_nR)\subset \mathrm{ann}_l(y_jR)$.
Now the double annihilator property of ideals, see Remark~\ref{R-FrobAnn}, yields
\[
  y_jR=\mathrm{ann}_r(\mathrm{ann}_l(y_jR))\subset \mathrm{ann}_r(\mathrm{ann}_l(x_jR+\ldots +x_nR))
         =x_jR+\ldots + x_nR,
\]
and therefore $y_j\in x_jR+\ldots +x_nR$.
All of this establishes the existence of a matrix~$A_x\in S$ such that $f(x)=xA_x$.
This concludes the proof.
\end{proof}

\begin{rem}\label{R-CumRT}
In~\cite[Thm.~1]{Lee03} Lee considers the Niederreiter-Rosenbloom-Tsfasman (NRT) space $\F^{t\times n}$, where for any matrix $M\in\F^{t\times n}$
the NRT-weight is defined as $\wtNRT(M)=\sum_{i=1}^t\wtRT(M_i)$, where $M_i$ is the $i$-th row of~$M$.
Lee proved that the group of NRT-weight-isometries on $\F^{t\times n}$ is given by a semidirect product of $(\ltnr)^t$ and the permutation group~$S_t$.
It is easy to see that NRT-weight-preserving maps do not satisfy the MacWilliams extension theorem:
consider the two binary codes
\[
  \cC=\Big\{\begin{pmatrix}0&0\\0&0\end{pmatrix},\,\begin{pmatrix}1&0\\1&0\end{pmatrix}\Big\},\
  \cC'=\Big\{\begin{pmatrix}0&0\\0&0\end{pmatrix},\,\begin{pmatrix}0&0\\0&1\end{pmatrix}\Big\}\subseteq\F_2^{2\times2}.
\]
The (unique) linear isomorphism between these codes is NRT-weight-preserving, but cannot be extended to an NRT-isometry on $\F_2^{2\times2}$.
This follows directly from linearity along with the fact that the set $\big\{\Smallfourmat{1}{0}{0}{0},\,\Smallfourmat{0}{0}{1}{0}\big\}$ has to be
mapped to itself under any NRT-isometry.
We will come back to this situation in Example~\ref{E-PosetNonExt} in the context of poset weights.
\end{rem}

In a similar way as in Theorem~\ref{T-rho}, we can deal with support-preserving maps.
For a vector $x\in R^n$ let $\supp(x)=\{i\mid x_i\neq0\}$.
Clearly the support is constant for all~$x$ in the same $\Delta(n,R)$-orbit,
where $\Delta(n,R)$ is the group of invertible diagonal matrices over~$R$.
On the other hand, vectors with the same support need not be in the same  $\Delta(n,R)$-orbit.
Now we can prove the MacWilliams extension theorem for support-preserving maps.
Again, Examples~\ref{E-GLLocNonGlob} and~\ref{E-NonFrobExa} show that it is not true for non-Frobenius rings.

\begin{theo}\label{T-supp}
Let $f:\cC\longrightarrow \cC'$ be a support-preserving linear isomorphism, i.e., $\supp(f(x))=\supp(x)$ for all $x\in\cC$.
Then~$f$ is a global $\Delta(n,R)$-map.
In particular, the group of support-preserving isomorphisms on~$R^n$ is given by $\Delta(n,R)$.
\end{theo}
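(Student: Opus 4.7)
The plan is to mimic the proof of Theorem~\ref{T-rho} within the constructible-ring framework. Let $S\subseteq R^{n\times n}$ be the subring of diagonal matrices. A direct check shows $S$ is constructible in the sense of Definition~\ref{D-Constr}: for any $A_1,\ldots,A_n\in\cU(S)=\Delta(n,R)$ the $i$-th column of $A_i$ is a scalar multiple of $e_i\T$, so the assembled matrix $(A_1e_1\T,\ldots,A_ne_n\T)$ is again diagonal. Since the inverse of a support-preserving isomorphism is again support-preserving, Theorem~\ref{T-ConstrLocGlob} reduces the task to showing that for every $x\in\cC$ there exists some $A_x\in S$ with $f(x)=xA_x$, i.e., that $f(x)_i\in x_iR$ for all $i=1,\ldots,n$ (and symmetrically for $f^{-1}$).

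To establish this coordinate-wise containment I would argue via annihilators, just as in the RT-case. Fix $x\in\cC$, $i\in\{1,\ldots,n\}$ and set $y:=f(x)$. Let $r\in\mathrm{ann}_l(x_iR)$, so that $rx_i=0$ and hence $i\notin\supp(rx)$. By $R$-linearity of~$f$ and the support-preserving hypothesis,
\[
  \supp(ry)=\supp(rf(x))=\supp(f(rx))=\supp(rx),
\]
so $i\notin\supp(ry)$, i.e.\ $ry_i=0$. This shows $\mathrm{ann}_l(x_iR)\subseteq\mathrm{ann}_l(y_iR)$, and the Frobenius double-annihilator property of Remark~\ref{R-FrobAnn} gives
\[
  y_iR=\mathrm{ann}_r(\mathrm{ann}_l(y_iR))\subseteq\mathrm{ann}_r(\mathrm{ann}_l(x_iR))=x_iR,
\]
as required. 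Running the same argument for $f^{-1}$ and invoking Theorem~\ref{T-ConstrLocGlob}, I obtain that $f$ is a global $\Delta(n,R)$-map.

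For the second assertion, taking $\cC=R^n=\cC'$ in the above shows that every support-preserving isomorphism of $R^n$ is induced by some $A\in\Delta(n,R)$; the reverse inclusion is trivial because any $\Delta(n,R)$-map is visibly support-preserving. Hence the isometry group is exactly $\Delta(n,R)$.

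I do not anticipate any serious obstacle here: constructibility of the diagonal ring is immediate, and the annihilator argument is a direct analogue of the corresponding step in Theorem~\ref{T-rho}, and in fact slightly simpler because the single coordinate $x_i$ replaces the ``tail'' $x_jR+\cdots+x_nR$. The subtle point worth flagging is that both the double-annihilator identity and Theorem~\ref{T-ConstrLocGlob} rely essentially on the Frobenius hypothesis: the non-Frobenius Examples~\ref{E-GLLocNonGlob} and~\ref{E-NonFrobExa} already exhibit support-preserving isomorphisms that are not even local $\Delta(n,R)$-maps, so the reduction ``$f(x)_i\in x_iR$'' genuinely needs $R$ to be Frobenius.
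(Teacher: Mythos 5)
Your proof is correct and uses the same core ingredients as the paper (the annihilator argument, the Frobenius double-annihilator property of Remark~\ref{R-FrobAnn}, and the constructibility machinery), but the exact route differs slightly. The paper notices immediately that support preservation gives the \emph{equivalence} $\alpha x_j=0\Leftrightarrow\alpha f_j(x)=0$, hence the equality $\mathrm{ann}_l(x_jR)=\mathrm{ann}_l(f_j(x)R)$, so that double annihilation yields $x_jR=f_j(x)R$; it then invokes Lemma~\ref{L-CyclicMod} to obtain a \emph{unit} $\alpha_j$ with $f_j(x)=x_j\alpha_j$, concluding that $f$ is a local $\Delta(n,R)$-map and appealing to the already-established local-global property of $\Delta(n,R)$ (Corollary~\ref{C-GLDelta}). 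You instead derive only the one-sided containment $\mathrm{ann}_l(x_iR)\subseteq\mathrm{ann}_l(y_iR)$, hence $y_iR\subseteq x_iR$, giving a possibly non-invertible diagonal $A_x$ with $f(x)=xA_x$, and repeat this symmetrically for $f^{-1}$ before invoking Theorem~\ref{T-ConstrLocGlob}. Your version buys a small saving in that you never need to exhibit units coordinate-by-coordinate, at the cost of routing through the more technical Theorem~\ref{T-ConstrLocGlob}; the paper's version is a touch more direct because it establishes locality (with invertible matrices) first. Both are valid, and your observation that the argument parallels but simplifies the RT-case is accurate.
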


\begin{proof}
Again we make use of the double annihilator property of $R$.
Let $x\in\cC$ and fix $j\in\{1,\ldots,n\}$.
Denote by~$f_j$ the $j$-th coordinate function of~$f$.
Since $f$ is linear and preserves the support we have $\alpha x_j=0$ if and only if $\alpha f_j(x)=0$
for all $\alpha\in R$.
This implies $\mathrm{ann}_l(x_jR)=\mathrm{ann}_l(f_j(x)R)$,
and with the aid of Remark~\ref{R-FrobAnn} we obtain $x_jR=f_j(x)R$.
Hence Lemma~\ref{L-CyclicMod} yields the existence of $\alpha_j\in R^\times$ such that $f_j(x)= x_j\alpha_j$.
Since~$j$ is arbitrary, all of this shows that $f(x)=x\cdot\mathrm{diag}(\alpha_1,\ldots, \alpha_n)$.
Thus~$f$ is a local $\Delta(n,R)$-map, and Corollary~\ref{C-GLDelta}  concludes the proof.
\end{proof}

Now we turn to the group $\glnr$. The following result establishes a criterion for when an isomorphism $f:\cC\longrightarrow \cC'$
can be extended to an isomorphism on $R^n$.
The reader may wish to compare this situation with that of injective modules.
The ring~$R$, being a finite Frobenius ring, is an injective (left) $R$-module~\cite[Thm.~(15.1)]{Lam99}, and hence
so is~$R^n$~\cite[Prop.~(3.4)]{Lam99}.
Therefore, by definition of injectivity, every linear map $f:\cC\longrightarrow R^n$ can be extended to a linear map
$\hat{f}: R^n\longrightarrow R^n$.
The criterion below characterizes the case where there exists even an \emph{isomorphism}~$\hat{f}$ extending~$f$.

\begin{theo}\label{T-idealpreserving}
Let $f:\cC\longrightarrow \cC'$ be a linear map between codes $\cC,\,\cC'\subseteq R^n$.
Then the following are equivalent.
\begin{alphalist}
\item $f$ is a global $\glnr$-map. In other words,~$f$ can be extended to an isomorphism on~$R^n$.
\item  For all $x\in\cC$ and $y=f(x)\in\cC'$, the right ideals in~$R$ generated by the entries of~$x$ and~$y$,
         respectively, coincide, i.e., $\sum_{i=1}^n y_iR=\sum_{i=1}^n x_iR$.
\end{alphalist}
As a consequence, the linear right-ideal-preserving maps satisfy the MacWilliams extension theorem, and
the group of right-ideal-preserving isomorphisms on~$R^n$ is given by $\glnr$.
\end{theo}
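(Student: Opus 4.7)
The plan is to establish $(a)\Leftrightarrow(b)$ and then deduce the assertion about right-ideal-preserving isomorphisms. The forward implication $(a)\Rightarrow(b)$ should be routine: writing $f(x)=xA$ with $A\in\glnr$, each coordinate $y_j=\sum_i x_iA_{ij}$ lies in $\sum_i x_iR$, and applying the same argument with $A^{-1}$ (so that $x=yA^{-1}$) yields the reverse inclusion.

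The substantive direction is $(b)\Rightarrow(a)$. The key idea is to regard $R^n$ as a right module over the matrix ring $S:=R^{n\times n}$. I claim that for any row vector $x\in R^n$ the cyclic right $S$-submodule $xS$ equals $\bigl(\sum_{i=1}^n x_iR\bigr)^n$: the inclusion $\subseteq$ is immediate from $(xA)_j=\sum_i x_iA_{ij}$, while the reverse follows by choosing, column by column, entries $A_{ij}\in R$ realizing any prescribed $v_j\in\sum_i x_iR$ as $\sum_i x_iA_{ij}$. Consequently, hypothesis (b) translates exactly into the equality $xS=yS$ of cyclic right $S$-submodules of $R^n$ for $y=f(x)$. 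Lemma~\ref{L-CyclicMod}, applied to the finite ring $S$ and the right $S$-module $R^n$, then furnishes a unit $A_x\in\cU(S)=\glnr$ with $f(x)=xA_x$. This says precisely that $f$ is a local $\glnr$-map, and Corollary~\ref{C-GLDelta} upgrades this to a global $\glnr$-map: there exists a single $M\in\glnr$ with $f(x)=xM$ for all $x\in\cC$, so that $\hat{f}(v):=vM$ is an isomorphism of $R^n$ extending $f$.

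The consequence is then immediate. A right-ideal-preserving linear isomorphism $f:\cC\longrightarrow\cC'$ is by definition a map satisfying (b), hence extends to an isomorphism of $R^n$ induced by some $M\in\glnr$; conversely, the implication $(a)\Rightarrow(b)$ shows that every element of $\glnr$ acts on $R^n$ as a right-ideal-preserving isomorphism. Hence the group of right-ideal-preserving isomorphisms of $R^n$ coincides with $\glnr$. I do not anticipate a serious obstacle: the whole argument reduces, via the module-theoretic reinterpretation of (b), to the two prior results Lemma~\ref{L-CyclicMod} and Corollary~\ref{C-GLDelta}, both already available.
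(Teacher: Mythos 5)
Your argument is correct and follows essentially the same route as the paper: both directions reduce to the constructibility of $R^{n\times n}$ and Lemma~\ref{L-CyclicMod}. The paper's proof constructs (not necessarily invertible) matrices $A_x, B_y\in R^{n\times n}$ from (b) and its analogue for $f^{-1}$ and invokes Theorem~\ref{T-ConstrLocGlob}, whereas you first identify $xS=\bigl(\sum_i x_iR\bigr)^n$ to get the local $\glnr$-map property directly and then cite Corollary~\ref{C-GLDelta}; since that corollary derives from Theorem~\ref{T-ConstrLocGlob}, which uses Lemma~\ref{L-CyclicMod} in precisely this way, the two proofs are just different packagings of the same argument.
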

Observe that~(b) implies injectivity of~$f$.
Again, Example~\ref{E-GLLocNonGlob} shows that the above equivalence is not true if~$R$ is not Frobenius.

\begin{proof}
The implication (a)~$\Rightarrow$~(b) is clear because $f(x)=xM$ for some $M\in\glnr$.
For the converse let $x\in\cC$ and put $y=f(x)$.
Then $y_j\in \sum_{i=1}^n x_iR$ for all $j=1,\ldots,n$, and this gives rise to a matrix
$A_x\in R^{n\times n}$ such that $y=xA_x$.
In the same way we have for all $y\in\cC'$ a matrix $B_y$ such that $f^{-1}(y)=yB_y$.
Now the constructibility of $R^{n\times n}$ along with Theorem~\ref{T-ConstrLocGlob} yields~(a).
\end{proof}

We have not yet discussed the most famous MacWilliams extension theorem, that is, the one for Hamming weight-preserving maps.
Note that if $f:\;\cC\longrightarrow\cC'$ is a Hamming weight-preserving linear map between codes in~$\F^n$, where~$\F$ is field, then
it is immediate that~$f$ is a local $\MonF$-map, and thus global due to Theorem~\ref{T-MonUR}.
This way, Theorem~\ref{T-MonUR} yields another proof of the classical MacWilliams extension theorem for codes over fields endowed
with the Hamming weight. (This is exactly the line of reasoning by Goldberg~\cite[p.~364]{Gol80}.)
If, however,~$R$ is not a field, then it is not a priori clear whether a Hamming weight-preserving linear map~$f$ is even a
local $\MonR$-map since distinct nonzero elements in~$R$ need not be equal up to a unit factor.
As a consequence, the Hamming partition of~$R^n$ is not given by the orbits of a suitable group action.
Nevertheless, the MacWilliams extension theorem remains true, as has been been proven by Wood~\cite{Wo99}.

\begin{theo}[\mbox{\cite[Thm.~6.3]{Wo99}}]\label{T-MacWExtHamm}
Every Hamming weight-preserving linear map between codes in~$R^n$ is a global $\MonR$-map.
\end{theo}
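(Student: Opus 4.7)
The plan is to reduce Theorem~\ref{T-MacWExtHamm} to Theorem~\ref{T-MonUR} by showing that every Hamming weight preserving linear map $f\colon\cC\to R^n$ is automatically a local $\Mon(n,R)$-map; once this is in hand, Theorem~\ref{T-MonUR} applied with $U=R^\times$ supplies the global monomial matrix.

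By Example~\ref{E-HamPart} the Hamming partition $\cP$ on $R^n$ is right $\chi$-self-dual, so $\wcPchir=\cP$. Since $f$ preserves Hamming weight, $f(x)\sim_\cP x$ for every $x\in\cC$, and the defining relation of $\wcPchir$ applied to the unit-weight sphere $P_1=\{\alpha e_i : 1\le i\le n,\ \alpha\in R\setminus\{0\}\}$ yields
\[
  \sum_{i,\,\alpha\neq 0}\chi\circ(f_i\cdot\alpha)\;=\;\sum_{i,\,\alpha\neq 0}\chi\circ(\pi_i\cdot\alpha)
\]
as an equality of $\C$-valued functions on $\cC$, where $f_i,\pi_i\in\mathrm{Hom}_R(\cC,R)$ are the coordinate functions of $f$ and the coordinate projections, respectively. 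Proposition~\ref{P-LinIndepChar} upgrades this to a multiset equality of characters of $(\cC,+)$, and the injectivity from Proposition~\ref{P-GenChar}(2) lifts it to $\{\!\{f_i\cdot\alpha\}\!\}=\{\!\{\pi_i\cdot\alpha\}\!\}$ in $\mathrm{Hom}_R(\cC,R)$. Evaluating at any fixed $x\in\cC$ produces the pointwise multiset identity $\{\!\{f_i(x)\alpha\}\!\}=\{\!\{x_i\alpha\}\!\}$ in $R$, with indices $(i,\alpha)\in\{1,\dots,n\}\times(R\setminus\{0\})$.

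The combinatorial heart of the argument is to extract from this the multiset equality of principal right ideals $\{\!\{f_i(x)R : i\}\!\}=\{\!\{x_iR : i\}\!\}$. The multiplicity of $y\in R$ in either multiset depends only on $J=yR$, and a short computation using the Frobenius identity $|\mathrm{ann}_r(c)|\cdot|cR|=|R|$ from Remark~\ref{R-FrobAnn} gives
\[
  F(J)\;=\;|R|\sum_{K\supseteq J}\frac{a_x(K)}{|K|},\qquad a_x(K):=|\{i : x_iR=K\}|,
\]
where $K$ ranges over principal right ideals of $R$. Downward induction on this finite poset, initialized by $F(R)=a_x(R)$, inverts the triangular system and recovers $a_x$ from the function $F$; since $F$ is identical for $x$ and $f(x)$, we obtain $a_x=a_{f(x)}$. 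Lemma~\ref{L-CyclicMod} then converts this into a permutation $\sigma_x$ and units $v_j^{(x)}\in R^\times$ with $f_j(x)=x_{\sigma_x(j)}v_j^{(x)}$, yielding a monomial matrix $M_x\in\Mon(n,R)$ such that $f(x)=xM_x$. Hence $f$ is a local $\Mon(n,R)$-map, and Theorem~\ref{T-MonUR} finishes the proof.

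The main obstacle is the passage from the raw multiset identity in $R$ to the refined identity of principal-right-ideal multisets. The Frobenius hypothesis enters twice here: through the double annihilator property, which forces the multiplicity $F(J)$ to depend only on $J$ and makes the triangular system well-defined, and through Lemma~\ref{L-CyclicMod}, which turns the equality of ideal multisets into an actual pairing by units. Example~\ref{E-NonFrobExa} shows that both uses are essential, since over a non-Frobenius ring a Hamming weight preserving map need not even be a local $\Mon(n,R)$-map, so this combinatorial inversion genuinely requires the full Frobenius structure.
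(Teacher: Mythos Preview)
Your argument is correct but follows a genuinely different route from the paper's. The paper explicitly notes that its proof ``establishes directly that a Hamming weight-preserving linear map is a global $\MonR$-map without first showing that it is a local $\MonR$-map'': after obtaining the character-sum identity over the weight-one sphere, it selects a coordinate function whose cyclic right module $f_1\!\cdot\! R$ is \emph{maximal} among all the $f_i\!\cdot\! R,\,\pi_j\!\cdot\! R$ in $\mathrm{Hom}_R(\cC,R)$, matches $\chi\circ f_1$ to some $\chi\circ(\pi_{\tau(1)}\!\cdot\!\beta)$, and uses maximality together with Lemma~\ref{L-CyclicMod} to upgrade~$\beta$ to a unit; it then cancels the corresponding sub-sums and iterates, building the global monomial matrix column by column. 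You instead lift the character-multiset identity to $\mathrm{Hom}_R(\cC,R)$, evaluate at each fixed $x\in\cC$, and invert a triangular system over the poset of principal right ideals to recover the multiset $\{\!\{x_iR\}\!\}$ from the value multiset $\{\!\{x_i\alpha\}\!\}$; this establishes the local $\MonR$-property pointwise, after which Theorem~\ref{T-MonUR} is invoked for the global matrix. Your detour has the virtue of making explicit the nontrivial fact that Hamming-weight preservation already forces the local monomial property, whereas the paper's argument is more self-contained and avoids both the pointwise combinatorics and the appeal to Theorem~\ref{T-MonUR}. One small correction to your commentary: the identity $|\mathrm{ann}_r(c)|\cdot|cR|=|R|$ is just the first isomorphism theorem for the map $\alpha\mapsto c\alpha$ and needs neither Frobenius nor Remark~\ref{R-FrobAnn}, and Lemma~\ref{L-CyclicMod} is stated for arbitrary finite rings; the Frobenius hypothesis actually enters earlier, through the existence of the generating character~$\chi$ (and hence the self-duality of the Hamming partition and the injectivity in Proposition~\ref{P-GenChar}(2)).
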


Note that this result implies Theorem~\ref{T-MonUR} for the case where $U=R^\times$.

Since we will make use of this result in the next section, we present a short proof in the terminology of our paper.
It shows that one establishes directly that a Hamming weight-preserving linear map is a global $\MonR$-map without first showing that it is a
local $\MonR$-map.
The proof is but a slight adjustment of the one for Theorem~\ref{T-MonUR}.

\begin{proof}
Let $f:\cC\longrightarrow\cC'$ be a Hamming weight-preserving map between codes $\cC,\,\cC'\subseteq R^n$.
Consider the notation of the proof of Theorem~\ref{T-MonUR}, and in particular the right $R$-module structure of $\text{Hom}_R(\cC,R)$.
Again, we have to show that there exist $\alpha_1,\ldots,\alpha_n\in R^\times$ and a permutation $\tau\in S_n$ such that
$f_i=\pi_{\tau(i)}\!\cdot\! \alpha_i$ for $i=1,\ldots,n$.

In the proof of Theorem~\ref{T-MonUR} replace the set~$Q$ by
$Q:=\{\alpha e_i\mid i=1,\ldots,n,\,\alpha\in R\backslash\{0\}\}$, which is exactly the set of vectors in~$R^n$ with Hamming
weight one.
Since the Hamming partition~$\cP$ satisfies $\cP=\wcPchir$, see Example~\ref{E-HamPart}, and~$x$ and~$f(x)$ have the same Hamming
weight,~\eqref{e-simPhat2r} implies that~\eqref{e-sumoverP} remains true in this situation.

For the main argument of the proof, we proceed as follows.
Without loss of generality we may assume that the submodule~$f_1\!\cdot\! R$ is maximal among the submodules
$f_1\!\cdot\! R,\ldots,f_n\!\cdot\! R,\,\pi_1\!\cdot\! R,\ldots, \pi_n\!\cdot\! R$ of $\text{Hom}_R(\cC,R)$,
i.e., $f_1\!\cdot\! R$ is not properly contained in any of these modules (the situation is symmetric with respect to
$f_i$ and $\pi_i$ because~$\pi_i$ is the $i$-th coordinate function of the identity map on~$\cC$).
Choose $y=e_1\in Q$.
As in the proof of Theorem~\ref{T-MonUR}, the character $\chi(\inner{f(-),e_1})=\chi\circ f_1$ has to appear
on the right hand side of~\eqref{e-sumoverP}.
Hence there exist $\beta\neq0$ and $\tau(1)\in\{1,\ldots,n\}$ such that
$\chi\circ f_1=\chi(\inner{-,\beta e_{\tau(1)}})=\chi\circ\pi_{\tau(1)}\!\cdot\!\beta$, and
Proposition~\ref{P-GenChar}(2) implies $f_1=\pi_{\tau(1)}\!\cdot\!\beta$.
Thus $f_1\!\cdot\! R\subseteq  \pi_{\tau(1)}\!\cdot\! R$, and from the maximality of~$f_1\!\cdot\! R$ we obtain
$f_1\!\cdot\!R= \pi_{\tau(1)}\!\cdot\!R$.
Therefore Lemma~\ref{L-CyclicMod} yields the existence of a unit
$\alpha_1\in R^\times$ such that $f_1=\pi_{\tau(1)}\!\cdot\!\alpha_1$.

Now we may proceed as in the proof of Theorem~\ref{T-MonUR} and
reduce the identity~\eqref{e-sumoverP} to~\eqref{e-sumoverPP}, where the sets~$Q_j$ are defined as
$Q_j:=\{\alpha e_j\mid \alpha\neq0\}$.
\end{proof}

The proof of Theorem~\ref{T-MonUR} lends itself for a very short proof of the well-known fact that the global $\MonR$-maps are
exactly the linear maps $\cC\longrightarrow R^n$ that preserve the homogeneous weight. Let us briefly elaborate on this.

Recall that a \emph{(left) homogeneous weight on~$R$ with average value~$\gamma$} is a function
$\omega:R\longrightarrow\Q$ such that $\omega(0)=0$ and
\begin{romanlist}
\item $\omega(x)=\omega(y)$ for all $x,y\in R$ such that $Rx=Ry$,
\item $\sum_{y\in Rx}\omega(y)=\gamma|Rx|$ for all $x\in R\backslash\{0\}$; in other words, the average weight over each nonzero principal left ideal is~$\gamma$.
\end{romanlist}

In~\cite[Thm.~1.3]{GrSch00} Greferath/Schmidt establish existence and uniqueness of the homogeneous weight
(even for arbitrary finite rings).

With the aid of a generating character~$\chi$, Honold~\cite[p.~412]{Hon01} derived the following explicit
formula for the normalized (i.e., $\gamma=1$) homogeneous weight:
\begin{equation}\label{e-HomWt}
   \omega(r)=1-\frac{1}{|R^\times|}\sum_{u\in R^\times}\chi(ru) \text{ for } r\in R.
\end{equation}
For more on the explicit values of the homogeneous weight in Frobenius rings, see also~\cite{GL13homog,GL14homog}.

Extend now the homogeneous weight additively to~$R^n$, that is
$\omega(x_1,\ldots,x_n)=\sum_{i=1}^n\omega(x_i)$ for all $(x_1,\ldots,x_n)\in R^n$.\footnote{Note that the
resulting weight is different from the homogeneous weight on the ring $R\times\ldots\times R$.}
It is clear that $\MonR$-maps preserve the homogeneous weight on~$R^n$.
In~\cite{GrSch00}  Greferath/Schmidt prove the converse.
\begin{theo}[\mbox{\cite[Thm.~2.5]{GrSch00}}]
Let $f:\cC\longrightarrow\cC'$ be an isomorphism that preserves the homogeneous weight, i.e.,
$\omega(x)=\omega(f(x))$ for all $x\in\cC$.
Then $f$ is a global $\MonR$-map.
As a consequence, the homogeneous weight-preserving maps satisfy the MacWilliams extension theorem.
Furthermore, an isomorphism $f:\cC\longrightarrow\cC'$ preserves the homogeneous weight if and only if it preserves the Hamming weight.
\end{theo}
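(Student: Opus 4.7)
The plan is to reduce the homogeneous case to the Hamming case and then invoke Theorem~\ref{T-MacWExtHamm}. The key idea is that the Hamming weight can be extracted from the homogeneous weight by averaging over all scalar multiples, and this averaging is compatible with the left $R$-linearity of $f$.

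Concretely, for a fixed $z\in R$ I would use the character-theoretic formula~\eqref{e-HomWt} to compute
\[
   \sum_{r\in R}\omega(rz)=|R|-\frac{1}{|R^\times|}\sum_{u\in R^\times}\sum_{r\in R}\chi(rzu).
\]
For each fixed $u\in R^\times$, the map $r\mapsto\chi(rzu)$ is a character of $(R,+)$, so its total equals $|R|$ when this character is trivial and $0$ otherwise. Triviality is equivalent to the left ideal $Rzu$ being contained in $\ker\chi$, and Proposition~\ref{P-GenChar}(1) then forces $Rzu=0$; since $u\in R^\times$ the only possibility is $z=0$. Consequently $\sum_{r\in R}\omega(rz)=|R|$ when $z\neq 0$ and vanishes when $z=0$. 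Extending additively to $R^n$ this yields the clean identity
\[
    \sum_{r\in R}\omega(rx)=|R|\cdot\wtH(x)\quad\text{for every }x\in R^n.
\]

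Now let $f$ be an $\omega$-preserving isomorphism. Since $\cC$ is a left $R$-module and $f$ is left $R$-linear, $rx\in\cC$ and $\omega(rx)=\omega(f(rx))=\omega(rf(x))$ for all $r\in R$ and $x\in\cC$. Summing over $r\in R$ and applying the averaging identity on both sides gives $\wtH(x)=\wtH(f(x))$ for all $x\in\cC$, so $f$ is Hamming weight-preserving; Theorem~\ref{T-MacWExtHamm} then produces a global $\MonR$-map extending $f$, establishing the MacWilliams extension theorem for the homogeneous weight. For the final equivalence the forward implication is exactly the above; for the converse, a $\wtH$-preserving isomorphism is a global $\MonR$-map by Theorem~\ref{T-MacWExtHamm}, so $f(x)$ is a coordinate permutation of~$x$ with unit rescalings, and property~(i) of the homogeneous weight (which gives $\omega(xu)=\omega(x)$ for every $u\in R^\times$) yields $\omega(f(x))=\omega(x)$ coordinate-wise. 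The only substantive computation is the evaluation of $\sum_{r\in R}\omega(rz)$, and that follows directly from the generating-character property, so I foresee no real obstacle beyond keeping the averaging identity clean.
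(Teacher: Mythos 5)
Your argument is correct and takes a genuinely different route from the paper's. You extract the Hamming weight from the homogeneous weight via the averaging identity $\sum_{r\in R}\omega(rz)=|R|\cdot\wtH(z)$, obtained from formula~\eqref{e-HomWt} together with Proposition~\ref{P-GenChar}(1), and then invoke Theorem~\ref{T-MacWExtHamm} as a black box. The paper instead translates $\omega$-preservation directly into the character-sum identity~\eqref{e-sumoverP} with $Q=\{ue_i\mid u\in R^\times,\,i=1,\ldots,n\}$ and reruns the argument of the proof of Theorem~\ref{T-MonUR}. Both live in the same character-theoretic framework and are comparably short, but your reduction has the merit of making the equivalence ``$\omega$-preserving iff $\wtH$-preserving'' explicit and self-contained, whereas the paper obtains it only implicitly from both conditions forcing a global $\MonR$-map. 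One small correction: you attribute the right-unit invariance $\omega(xu)=\omega(x)$, $u\in R^\times$, to property~(i), but that property concerns left ideals $Rx=Ry$, and for noncommutative $R$ one need not have $Rxu=Rx$ (try $R=M_2(\F_q)$, $x=E_{11}$, $u$ a non-diagonal unit). The invariance does hold, most directly from~\eqref{e-HomWt}: replacing the summation variable $u$ by $uv$ permutes $R^\times$.
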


The proof in~\cite{GrSch00} is purely combinatorial and does not make use of character theory.
We think it is worthwhile to present the following alternative and very short proof.

\begin{proof}
Due to~\eqref{e-HomWt}, the normalized homogeneous weight of $x\in R^n$ is given by
$ \omega(x)=\sum_{i=1}^n\omega(x_i)=n-\frac{1}{|R^\times|}\sum_{i=1}^n\sum_{u\in R^\times}\chi(x_iu)$.
Therefore
$\omega(x)=\omega(x')\Longleftrightarrow
\sum_{y\in Q}\chi(\inner{x,y})=\sum_{y\in Q}\chi(\inner{x',y})$ for the set
$Q:=\{u e_i\mid u\in R^\times,\,i=1,\ldots,n\}$.
For a linear map~$f$ preserving the homogeneous weight, this reads as
$\sum_{y\in Q}\chi(\inner{f(-),y})=\sum_{y\in Q}\chi(\inner{-,y})$.
But this is exactly~\eqref{e-sumoverP} for the subgroup $U=R^\times$, and thus the proof of Theorem~\ref{T-MonUR} shows that~$f$
is a global $\MonR$-map.
\end{proof}

\bigskip
The rest of this section is devoted to discussing whether the results may be extended to the case where the ambient space is a direct
product $R^N:=R^n\times\ldots\times R^n$ and where~$R^n$ is endowed with a particular weight function.
In this case, we consider two types of weights on~$R^N$.

\begin{defi}\label{D-DirProd}
Let $N=tn$ and $R^{N}:=R^n\times\ldots\times R^n$,  and suppose~$R^n$ is endowed with a weight~$\wt$.
For $x=(x_1,\ldots,x_t)\in R^N$ we define the \emph{weight list} of~$x$ as the list $\big(\wt(x_1),\ldots,\wt(x_t)\big)$
and the \emph{symmetrized weight composition} as the multi-set $\{\!\{\wt(x_1),\ldots,\wt(x_t)\}\!\}$.
In other words, the symmetrized weight composition is the list of weights of the components regardless of order.
\footnote{A third type of weight on~$R^N$
is the sum of the weights of the components.
Except for the homogeneous weight and certain cases covered by poset weights, to be dealt with in the next section,
such cumulative weights require completely different methods and will not be discussed here; see for instance Wood~\cite{Wo01} and~\cite{Bar12}
for the still open problem whether maps preserving the cumulative Lee weight on~$(\Z_M)^n$ satisfy the extension theorem, as well as
Greferath et al.~\cite{GMZ13} for a more recent approach to cumulative weights.}
\end{defi}

For example, let $n=1$ and~$\wt$ be the Hamming weight on~$R$ (thus, $\wt(a)=1$ for $a\neq0$ and $\wt(0)=0$).
Then the weight list of $x\in R^N$ equals its support vector (i.e., the unique vector in $\{0,1\}^N$ whose support coincides
with the support of~$x$).
Thus a weight-list-preserving map is simply a support-preserving map.
On the other hand, the symmetrized weight composition is, up to notation, the Hamming weight of~$x$, and a map preserving the
symmetrized weight composition is a Hamming weight-preserving map.
We know from Theorems~\ref{T-supp} and~\ref{T-MacWExtHamm} that in both cases the MacWilliams extension theorem is true, which also implies that the
according groups $\Delta(N,R)$ and $\Mon(N,R)$ satisfy the local-global property.

This motivates the following question for the general situation of Definition~\ref{D-DirProd}.
Suppose the weight~$\wt$ on~$R^n$ satisfies the MacWilliams extension theorem.
Does then the MacWilliams extension theorem also hold true for weight-list-preserving maps or
symmetrized-weight-composition-preserving maps?

As we now briefly outline, the first case can be answered in the affirmative, whereas this is not the case for the
symmetrized weight composition.
We begin with the non-symmetrized case, which can even be dealt with in slightly more generality.

\begin{rem}\label{R-ProdWeight}
Let $R^{N}:=R^{n_1}\times\ldots\times R^{n_t}$ and suppose that each~$R^{n_i}$ is endowed with a weight $\wt_i$
such that $\wt_i(x)=0\Longleftrightarrow x=0$ for all $x\in R^{n_i}$ and that satisfies the
MacWilliams extension theorem.
Furthermore, for each~$i$ let $\cU_i\leq\GL(n_i,R)$ be the group of matrices inducing the $\wt_i$-preserving isomorphisms on~$R^{n_i}$.
Denote by $\pi_i:\,R^N\longrightarrow R^{n_i}$ the projection of~$R^{N}$ on the $i$-th component.
Let $\cC\subseteq R^N$ be a code and $f:\cC\longrightarrow R^N$ be a linear map.
Assume~$f$ preserves $(\wt_1,\ldots,\wt_t)$, i.e.,
$\big(\wt_1(x_1),\ldots,\wt_t(x_t)\big)=\big(\wt_1(f_1(x)),\ldots,\wt_t(f_t(x))\big)$ for all $x=(x_1,\ldots,x_t)\in\cC$, where
$f_i:=\pi_i\circ f$ is the $i$-th block coordinate map of~$f$.
Then~$f$ is a global $\cU$-map, where $\cU\leq\GL(N,R)$ is the group of block diagonal matrices of the form
\begin{equation}\label{e-Udiag}
   U:=\text{diag}(U_1,\ldots,U_t),\ \text{ where } U_i\in\cU_i\text{ for all }i=1,\ldots,t.
\end{equation}
This can be seen as follows.
The $(\wt_1,\ldots,\wt_t)$-preserving property of~$f$ implies that the maps
\[
  g_i:\pi_i(\cC)\longrightarrow R^{n_i},\ x_i\longmapsto f_i(x), \text{ where $x\in\cC$ is such that $\pi_i(x)=x_i$},
\]
are well-defined, linear, and $\wt_i$-preserving.
Thus by assumption on~$\wt_i$, there exist matrices $U_i\in\cU_i$ such that $g_i(x_i)=x_iU_i$ for all $x_i\in\pi_i(\cC)$.
But then the map~$f$ is given by $f(x)=(x_1,\ldots,x_t)U$, where~$U$ is as in~\eqref{e-Udiag}.
\end{rem}

Let us now turn to the symmetrized weight composition.
The following simple example shows that in this case one cannot expect a MacWilliams extension theorem in general.
\begin{exa}\label{E-RTWeightSymm}
Let $\F=\F_2$ and consider $\F^8=\F^4\times\F^4$ endowed with the symmetrized composition of RT-weights on each component~$\F^4$.
Thus, define $\wt(x_1,x_2):=\{\!\{\wtRT(x_1),\wtRT(x_2)\}\!\}$ for every $(x_1,x_2)\in\F^4\times\F^4$ and where $\wtRT$ is as in~\eqref{e-RTweight}.
In~\cite[Sec.~3.1]{BFSF13} this is called the \emph{shape} of the vector $(x_1,x_2)$.
It is straightforward to show that the $\wt$-preserving isomorphisms on~$\F^8$ are given by the group~$\cU\leq\GL(8,\F)$ consisting of
the $(2\times2)$-block-monomial matrices, where the nonzero blocks are invertible lower triangular $(4\times4)$-matrices.
\\
Consider the codes $\cC_1=\im(G_1)$ and $\cC_2=\im(G_2)$, where
\[
  G_1=\begin{pmatrix}1&0&0&0&0&1&0&0\\0&0&1&0&0&0&0&1\end{pmatrix},\quad
  G_2=\begin{pmatrix}0&1&0&0&1&0&0&0\\0&0&1&0&0&0&0&1\end{pmatrix}.
\]
Evidently, the map $f:\cC_1\longrightarrow\cC_2$ given by $uG_1\longmapsto uG_2$, where $u\in\F^2$,
is $\wt$-preserving (and of course linear).
Moreover, it is easy to see that~$f$ is a local~$\cU$-map.
As we show next,~$f$ cannot be extended to a $\wt$-preserving map~$\hat{f}$ on~$\F^8$.
Assume there exists such an extension~$\hat{f}$ of~$f$.
Then $\hat{f}$ has to map the vector $(1000,0000)$ to $(1000,0000)$ or $(0000,1000)$.
If $\hat{f}(1000,0000)=(1000,0000)$, then linearity of~$\hat{f}$ and the fact that $\hat{f}(1000,0100)=(0100,1000)$
contradicts the $\wt$-preserving property of~$\hat{f}$.
Thus $\hat{f}(1000,0000)=(0000,1000)$ and, using linearity along with the second row of the above matrices, we also have
\begin{equation}\label{e-fhat}
   \hat{f}(1010,0001)=(0010,1001).
\end{equation}
Next,~$\hat{f}$ has to map the vector $(0110,0000)$ to any of the~$8$ vectors~$v$ such that $\wt(v)=\{\!\{0,3\}\!\}$.
But $\hat{f}(0110,0000)=(ab10,0000)$ for any $a,b\in\F_2$ leads to a contradiction to $\hat{f}(1000,0000)=(0000,1000)$, whereas
$\hat{f}(0110,0000)=(0000,ab10)$ contradicts~\eqref{e-fhat}.
All of this also shows that the group~$\cU$ does not satisfy the local-global property.
One can even show that the map~$f:\cC_1\longrightarrow\cC_2$ cannot be extended to a $\wtNRT$-isometry on~$\F^8$, where
$\wtNRT(x_1,x_2)=\wtRT(x_1)+\wtRT(x_2)$ is the NRT-weight (see Remark~\ref{R-CumRT}).
This is an instance of a poset weight, which will be discussed in detail in the next section.

In the same way as above, one can find examples showing that maps preserving the symmetrized composition of Hamming weights do not
satisfy the MacWilliams extension theorem.
\end{exa}

Let us rephrase the above results in terms of the local-global property.
Suppose $U\leq\glnr$.
In the group $\GL(N,R)$, where $N=tn$, define $\Delta_U(N,R)$ as the subgroup consisting of the block diagonal matrices such that the $(n\times n)$-blocks
on the diagonal are matrices in~$U$.
Furthermore, generalizing previous notation, denote by ${\rm Mon}_U(N,R)\leq \GL(N,R)$ the group of $U$-block monomial matrices, that is,
block matrices consisting of~$t$ block rows and~$t$ block  columns such that each block row and each block column contains exactly one nonzero $(n\times n)$-matrix
and this matrix is in~$U$. The reader may note that this group is a semi-direct product of $U^t$ and the symmetric group $S_t$.

The above has shown that if~$U$ satisfies the local-global property then so does $\Delta_U(N,R)$, whereas ${\rm Mon}_U(N,R)$ does not in general.

\section{Isomorphisms Preserving a Poset Weight}\label{SS-Poset}
In this section we generalize Theorem~\ref{T-rho} to general poset-weight-preserving linear maps.
Poset metrics for codes over fields have been introduced by Brualdi et al.~\cite{BGL95} in order to build a suitable framework
for a combinatorial problem posed earlier by Niederreiter~\cite{Nie91} and which generalizes the problem of finding the largest possible
Hamming distance for any $k$-dimensional code in~$\F^n$.
The Hamming metric as well as the RT-metric are special cases of a poset metric.

Ever since~\cite{BGL95}, codes with a poset metric have been studied intensively.
For instance, Skriganov~\cite{Skr07} could establish the existence of good codes with respect to the Hamming distance with the aid of the RT-metric.
In~\cite{DoSk02,KiOh05,PiFi12,GL13pos} MacWilliams identities with respect to a poset weight have been derived.
More closely related to the present paper is~\cite{PFKH08} by Panek et.~al., in which, for an arbitrary poset metric, the group of all isometries
on a given vector space~$\F^n$ is described.
The special case of the NRT-metric, see Remark~\ref{R-CumRT}, has been dealt with already earlier by Lee~\cite{Lee03}.
We will not need these results explicitly, but will point out connections whenever applicable.
A different tack is taken by Barg et al.~\cite{BFSF13} who study the extension problem for order-preserving bijections on posets.

\medskip

We fix the following notation.
Let $\leq$ be a partial order on $[n]:=\{1,\ldots,n\}$, thus $\bP:=([n],\leq)$ is a poset.
A subset $S\subseteq[n]$ is called an ideal if $i\in S$ and $j\leq i$ implies $j\in S$.
Denote by $\ideal{S}$ the smallest ideal generated by the set~$S$.

As before,~$R$ denotes a finite Frobenius ring.

A poset~$\bP=([n],\leq)$ induces the \emph{poset weight} on~$R^n$  given by
\begin{equation}\label{e-pweight}
   \wt_{\bP}(x)=\big|\ideal{\supp(x)}\big|,
\end{equation}
where, as usual, $\supp(x):=\{i\mid x_i\neq0\}$ denotes the support of~$x=(x_1,\ldots,x_n)\in R^n$.
The weight induces a metric on~$R^n$ (see~\cite[Lem.~1.1]{BGL95} for~$\F^n$).
Both the Hamming weight and the RT-weight are special cases of a poset weight and will be discussed in Example~\ref{E-HierPos} below.
It is easy to see that the induced partition~$\cP_{\bP}$, consisting of the sets
$P_m=\{x\in R^n\mid \wt_{\bP}(x)=m\},\,m=0,\ldots,n$, is in general not induced by a group action on~$R^n$.

In this section we investigate for which posets the $\wt_{\bP}$-preserving linear maps satisfy the
MacWil\-liams extension theorem.

We start with the following simple example which shows that in general the extension theorem does not hold.
It is a simple reformulation of the example in Remark~\ref{R-CumRT}.

\begin{exa}\label{E-PosetNonExt}
Let $N=tn$ for some $t,n\in\N$, and consider the poset $\bP=([N],\leq)$ with partial order given by the~$t$ disjoint chains
\[
  1<\ldots<n,\quad n+1<\ldots<2n,\quad 2n+1<\ldots<3n,\ \ldots,\ (t-1)n+1<\ldots<tn.
\]
By the very definition, the poset weight~$\wt_{\bP}$ of a vector $x=(x_1,\ldots,x_t)\in R^N$, where $x_i=(x_{i,1},\ldots,x_{i,n})\in R^n$, is given by
$\wt_{\bP}(x)=\sum_{i=1}^t\max\{s\mid x_{i,s}\neq0\}$.
Thus, $\wt_{\bP}(x)=\wtNRT(M_x)$, where $M_x\in R^{t\times n}$ is the matrix consisting of the rows~$x_i,\,i=1,\ldots,t$, and $\wtNRT$ is the NRT-weight
defined in Remark~\ref{R-CumRT}.
As a consequence, the space $R^N$ endowed with the poset metric induced by~$\bP$ is isometric to the NRT-space
$R^{t\times n}$ discussed in Remark~\ref{R-CumRT}.
Using the above identification, the example given in that remark translates as follows:
the poset is $\bP=([4],\leq)$, where $1<2$ and $3<4$.
In~$\F_2^4$ we consider the codes $\cC=\{(00,00),(10,10)\},\,\cC'=\{(00,00),\,(00,01)\}$.
Since $\ideal{\supp(10,10)}=\{1,3\}$ and $\ideal{\supp(00,01)}=\{3,4\}$, we have $\wt_{\bP}(10,10)=\wt_{\bP}(00,01)=2$, and thus
the unique isomorphism between~$\cC$ and~$\cC'$ is a $\wt_{\bP}$-isometry.
As already shown in Remark~\ref{R-CumRT}, this isometry cannot be extended to an isometry on all of~$\F_2^4$.
As a consequence, the MacWilliams extension theorem does not hold for this particular poset metric.
\end{exa}

In the proof of Theorem~\ref{T-NonHierPoset} we will generalize the idea of this example to the class of non-hierarchical posets.
More precisely, we will show that the MacWilliams extension theorem holds true if and only if the poset is hierarchical in the following sense.
For the terminology we follow Kim and Oh~\cite{KiOh05}.
\begin{defi}\label{D-hierposet}
The poset $\bP=([n],\leq)$ is called a \emph{hierarchical poset} if there exists a partition
$[n]=\bigcup_{i=1}^t\hspace*{-2em}\raisebox{.5ex}{$\cdot$}\hspace*{1.8em} \Gamma_i$
such that for all $l,m\in[n]$ we have $l< m$ if and only if  $l\in\Gamma_{i_1},\,m\in\Gamma_{i_2}$
for some $i_1<i_2$ (where $i_1<i_2$ refers to the natural order in~$\N$).
In other words, for all $i_1<i_2$ every element in $\Gamma_{i_1}$ is less than every element in $\Gamma_{i_2}$,
and no other two distinct elements in~$[n]$ are comparable.
We call $\Gamma_{i}$ the \emph{$i$-th level} of~$\bP$.
The hierarchical poset is completely determined (up to order-isomorphism)
by the data $(n_1,\ldots,n_t)$, where $n_i=|\Gamma_i|$, and is denoted by $\bH(n;n_1,\ldots,n_t)$.
\end{defi}

The following two familiar examples are the most extreme cases of hierarchical posets.

\begin{exa}\label{E-HierPos}
An \emph{anti-chain} on~$[n]$ is a poset in which any two distinct elements in~$[n]$ are incomparable.
Thus,~$\bP$ is an anti-chain  if and only if~$\bP$ is the hierarchical poset $\bH(n;n)$.
In this case, $\ideal{\supp(x)}=\supp(x)$ for all $x\in R^n$, and $\wt_{\bP}$ is
simply the Hamming weight on~$R^n$.
A poset~$\bP=([n],\leq)$  is a \emph{chain} if~$\leq$ is a total order, in other words, if~$\bP$ is
the hierarchical poset $\bH(n;1,\ldots,1)$.
Assuming without loss of generality that $1<2<\ldots<n$, we observe that
$\wt_{\bP}(x)=\max\{i\mid x_i\neq0\}$, which is exactly the RT-weight~$\wtRT$ considered in~\eqref{e-RTweight}.
\end{exa}

In preparation for the next theorem we fix the following notation.

Let $\bP$ be the hierarchical poset $\bH(n;n_1,\ldots,n_t)$.
Thus $\sum_{i=1}^t n_i=n$.
In this case it is convenient to write the underlying set~$[n]$ as
\[
       \cN:=\{(i,j)\mid i=1,\ldots,t,\,j=1,\ldots,n_i\}
\]
such that the partial order simply reads as
$(i,j)< (i',j')\Longleftrightarrow i<i'$
(where~$i<i'$ refers to the natural order on~$\N$).
Consequently, for any set $S=\{(i_1,j_1),\ldots,(i_r,j_r)\}\subseteq\cN$, where $i_1\leq\ldots\leq i_{s-1}<i_s=i_{s+1}=\ldots=i_r$,
the ideal generated by~$S$ is
\[
  \ideal{S}=\{(i,j)\in\cN\mid i< i_r\}\cup\{(i_s,j_s),(i_{s+1},j_{s+1}),\ldots,(i_r,j_r)\}.
\]
Accordingly, we write $R^n$ as
\begin{equation}\label{e-Rfactors}
     R^n=R^{n_1}\times\ldots\times R^{n_t},
\end{equation}
so that a vector $x\in R^n$ has the form $x=(x_1,\ldots,x_t)$, where $x_i\in R^{n_i}$.
Now the definition of the poset weight yields
\begin{equation}\label{e-HPweight}
         \wt_{\bP}(x)=\sum_{i=1}^{s-1}n_i+\wt_{\text{H}}(x_s),\text{ where }s=\max\{i\mid x_i\not=0\},
\end{equation}
and where $\wt_{\text{H}}$ stands for the Hamming weight on each of the modules~$R^{n_i}$.

\medskip
Now we are ready to prove the MacWilliams extension theorem for hierarchical posets.
It generalizes both Theorem~\ref{T-rho} and Theorem~\ref{T-MacWExtHamm}.

\begin{theo}\label{T-HierPosMacWExt}
Let $\bP$ be the hierarchical poset $\bH(n;n_1,\ldots,n_t)$ and $\wt_{\bP}$ be the associated poset-weight on $R^n$.
Then every $\wt_{\bP}$-preserving isomorphism $f:\;\cC\longrightarrow\cC'$ between codes $\cC,\,\cC'\subseteq R^n$
is a global $\cL$-map, where the group $\cL\leq \glnr$ consists of the matrices
\begin{equation}\label{e-Mtriang}
    M:=\begin{pmatrix}M_{1,1}& & & \\ M_{2,1}&M_{2,2}& & \\ \vdots&\ddots&\ddots& \\ M_{t,1}&\cdots&M_{t,t-1}&M_{t,t}\end{pmatrix},
\end{equation}
where $M_{i,i}\in\Mon(n_i,R)$ for all $i=1,\ldots,t$ and $M_{i,j}\in R^{n_i\times n_j}$ for $i\neq j$.
As a consequence, $\wt_{\bP}$-preserving maps satisfy the MacWilliams extension theorem.
\end{theo}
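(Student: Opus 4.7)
The plan is to induct on the number of levels $t$, with base case $t=1$ (an anti-chain) reducing to the Hamming-weight extension theorem (Theorem~\ref{T-MacWExtHamm}). The inductive step peels off the top level $\Gamma_t$ first, handles it via that same theorem, then handles the lower part via the inductive hypothesis, and finally glues the two pieces using the injectivity of free modules over a Frobenius ring.

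The key observation driving the induction is that the weight formula \eqref{e-HPweight} sets up a bijection between the value $\wt_{\bP}(x)$ and the pair consisting of the index $s$ of the last nonzero block of $x$ together with the Hamming weight $\wtH(x_s)$, because the intervals $[1+\sum_{i<s}n_i,\,\sum_{i\leq s}n_i]$ are pairwise disjoint for $s=1,\ldots,t$. Consequently $f$ preserves the filtration of $\cC$ by index of last nonzero block. In particular, writing $\cC_{<t}:=\{x\in\cC\mid x_t=0\}$, one has $f(\cC_{<t})\subseteq R^{n-n_t}\times\{0\}$, and the induced map
\[
   \tilde f_t:\pi_t(\cC)\longrightarrow R^{n_t},\qquad \pi_t(x)\longmapsto \pi_t(f(x))
\]
is well-defined (two preimages differ by an element of $\cC_{<t}$), $R$-linear, injective, and Hamming-weight preserving. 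Theorem~\ref{T-MacWExtHamm} then delivers a monomial matrix $M_{t,t}\in\Mon(n_t,R)$ with $\pi_t(f(x))=\pi_t(x)M_{t,t}$ for every $x\in\cC$.

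For the lower part, the restriction $f|_{\cC_{<t}}$ becomes, after dropping the trivial last block, a $\wt_{\bP'}$-preserving isomorphism between codes in $R^{n-n_t}$, where $\bP'=\bH(n-n_t;\,n_1,\ldots,n_{t-1})$; indeed $\wt_{\bP}$ and $\wt_{\bP'}$ agree on vectors with vanishing top block by \eqref{e-HPweight}. The inductive hypothesis supplies a matrix $M'$ of the required block lower triangular form with monomial diagonal blocks $M'_{i,i}\in\Mon(n_i,R)$ such that $f(x)_{<t}=x_{<t}M'$ for all $x\in\cC_{<t}$. Set $M_{i,j}:=M'_{i,j}$ for $1\leq j\leq i<t$. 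To pin down the remaining coupling blocks $M_{t,j}$ for $j<t$, consider the defect
\[
   \psi_j(x):=f(x)_j-\sum_{i=j}^{t-1}x_iM'_{i,j},\qquad x\in\cC.
\]
This is $R$-linear and, by the inductive conclusion, vanishes on $\cC_{<t}$, hence descends to an $R$-linear map $\psi_j:\pi_t(\cC)\to R^{n_j}$.

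The main obstacle is then to realise each $\psi_j$ as right-multiplication by some matrix $M_{t,j}\in R^{n_t\times n_j}$, i.e.\ to extend $\psi_j$ from the submodule $\pi_t(\cC)\subseteq R^{n_t}$ to all of $R^{n_t}$. This is precisely where the Frobenius hypothesis enters essentially: since $R$ is self-injective, the free module $R^{n_j}$ is an injective left $R$-module, and therefore any $R$-linear map from a submodule of the free module $R^{n_t}$ to $R^{n_j}$ extends to $R^{n_t}$, producing the desired matrix $M_{t,j}$. Assembling all blocks gives $M$ of the form \eqref{e-Mtriang} with $f(x)=xM$ for every $x\in\cC$; invertibility of $M$ is automatic because a block lower triangular matrix with invertible (here monomial) diagonal blocks is invertible, so $M\in\cL$, and the MacWilliams extension property for $\wt_{\bP}$ follows at once.
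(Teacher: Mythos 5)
Your proof is correct, and it reaches the same top-level structure as the paper (induction on the number $t$ of levels; base case via Theorem~\ref{T-MacWExtHamm}; peel off the top block to get $M_{t,t}\in\Mon(n_t,R)$, restrict to $\cC_{<t}$ to get the upper $(t-1)\times(t-1)$ part from the inductive hypothesis), but you take a genuinely different and arguably cleaner route for the gluing step that produces the coupling blocks $M_{t,j}$. The paper, for each fixed $x\in\cC$, examines the same defect $z=g(x)-(x_1,\ldots,x_{t-1})\tilde M$ and uses the double annihilator property (Remark~\ref{R-FrobAnn}) to show its entries lie in the right ideal generated by the entries of $x_t$; this yields only a pointwise coupling matrix $B_x$, so $f$ is merely a \emph{local} $\cL$-map, and the paper must then normalize by $\hat M=\mathrm{diag}(M_{1,1}^{-1},\ldots,M_{t,t}^{-1})$ to reduce the diagonal blocks to identity and invoke the constructibility machinery (Theorem~\ref{T-triangular}(b) together with Theorem~\ref{T-Constr}) to pass from local to global. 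You instead notice that the defect $\psi_j$ is linear, vanishes on $\cC_{<t}$, and therefore descends to an $R$-linear map $\pi_t(\cC)\to R^{n_j}$; since $R$ is Frobenius, hence self-injective, $R^{n_j}$ is an injective left $R$-module, so the descended map extends to a left-linear map $R^{n_t}\to R^{n_j}$, which is right multiplication by a single matrix $M_{t,j}\in R^{n_t\times n_j}$. This produces the global coupling blocks directly and bypasses the local-to-global passage entirely; invertibility of the assembled $M$ is then automatic from the monomial (hence invertible) diagonal blocks. What each buys: your route is shorter and isolates exactly where self-injectivity is used, while the paper's route remains entirely within its local-global framework, and the normalization-plus-constructibility argument is re-usable in other settings where one only knows a priori that the diagonal blocks can be made constant. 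One small edge case you leave implicit (the paper flags it): when $\cC_{<t}=\{0\}$, the inductive hypothesis gives no constraint and one must simply choose any matrix of the required form (e.g., the identity) for the upper block; this does not affect the correctness of the argument.
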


Note that the theorem also tells us that~$\cL$ is the group of~$\wt_{\bP}$-isometries on~$R^n$.
This generalizes an earlier result from fields to arbitrary finite Frobenius rings.
Indeed, in~\cite[Cor.~1.3]{PFKH08} Panek et al.\ derive the isometry group for the space $\F^n$ endowed with a poset metric.
For the hierarchical poset of Theorem~\ref{T-HierPosMacWExt}, this group amounts to a semi-direct product $\cG\rtimes\cH$, where
$\cG$ consists of the lower triangular block matrices with matrices from $\Delta(n_i,\F)$ in the $i$-th diagonal block, and
$\cH=S_{n_1}\times\ldots\times S_{n_t}$.
It is straightforward to show that this group is isomorphic to~$\cL$.

\medskip
\begin{proof}
We induct on the number~$t$ of levels of the hierarchical poset.
If $t=1$, then $\wt_{\bP}=\wt_{\text{H}}$ and the result follows from Theorem~\ref{T-MacWExtHamm}.

Let $t\geq2$ and let $f:\,\cC\longrightarrow\cC'$ be a linear $\wt_{\bP}$-preserving map.
Note that~\eqref{e-HPweight} implies $\wt_{\bP}(x)>\sum_{i=1}^{t-1}n_i$ if and only if $x_t\neq0$.
As a consequence,
\begin{equation}\label{e-fxt}
    y_t=0\Longleftrightarrow x_t=0
\end{equation}
whenever $y=f(x)$.
This allows us to consider various derived maps.
Firstly, let $\widetilde{\cC}=\{(x_1,\ldots,x_{t-1})\mid (x_1,\ldots,x_{t-1},0)\in\cC\}$ and likewise
$\widetilde{\cC'}=\{(y_1,\ldots,y_{t-1})\mid (y_1,\ldots,y_{t-1},0)\in\cC'\}$.
Now~\eqref{e-fxt} shows that $f|_{\widetilde{\cC}}$ is a linear $\wt_{\bP}$-preserving map between the codes
$\widetilde{\cC},\,\widetilde{\cC'}\subseteq R^{n_1}\times\ldots\times R^{n_{t-1}}=:R^{\tilde{n}}$.
By induction hypothesis there exists an invertible matrix
\begin{equation}\label{e-Mtilde}
   \tilde{M}=\begin{pmatrix}M_{1,1}& & & \\ M_{2,1}&M_{2,2}& & \\ \vdots&\ddots&\ddots& \\ M_{t-1,1}&\cdots&M_{t-1,t-2}&M_{t-1,t-1}\end{pmatrix},
\end{equation}
with $M_{i,i}\in\Mon(n_i,R)$ for all $i=1,\ldots,t-1$ such that
\[
    f(x_1,\ldots,x_{t-1},0)=\big((x_1,\ldots,x_{t-1})\tilde{M},0\big)\text{ for all }(x_1,\ldots,x_{t-1},0)\in\cC.
\]
If $\tilde{\cC}=\{0\}$, then let~$\tilde{M}$ simply be the identity matrix
(or any other matrix in $R^{\tilde{n}\times\tilde{n}}$ as in~\eqref{e-Mtilde}).

Next consider the projection~$\Pi_t$ of $R^n$ on the last component $R^{n_t}$ and
let $\cC_{(t)}:=\Pi_t(\cC)$ and $\cC'_{(t)}=\Pi_t(\cC')$.
It follows from~\eqref{e-fxt} that the map
\[
  f_{(t)}:\cC_{(t)}\longrightarrow\cC'_{(t)},\ x_t\longmapsto \Pi_t\big(f(x_1,\ldots,x_t)\big), \text{ where }(x_1,\ldots,x_t)\in\cC,
\]
is well-defined. Moreover, it is linear, injective and $\wt_{\text{H}}$-preserving, where the last part follows from~\eqref{e-HPweight}.
Thus by Theorem~\ref{T-MacWExtHamm} there exists a matrix $M_{t,t}\in\Mon(n_t,R)$ such that
\[
    f_{(t)}(x_t)=x_tM_{t,t} \text{ for all }x_t\in \Pi_t(\cC).
\]
It remains to consider the map
\[
  g:\cC\longrightarrow R^{\tilde{n}},\ x\longmapsto \hat{\Pi}\big(f(x)\big),
\]
where~$\hat{\Pi}$ is the projection of~$R^n$ on $R^{\tilde{n}}=R^{n_1}\times\ldots\times R^{n_{t-1}}$.
Note that $f(x)=\big(g(x),\Pi_t(f(x))\big)=\big(g(x),f_{(t)}(x_t)\big)$ for all $x\in\cC$.

Fix $x\in\cC$ and put $z:=(z_1,\ldots,z_{t-1}):=g(x)-(x_1,\ldots,x_{t-1})\tilde{M}$.
Let $y=f(x)$. Then $(y_1,\ldots,y_{t-1})=g(x)$.
We show that each entry of~$z$ is contained in the right ideal $I:=\sum_{j=1}^{n_t}x_{t,j}R$, where $x_{t,j}$ are the entries of
the vector~$x_t$.
In order to do so, let $r\in\text{ann}_l(I)$.
Then $rx_t=0$ and thus $r(x_1,\ldots,x_{t-1})\in\widetilde{\cC}$.
As a consequence, $rg(x)=\hat{\Pi}(f(rx))=r(x_1,\ldots,x_{t-1})\tilde{M}$.
This shows that $rz=0$, and thus $\text{ann}_l(I)\subseteq\text{ann}_l(\tilde{I})$, where $\tilde{I}$ is the right ideal generated by
the entries of~$z$.
Now the double annihilator property of Remark~\ref{R-FrobAnn} implies $\tilde{I}\subseteq I$.
This means that there exists a matrix $B_x\in R^{n_t\times\tilde{n}}$ such that
\[
  g(x)-(x_1,\ldots,x_{t-1})\tilde{M}=x_t B_x.
\]
All of this yields
\[
  f(x)=(x_1,\ldots,x_t)M_x, \text{ where }M_x=\begin{pmatrix}\tilde{M}&0\\B_x&M_{t,t}\end{pmatrix}.
\]
In other words,~$f$ is a local $\cL$-map.
However, only the lower left block of the matrix on the right hand side depends on~$x$.
Therefore, we may proceed as follows.
Define $\hat{M}:=\text{diag}(M_{1,1}^{-1},\ldots,M_{t,t}^{-1})$ and $\cC'':=\cC'\hat{M}$ along with
the map $f':\cC\longrightarrow\cC'',\ x\longmapsto (x_1,\ldots,x_t)M_x\hat{M}$.
Note that the matrix $M_x\hat{M}$ is of the form as in~\eqref{e-Mtriang}, but with identity matrices on the diagonal.
As a consequence, the map~$f'$ is a local $\cU(S)$-map, where~$S$ is the matrix ring consisting of all matrices of the form
\[
  \begin{pmatrix}D_{1,1}& & & \\ F_{2,1}&D_{2,2}& & \\ \vdots&\ddots&\ddots& \\ F_{t,1}&\cdots&F_{t,t-1}&D_{t,t}\end{pmatrix},
\]
where $F_{i,j}\in R^{n_i\times n_j}$ and $D_{i,i}$ are diagonal matrices in $R^{n_i\times n_i}$.
Since the diagonal matrices in $R^{n_i\times n_i}$ form a constructible ring, Theorem~\ref{T-triangular}(b) tells us that the
ring~$S$ is constructible.
Thus by Theorem~\ref{T-Constr}, the map~$f'$ is a global $\cU(S)$-map, and hence a global $\cL$-map.
But then the map~$f$, which differs from~$f'$ only by the global matrix~$\hat{M}\in\cL$, is a global~$\cL$-map as well.
This concludes the proof.
\end{proof}

\begin{exa}\label{E-HierPosExt}
In $\F_2^4$ we consider the poset weight with respect to the hierarchical poset $\bP:=\bH(4;2,2)$.
Thus $\wt_{\bP}(x_1,x_2,x_3,x_4)=2+\wtH(x_3,x_4)$ if $(x_3,x_4)\neq0$ and $\wt_{\bP}(x_1,x_2,x_3,x_4)=\wtH(x_1,x_2)$ otherwise.
Let $\cC=\inner{0000,\,1010,\,0111,\,1101}$ and $\cC'=\inner{0000,\,1110,\,1111,\,0001}$.
Then the map~$f:\cC\longrightarrow\cC'$ that maps the vectors of~$\cC$ to those of~$\cC'$ in the given order is linear and
$\wt_{\bP}$-preserving.
With a  routine computation one can verify that~$f$ can be extended to a global $\wt_{\bP}$-isomorphism in exactly two ways, given by the matrices
\[
   M_1:=\begin{pmatrix}0&1&0&0\\1&0&0&0\\1&0&1&0\\1&1&0&1\end{pmatrix},\
   M_2:=\begin{pmatrix}1&0&0&0\\0&1&0&0\\0&1&1&0\\1&1&0&1\end{pmatrix}.
\]
Note that in this case, the code $\tilde{\cC}=\{(x_1,x_2,x_3,x_4)\in\cC\mid x_3=0=x_4\}$ is trivial, and the matrix~$\tilde{M}$
in~\eqref{e-Mtilde}, which is in ${\rm Mon}(2,\F_2)$, may be chosen in two different ways.
\end{exa}

Our final result may be regarded as the converse of Theorem~\ref{T-HierPosMacWExt}.
\begin{theo}\label{T-NonHierPoset}
Let $\bP=([n],\leq)$ be a poset such that the MacWilliams extension theorem holds for $\wt_{\bP}$-preserving isomorphisms.
Then~$\bP$ is hierarchical.
More precisely, if~$\bP$ is not hierarchical then there exists a pair of codes $\cC,\,\cC'\subseteq R^n$ and a $\wt_{\bP}$-isometry
$f:\cC\longrightarrow\cC'$ that cannot be extended to a $\wt_{\bP}$-isometry on~$R^n$.
\end{theo}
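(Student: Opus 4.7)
The plan is to prove the contrapositive: if $\bP$ is not hierarchical, then I exhibit a $\wt_{\bP}$-preserving isomorphism that does not extend. The central combinatorial step is to show that non-hierarchicality forces the existence of a \emph{balanced witness}: two incomparable elements $a, b \in [n]$ with $|\ideal{\{a\}}| = |\ideal{\{b\}}|$, together with an element $c$ satisfying $a < c$ and $b$ incomparable to $c$. I would establish this by contradiction: if no such configuration exists, group $[n]$ into height classes $\{x \in [n] : |\ideal{\{x\}}| = h\}$ and observe that the hypothesis forces any two distinct elements of the same class to have identical up-sets and down-sets. An induction on the height then shows each class is an antichain with a uniform relation to every other class, and a counting argument (two distinct classes lying immediately above a common class would necessarily collide in height and therefore coincide) proves that these classes are linearly ordered by comparability; this is precisely the definition of a hierarchical poset, contradicting the assumption.

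Given a balanced witness $(a, b, c)$, I would take $\cC := Re_a$ and $\cC' := Re_b$ in $R^n$ and define $f: \cC \to \cC'$ by $\alpha e_a \mapsto \alpha e_b$. Since $|\ideal{\{a\}}| = |\ideal{\{b\}}|$, $f$ is $R$-linear and $\wt_{\bP}$-preserving. To rule out any extension, suppose $\hat{f}: R^n \to R^n$ is a $\wt_{\bP}$-isometry with $\hat{f}(e_a) = e_b$, and set $v := \hat{f}(e_c)$, which has weight $|\ideal{\{c\}}|$. The equality $\wt_{\bP}(e_a + e_c) = |\ideal{\{c\}}|$ (which holds because $a < c$) transports under $\hat{f}$ to $\wt_{\bP}(e_b + v) = \wt_{\bP}(v)$, so adding $e_b$ does not enlarge $\ideal{\supp(v)}$. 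This forces $b \in \ideal{\supp(v)}$, hence some $y \in \supp(v)$ satisfies $b \leq y$, placing the support of $v$ strictly above $b$. Combining this with the symmetric conclusion obtained from $\hat{f}^{-1}$, and iterating the argument on the images $\hat{f}(e_d)$ for $d$ running through $\ideal{\{c\}}$, I would reach an incompatibility that reflects the original asymmetry $c \in \mathrm{up}(a) \setminus \mathrm{up}(b)$: no assignment of weight-preserving images for the basis vectors in $\ideal{\{c\}}$ can simultaneously satisfy all the constraints imposed by $\hat{f}(e_a) = e_b$.

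The main obstacle is controlling the zero-divisors in a general Frobenius ring $R$: the $b$-coordinate of $e_b + v$ can vanish if $v_b = -1$, and scalar multiplication may annihilate support positions of $v$, so the weight constraints above branch into several subcases. To keep the analysis tractable, I would first reduce to a \emph{minimal} witness by replacing $c$ with a direct successor of $a$ on any chain from $a$ to $c$ (the witness conditions propagate downward along such chains), making $\ideal{\{c\}} \setminus \ideal{\{a\}}$ as small as possible. Even so, the technical core is a case analysis over the possible ideals of size $|\ideal{\{c\}}|$ that can equal $\ideal{\supp(v)}$; here I would use the double annihilator property (Remark~\ref{R-FrobAnn}) together with Lemma~\ref{L-CyclicMod} to pin down the admissible supports and show that each case is ultimately incompatible with the weight identity coming from $\wt_{\bP}(e_c + e_y) = \wt_{\bP}(e_y)$ applied to a carefully chosen $y$ above $b$. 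Making this uniform across all finite Frobenius rings is the principal technical hurdle.
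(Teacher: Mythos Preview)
Your construction has a genuine gap: the map $f:\alpha e_a\mapsto\alpha e_b$ can extend to a global $\wt_{\bP}$-isometry even when $(a,b,c)$ is a balanced witness in your sense. The simplest non-hierarchical poset already exhibits this. Take $[4]$ with $1<3$ and $2<4$ (two disjoint chains of length two). Then $a=1,\ b=2,\ c=3$ is a balanced witness: $|\ideal{1}|=|\ideal{2}|=1$, the elements $1,2$ are incomparable, $1<3$, and $2$ is incomparable to~$3$. Yet the permutation $(1\,2)(3\,4)$ is an order automorphism of the poset, and the induced coordinate permutation on $R^4$ is a $\wt_{\bP}$-isometry sending $e_1$ to $e_2$. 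So your non-extendability argument cannot succeed here; the ``asymmetry $c\in\mathrm{up}(a)\setminus\mathrm{up}(b)$'' is matched by the symmetric fact $4\in\mathrm{up}(b)\setminus\mathrm{up}(a)$, and nothing in your hypotheses rules this out.

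The paper's proof avoids this by choosing the two code generators at \emph{different} levels of the poset. Using the iterated minimal-element layers $\Gamma^{(1)},\Gamma^{(2)},\ldots$, one locates $\alpha\in\Gamma^{(\ell)}$ and $\beta\in\Gamma^{(\ell+1)}$ with $\alpha\not<\beta$, sets $B'=\{\alpha\}\cup\{i\in\Gamma^{(\ell)}\mid i<\beta\}$, and maps $\hat{e}:=\sum_{i\in B'}e_i$ to $e_\beta$. The point is that $\supp(\hat{e})\subseteq\Gamma^{(\ell)}$ while $\beta\in\Gamma^{(\ell+1)}$, and one can show that any global isometry must preserve the set of vectors whose support meets $\Gamma^{(\ell)}$ but nothing higher; this structural invariant is what blocks the extension. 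Your choice of two basis vectors at the \emph{same} height carries no such invariant, so a poset automorphism can always interchange them whenever one exists. To repair your approach you would need a much stronger witness that genuinely breaks all poset symmetries between $a$ and $b$, and it is not clear such a witness exists in every non-hierarchical poset.
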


For the proof we will make use of $\min(\bP)$, which denotes the set of all minimal elements of~$\bP$.
Moreover, for a subset~$S$ of~$[n]$, we denote by $\bP\backslash S$ the poset $([n]\backslash S,\leq)$.

\begin{proof}
Let~$\bP$ be a non-hierarchical poset.
Consider the level sets
\[
  \Gamma^{(1)}:=\min(\bP),\ \Gamma^{(2)}=\min(\bP\backslash\Gamma^{(1)}),\ \Gamma^{(3)}=\min\big(\bP\backslash(\Gamma^{(1)}\cup\Gamma^{(2)})\big),\ldots.
\]
Then $[n]=\bigcup_{\ell=1}^L\hspace*{-2.1em}\raisebox{.5ex}{$\cdot$}\hspace*{1.8em} \Gamma^{(\ell)}$ for some $L\in\N$.
Moreover, elements in the same level set are incomparable.
If~$\bP$ is hierarchical, then the level sets~$\Gamma^{(\ell)}$ are exactly the sets~$\Gamma_\ell$ of Definition~\ref{D-hierposet}.
In that case we have for any $i,j\in[n]$ that $i<j$ if and only if there exist $\ell<m$ (w.r.t.\ the natural order in~$\N$)
such that $i\in\Gamma^{(\ell)}$ and $j\in\Gamma^{(m)}$.

Thus, since~$\bP$ is not hierarchical, there exists a minimal~$\ell$ and elements $\alpha\in\Gamma^{(\ell)}$ and $\beta\in\Gamma^{(\ell+1)}$ such
that $\alpha\not<\beta$.
Put $\tilde{\Gamma}:=\bigcup_{r=1}^{\ell-1} \Gamma^{(r)}$.
Then minimality of~$\ell$ implies
\begin{equation}\label{e-jGamma}
  j\in\tilde{\Gamma},\ i\in\Gamma^{(\ell)}\Longrightarrow j<i.
\end{equation}
Define $B:=\{i\in\Gamma^{(\ell)}\mid i<\beta\}$ and $B':=B\cup\{\alpha\}$. Note that $\alpha\not\in B$.
The definition of the level sets~$\Gamma^{(r)}$ yields
\begin{equation}\label{e-supps}
  \ideal{\beta}=\{\beta\}\cup\{j\mid j\leq i\text{ for some }i\in B\}=\{\beta\}\cup B\cup \tilde{\Gamma}\ \text{ and }\
  \ideal{B'}=B'\cup \tilde{\Gamma}.
\end{equation}
Now we are ready to define suitable isometric codes in~$R^n$.
Denote by $e_1,\ldots,e_n$ the standard basis vectors of~$R^n$ and put $\hat{e}:=\sum_{i\in B'}e_i$.
Then $\wt_{\bP}(e_{\beta})=\wt_{\bP}(\hat{e})=|B|+|\tilde{\Gamma}|+1$ due to~\eqref{e-supps}.
Let~$\cC,\,\cC'\subseteq R^n$ be the one-dimensional codes generated by~$\hat{e}$ and~$e_{\beta}$, respectively, and define
the linear map~$f:\cC\longrightarrow\cC'$  by $f(\hat{e})=e_{\beta}$.
Then~$f$ is a $\wt_{\bP}$-isometry between~$\cC$ and~$\cC'$.

It remains to show that~$f$ cannot be extended to a $\wt_{\bP}$-isometry on~$R^n$.
In order to do so assume to the contrary that there exists such an extension, say~$\hat{f}$.
We argue as follows.
Let $x\in R^n$. If $\supp(x)$ contains an element $i\in[n]\,\backslash\,(\cup_{r=1}^\ell\Gamma^{(r)})$, then
the ideal $\ideal{\supp(x)}$ also contains an element in~$\Gamma^{(\ell)}$ due to the definition of the level sets.
Thus with the aid of~\eqref{e-jGamma} we obtain
\[
   \supp(x)\cap\Big([n]\,\backslash\,(\bigcup_{r=1}^\ell\Gamma^{(r)})\Big)\not=\emptyset\Longrightarrow
   \wt_{\bP}(x)\geq |\tilde{\Gamma}|+2.
\]
On the other hand, if $\supp(x)$ is in $\cup_{r=1}^\ell\Gamma^{(r)}$, then
\[
  \wt_{\bP}(x)= \Big|\bigcup_{r=1}^{m-1}\Gamma^{(r)}\Big|+\big|\supp(x)\cap\Gamma^{(m)}\big|, \text{ where $m\leq\ell$ is maximal such that }
  \supp(x)\cap\Gamma^{(m)}\not=\emptyset.
\]
All of this shows that the set of vectors with weight $\big|\bigcup_{r=1}^{m-1}\Gamma^{(r)}\big|+1,\,m\leq\ell$, is given by
$A_m:=\{\gamma e_i +v\mid \gamma\in R\backslash\{0\},i\in\Gamma^{(m)},\, \supp(v)\subseteq\cup_{r=1}^{m-1}\Gamma^{(r)}\}$.
As a consequence,~$\hat{f}(A_m)=A_m$ for all $m\leq\ell$.
Using that~$B'\subseteq\Gamma^{(\ell)}$, this in turn implies that
$\hat{f}(\hat{e})=\sum_{i\in B'}\hat{f}(e_i)=\sum_{i\in B''}\gamma_ie_i+v$ for some set
$B''\in\Gamma^{(\ell)}$, some $\gamma_i\in R\backslash\{0\}$ and $v\in R^n$ such that $\supp(v)\subseteq\tilde{\Gamma}$.
But this contradicts the definition of~$f$.
This concludes the proof.
\end{proof}

For $R=\F$ being a field, the non-existence of an extension~$\hat{f}$ can
also be shown with the aid~\cite[Cor.~1.3]{PFKH08}, where the group of isometries of~$(\F^n,\wt_{\bP})$ has
been derived.
In our specific situation, however, the above direct reasoning leads immediately to the desired result.

\section*{Acknowledgement}
We would like to thank Jay Wood for his valuable comments.
We are also grateful to him for encouraging us to treat the non-commutative case as well.

\bibliographystyle{abbrv}

\end{document}